\newcommand{\blind}{0}
\newcommand{\tikzmark}[1]{\tikz[overlay,remember picture] \node (#1) {};}
\newcommand{\DrawBox}[4][]{%
    \tikz[overlay,remember picture]{%
        \coordinate (TopLeft)     at ($(#2)+(-0.2em,0.9em)$);
        \coordinate (BottomRight) at ($(#3)+(0.2em,-0.3em)$);
        \path (TopLeft); \pgfgetlastxy{\XCoord}{\IgnoreCoord};
        \path (BottomRight); \pgfgetlastxy{\IgnoreCoord}{\YCoord};
        \coordinate (LabelPoint) at ($(\XCoord,\YCoord)!0.5!(BottomRight)$);
        \draw [red,#1] (TopLeft) rectangle (BottomRight);
        \node [below, #1, fill=none, fill opacity=1] at (LabelPoint) {#4};
    }
}
\newtheorem{Lemma}{Lemma}
\newtheorem{lemma}{Lemma}
\newtheorem{Proposition}{Proposition}
\def\R{\mathbb{R}}
\def\MBA{\mathbf{A}}
\def\MBD{\mathbf{D}}
\def\MBE{\mathbf{E}}
\def\MBF{\mathbf{F}}
\def\MBG{\mathbf{G}}
\def\MBH{\mathbf{H}}
\def\MBI{\mathbf{I}}
\def\MBJ{\mathbf{J}}
\def\MBM{\mathbf{M}}
\def\MBN{\mathbf{N}}
\def\MBQ{\mathbf{Q}}
\def\MBR{\mathbf{R}}
\def\MBS{\mathbf{S}}
\def\MBU{\mathbf{U}}
\def\MBV{\mathbf{V}}
\def\MBW{\mathbf{W}}
\def\MBX{\mathbf{X}}
\def\MBY{\mathbf{Y}}
\def\MBZ{\mathbf{Z}}
\def\MBId{\mathbf{Id}}
\def\MB0{\mathbf{0}}
\DeclareMathOperator*{\rank}{rank}
\DeclareMathOperator*{\minimize}{minimize}
\DeclareMathOperator*{\argmin}{argmin}
\newtheorem{remark}{Remark}
\date{}
\begin{document}

\def\spacingset#1{\renewcommand{\baselinestretch}%
{#1}\small\normalsize} \spacingset{1}


\if0\blind
{
  \title{\bf Double-matched matrix decomposition for multi-view data}
  \author{Dongbang Yuan\thanks{
    The authors gratefully acknowledge the support from the National Science Foundation grants DMS-1712943 and  DMS-2044823.}\hspace{.2cm}\\
    Department of Statistics, Texas A\&M University\\
    and \\
    Irina Gaynanova\\
    Department of Statistics, Texas A\&M University}
  \maketitle
} \fi

\if1\blind
{
  \bigskip
  \bigskip
  \bigskip
  \begin{center}
    {\LARGE\bf Title}
\end{center}
  \medskip
} \fi

\bigskip
\begin{abstract}
We consider the problem of extracting joint and individual signals from multi-view data, that is, data collected from different sources on matched samples. While existing methods for multi-view data decomposition explore single matching of data by samples, we focus on double-matched multi-view data (matched by both samples and source features). Our motivating example is the miRNA data collected from both primary tumor and normal tissues of the same subjects; the measurements from two tissues are thus matched both by subjects and by miRNAs. Our proposed double-matched matrix decomposition allows us to simultaneously extract joint and individual signals across subjects, as well as joint and individual signals across miRNAs. Our estimation approach takes advantage of double-matching by formulating a new type of optimization problem with explicit row space and column space constraints, for which we develop an efficient iterative algorithm. Numerical studies indicate that taking advantage of double-matching leads to superior signal estimation performance compared to existing multi-view data decomposition based on single-matching. We apply our method to miRNA data as well as data from the English Premier League soccer matches and find joint and individual multi-view signals that align with domain-specific knowledge.
\end{abstract}

\noindent%
{\it Keywords:} data integration, dimension reduction, matrix factorization, multi-block data, principal component analysis
\vfill

\newpage
\spacingset{1.5} 
\section{Introduction}
\label{sec:intro}

Multi-view data (collected on the same samples from multiple views or data sources) are increasingly common with advances in multi-omics and other data collection technologies. In matrix form, each view $d$ corresponds to a matrix $\MBX_d$ with $n$ rows for the matched samples, and $p_d$ columns for corresponding measurements. While typically the distinct views are only matched by samples, in some cases the views are double-matched: matched by both samples (matched rows) and view features (matched columns). A motivating example is the miRNA data from The Cancer Genome Atlas (TCGA) project collected from both primary tumor and normal tissues of the same subjects; the measurements from two tissues represent two views $\MBX_1,\ \MBX_2\in \R^{n \times p}$ that are matched both by $n$ subjects and $p$ miRNAs. Our goal is to extract common (across tissues) as well as individual (tissue-specific) signals from each view, where common/individual signals have two meanings: common/individual signals across subjects, and common/individual signals across miRNAs. 

Several methods have been proposed that allow to extract common (joint) structure from the multi-view data. Canonical correlation analysis \citep{CCA} seeks linear combinations of features from each view that have maximal correlation. Similarly, partial least squares (PLS) \citep{rosipal2005overview} maximizes the covariance, with OnPLS \citep{OnPLS}, multiple coinertia analysis \citep{mengMCIA} and inter-battery analysis \citep{IBFAviaPLS} considering extensions to more than two views. JIVE \citep{JIVE} decomposes each view into joint and individual signals, where joint signals are due to matched samples. CIFE \citep{CIFE} and AJIVE \citep{AJIVE} consider the same joint and individual decomposition as JIVE, however use a different estimation procedure. Multi-Omics Factor Analysis (MOFA) \citep{MOFA, MOFA+} disentangles common and individual information using group factor analysis with the sparsity structure. DISCO-SCA \citep{DISCO-SCA} uses simultaneous components model \citep{VanDeun:2009tu} with subsequent rotation. iNMF \citep{iNMF} is a non-negative matrix factorization extension of JIVE. SLIDE \citep{SLIDE} 
allows for partially-common structures when the number of views is larger than two.  

Despite the considerable developments in multi-view data decompositions that extract joint and individual signals, these methods (JIVE, CIFE, AJIVE, MOFA, DISCO-CCA, SLIDE, iNMF) are designed for single-matched multi-view data (matched by samples) rather than double-matched in our motivating example. Applying these methods to double-matched data will lead to extraction of joint signals only in one direction. Let $\MBX_1,\ \MBX_2\in \R^{n \times p}$ be data matrices corresponding to double-matched views, and let $\widehat \MBA_1,\ \widehat \MBA_2 \in \R^{n \times p}$ be estimated signal matrices obtained by applying one of the existing approaches (e.g. JIVE, CIFE, AJIVE, etc). Then $\widehat \MBA_1$ and $\widehat \MBA_2$ will have joint signal in their column spaces (corresponding to matched rows), but no joint signal in their row spaces (corresponding to matched columns). A naive approach to estimate joint signal in the row space is to apply the same method to transposed $\MBX_1^{\top}, \MBX_2^{\top}\in \R^{p \times n}$ leading to $\widetilde \MBA_1,\ \widetilde \MBA_2 \in \R^{p \times n}$ with the joint signal corresponding to matched $p$ features. However, \textit{there is no guarantee} that the estimated signals agree with each other, that is in general $\widetilde \MBA_d \neq \widehat \MBA_d^{\top}$, which we confirm in our simulation studies (Section~\ref{s:Signalid}). Furthermore, some signal rank estimations methods, e.g. permutation approach in \citet{JIVE} or bi-cross-validation approach in \citet{SLIDE}, can lead to different estimated ranks
for the same $\MBX_1$, $\MBX_2$ depending on whether the matching by rows or the matching by columns is used (Section~\ref{s:Rankest}). 

Several methods consider the problem of extracting signal from double-matched multi-view data. Population value decomposition \citep{PVD} is an extension of singular value decomposition to double-matched data, however it only allows to extract joint signal, and does not extract individual signal. Similarly, 3-way PCA \citep{wold3wayPCA} and tensor decompositions \citep{zhouTensor} extract joint signals, but not individual. Linked matrix factorization \citep{LMF} and bidimensional integrative factorization \citep{BIDIFAC} are designed for the case where each pair of views is either matched by rows or by columns, but not simultaneously by both as in our motivating example. 

In this work, we propose the double-matched matrix decomposition (DMMD) for multi-view data that allows to extract joint and individual signals in both row and column directions simultaneously, in contrast to existing approaches. First, we prove that DMMD decomposition exists, and characterize conditions for its uniqueness. Second, we propose an estimation approach that takes advantage of the fact that the signal matrices must coincide whether the joint and individual signals are considered in the row direction, or in the column direction. We pose this estimation as a new type of optimization problem with explicit row space and column space constraints, for which we develop an efficient iterative algorithm. Third, we show that DMMD has superior signal estimation performance compared to existing methods for single-matched data even when underlying true signal ranks are known (Section~\ref{s:Signalid}), thus confirming the advantage of taking into account double-matched structure in estimation.

The rest of the paper is organized as follows. In Section~\ref{s:method}, we formulate the proposed double-matched matrix decomposition, and derive an algorithm for its estimation. In Section~\ref{sec:simulation}, we compare DMMD to existing methods on simulated data. In Section~\ref{sec:application}, we illustrate DMMD on the double-matched miRNA data from TCGA, and double-matched English Premier league soccer match data. In Section~\ref{sec:discus} we conclude with discussion. 

\section{Method}
\label{s:method}

\subsection{Notation}
For a matrix $\MBA \in \mathbb{R}^{n \times p}$,  we let $\MBA^T$ be its transpose, $\mathcal{C}(\MBA)$ be its column space and $\mathcal{R}(\MBA)$ be its row space. We use $\|\MBA\|_F = \sqrt{\sum_{i=1}^{n}{\sum_{j=1}^{p}{a^2_{ij}}}}$ to denote its Frobenius norm. For two matrices $\MBA_1 \in \mathbb{R}^{n \times p_1}$ and $\MBA_2 \in \mathbb{R}^{n \times p_2}$, we write $[\MBA_1,\MBA_2] \in \mathbb{R}^{n \times (p_1 + p_2)}$ to denote the column-wise concatenation. We say $\mathcal{C}(\MBA_1)$ is orthogonal to $\mathcal{C}(\MBA_2)$ if for any vector $\boldsymbol{x}_1 \in \mathcal{C}(\MBA_1)$ and any vector $\boldsymbol{x}_2 \in \mathcal{C}(\MBA_2)$, it holds that $\boldsymbol{x}_1 \perp \boldsymbol{x}_2$. We use $\MBId$ to denote an identity matrix. We use $\boldsymbol{e}_i = (0,\cdots,0,1,0,\cdots,0)^T$ with only the $i$-th element being one to denote the standard basis vector. We use script-style letter $\mathcal{U}$ to denote a vector space formed by the matrix $\MBU$ with columns corresponding to orthonormal basis vectors, $\mathcal{C}(\MBU) = \mathcal{U}$. 

\subsection{Model}
\label{sec:model}
We consider two double-matched data matrices $\MBX_1\in \mathbb{R}^{n \times p}$ and $\MBX_2 \in \R^{n \times p}$. We assume additive decomposition $\MBX_k = \MBA_k + \MBE_k$, $k = 1,2$, where $\MBA_k$ is the signal matrix and $\MBE_k$ is the noise matrix. We further assume each signal matrix $\MBA_k$ is low-rank, which is common in the literature \citep{LowRank}. Our goal is to estimate $\MBA_k$ from $\MBX_k$, and identify parts of the signal that are joint/individual across row dimension $n$ (samples) as well as parts of the signal that are joint/individual across column dimension $p$ miRNAs).

Existing methods for estimation of $\MBA_k$ in single-matched multi-view data \citep{JIVE, AJIVE,CIFE, iNMF, SLIDE} are based on separating the signal matrix into joint and individual parts with respect to the matched dimension, that is $\MBA_k = \MBJ_k + \MBI_k$. For example, in JIVE model \citep{JIVE, AJIVE, CIFE}, the joint matrices $\MBJ_1$ and $\MBJ_2$ share the same column space, i.e., $\mathcal{C}(\MBJ_1) = \mathcal{C}(\MBJ_2) = \mathcal{C}(\MBJ)$. The individual matrices $\MBI_1$ and $\MBI_2$ are orthogonal to the joint space and have zero intersection of their respective column spaces, i.e., $\mathcal{C}(\MBJ) \perp \mathcal{C}(\MBI_k), \cap_{j=1}^2{\mathcal{C}(\MBI_k)} = \{\MB0\}$. Furthermore, given the signal matrices $\MBA_k$, the JIVE decomposition is unique \citep{JIVE, AJIVE}.

Our proposal is based on the observation that for double-matched signal matrices $\MBA_k$, the JIVE decomposition must hold with respect to both dimensions (row and column) simultaneously.
We formalize this observation in the following lemma, which is a generalization of Lemma~1 from \citet{AJIVE}.
\begin{Lemma}
\label{l:model}
\textit{Given two signal matrices $\MBA_1, \MBA_2 \in \mathbb{R}^{n \times p}$ , there are unique sets of matrices $\{\MBJ_{c1},\MBJ_{c2}\}$, $\{\MBI_{c1},\MBI_{c2}\}$, $\{\MBJ_{r1},\MBJ_{r2}\}$ and $\{\MBI_{r1},\MBI_{r2}\}$ such that}\\
(1)\quad $\MBA_k = \MBJ_{ck} + \MBI_{ck} = \MBJ_{rk} + \MBI_{rk},\quad k = 1,2$\\
(2)\quad $\mathcal{C}(\MBJ_{ck}) = \mathcal{M} \subset\mathcal{C}(\MBA_k) , \quad k = 1,2$\\
(3)\quad $\mathcal{R}(\MBJ_{rk}) = \mathcal{N} \subset\mathcal{R}(\MBA_k) , \quad k = 1,2$\\
(4)\quad $\mathcal{M} \perp \mathcal{C}(\MBI_{ck}), \mathcal{N} \perp \mathcal{R}(\MBI_{rk}), \quad k = 1,2$\\
(5)\quad $\mathcal{C}(\MBI_{c1})\cap{\mathcal{C}(\MBI_{c2})} = \{\MB0\}, \mathcal{R}(\MBI_{r1})\cap{\mathcal{R}(\MBI_{r2})} = \{\MB0\}$
\end{Lemma}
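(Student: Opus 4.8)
The plan is to reduce the claim to two independent applications of Lemma~1 of \citet{AJIVE} — one in the column direction, one in the row direction — and then argue that the two decompositions of $\MBA_k$ are compatible because each is intrinsic to $\MBA_k$. First I would recall the content of the AJIVE lemma: given two matrices sharing a common dimension, there is a \emph{unique} decomposition of each into a ``joint'' part whose relevant subspace equals a common space $\mathcal{M}$ and an ``individual'' part orthogonal to $\mathcal{M}$ with trivial pairwise intersection, and moreover $\mathcal{M}$ is determined as the intersection $\mathcal{C}(\MBA_1)\cap\mathcal{C}(\MBA_2)$ (the principal-angle-zero subspace). Applying this to the pair $(\MBA_1,\MBA_2)$ viewed as matrices with matched rows gives $\mathcal{M} = \mathcal{C}(\MBA_1)\cap\mathcal{C}(\MBA_2)$ together with the unique $\{\MBJ_{c1},\MBJ_{c2}\},\{\MBI_{c1},\MBI_{c2}\}$ satisfying (1), (2), (4-left), (5-left). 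Applying the same lemma to the pair $(\MBA_1^{\top},\MBA_2^{\top})$ — i.e. treating the matched columns as the common dimension — gives $\mathcal{N} = \mathcal{R}(\MBA_1)\cap\mathcal{R}(\MBA_2)$ and the unique $\{\MBJ_{r1},\MBJ_{r2}\},\{\MBI_{r1},\MBI_{r2}\}$ satisfying (1), (3), (4-right), (5-right), after transposing back.

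The only subtlety beyond citing AJIVE twice is that the lemma must be stated for signal matrices that need not share \emph{both} dimensions in a nontrivial way; here $\MBA_1,\MBA_2$ are both $n\times p$, so the column-direction application uses the shared $n$ and the row-direction application uses the shared $p$, and there is no interaction between the two. Concretely, I would note that the column decomposition depends only on the column spaces $\mathcal{C}(\MBA_1),\mathcal{C}(\MBA_2)$ and the matrices $\MBA_1,\MBA_2$ themselves — it does not reference any structure on the columns — so it is well-defined regardless of the parallel row construction, and symmetrically for the row decomposition. Existence is then immediate: take the four pairs produced by the two invocations. Uniqueness is also immediate within each direction from the AJIVE uniqueness statement; I would remark that the row-direction objects and the column-direction objects are uniqueness-constrained separately (conditions (2),(4-left),(5-left) pin down the $c$-objects; conditions (3),(4-right),(5-right) pin down the $r$-objects), so the full tuple is unique.

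The one place that needs a genuine (if short) argument is verifying that AJIVE's Lemma~1 indeed yields the ``$\mathcal{M}\subset\mathcal{C}(\MBA_k)$'' and ``$\mathcal{N}\subset\mathcal{R}(\MBA_k)$'' containments and that the joint part is realizable inside $\MBA_k$ — i.e. that $\MBJ_{ck}$ can be taken to be the orthogonal projection of $\MBA_k$ onto $\mathcal{M}$ (and $\MBI_{ck} = \MBA_k - \MBJ_{ck}$), so that $\mathcal{C}(\MBJ_{ck}) = \mathcal{M}$ exactly (not merely $\subseteq$), which requires $\mathcal{M}\subseteq\mathcal{C}(\MBA_k)$. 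This is exactly the content of AJIVE's construction, so I would invoke it directly rather than reprove it; if a self-contained argument is wanted, define $\MBM \in \R^{n\times r}$ with orthonormal columns spanning $\mathcal{C}(\MBA_1)\cap\mathcal{C}(\MBA_2)$, set $\MBJ_{ck} = \MBM\MBM^{\top}\MBA_k$ and $\MBI_{ck} = (\MBId - \MBM\MBM^{\top})\MBA_k$, and check (1)-(5) in the column direction by elementary linear algebra (the intersection property (5-left) is the only nontrivial check and follows because any vector in $\mathcal{C}(\MBI_{c1})\cap\mathcal{C}(\MBI_{c2})$ lies in $\mathcal{C}(\MBA_1)\cap\mathcal{C}(\MBA_2) = \mathcal{M}$ yet is orthogonal to $\mathcal{M}$, hence is zero). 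The row direction is verbatim the same argument applied to transposes. I expect no real obstacle; the main thing to get right is bookkeeping — keeping the $c$/$r$ and the $1$/$2$ indices straight and making explicit that the two decompositions of the same $\MBA_k$ need not, and in general do not, coincide with each other, which is precisely why the lemma asserts four separate pairs rather than two.
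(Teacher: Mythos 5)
Your proposal is correct and follows essentially the same route as the paper: the paper also constructs the column decomposition by projecting $\MBA_k$ onto $\mathcal{M}=\mathcal{C}(\MBA_1)\cap\mathcal{C}(\MBA_2)$, verifies (2), (4), (5) exactly as you sketch (including the observation that $\mathcal{C}(\MBI_{ck})\subseteq\mathcal{C}(\MBA_k)$ drives the trivial-intersection check), obtains the row decomposition by transposition, and establishes uniqueness as in AJIVE's Lemma~1 by first showing the conditions force $\mathcal{M}$ to equal the intersection. The only difference is that the paper writes out the uniqueness argument in full (and for general $K\geq 2$) rather than citing AJIVE for it.
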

Here $\mathcal{M}$ represents the joint column structure (common signal for $n$ samples) and $\mathcal{N}$ represents joint row structure (common signal for $p$ features) of the signal matrices $\{\MBA_1,\MBA_2\}$. Similarly, $\MBI_{ck}$ and $\MBI_{rk}$ represent the individual column signals and individual row signals, respectively. Lemma \ref{l:model} applies to double-matched matrices $\{\MBA_1,\cdots,\MBA_K\}$ from more than two views $ (K > 2)$; we only present case $K=2$ as it is sufficient for motivating datasets.

In light of Lemma~\ref{l:model}, we consider the following Double-Matched Matrix Decomposition (DMMD) for observed $\MBX_1\in \mathbb{R}^{n \times p}$ and $\MBX_2 \in \R^{n \times p}$
\begin{equation}\label{eq:dmmd}
    \MBX_k = \underbrace{\MBJ_{ck} + \MBI_{ck}}_{\MBA_k}+ \MBE_k = \underbrace{\MBJ_{rk} + \MBI_{rk}}_{\MBA_k} + \MBE_k, \quad k=1,2;
\end{equation}
where $\MBJ_{ck}$, $\MBJ_{rk}$, $\MBI_{ck}$, $\MBI_{rk}$ satisfy the above conditions. The main novelty of DMMD is that the signal $\MBA_k$ is constrained to be the same whether it is decomposed in column or in row direction. In what follows, we use $r_k = \rank(\MBA_k)$, $k=1,2$, to denote the total rank of each signal matrix; $\MBM$ and $\MBN$ to denote the matrices that contain basis vectors of $\mathcal{M}$ and $\mathcal{N}$ column-wise, respectively; $r_c = \rank(\MBM)$ to denote the rank of joint column structure, and $r_r = \rank(\MBN)$ to denote the rank of joint row structure. Figures \ref{fig:demo_col} and \ref{fig:demo_row} show an example of the decomposition~\eqref{eq:dmmd} on a simulated data. 

\begin{figure}[!t]
\centering
\begin{subfigure}{0.5\textwidth}
\includegraphics[width=1\linewidth]{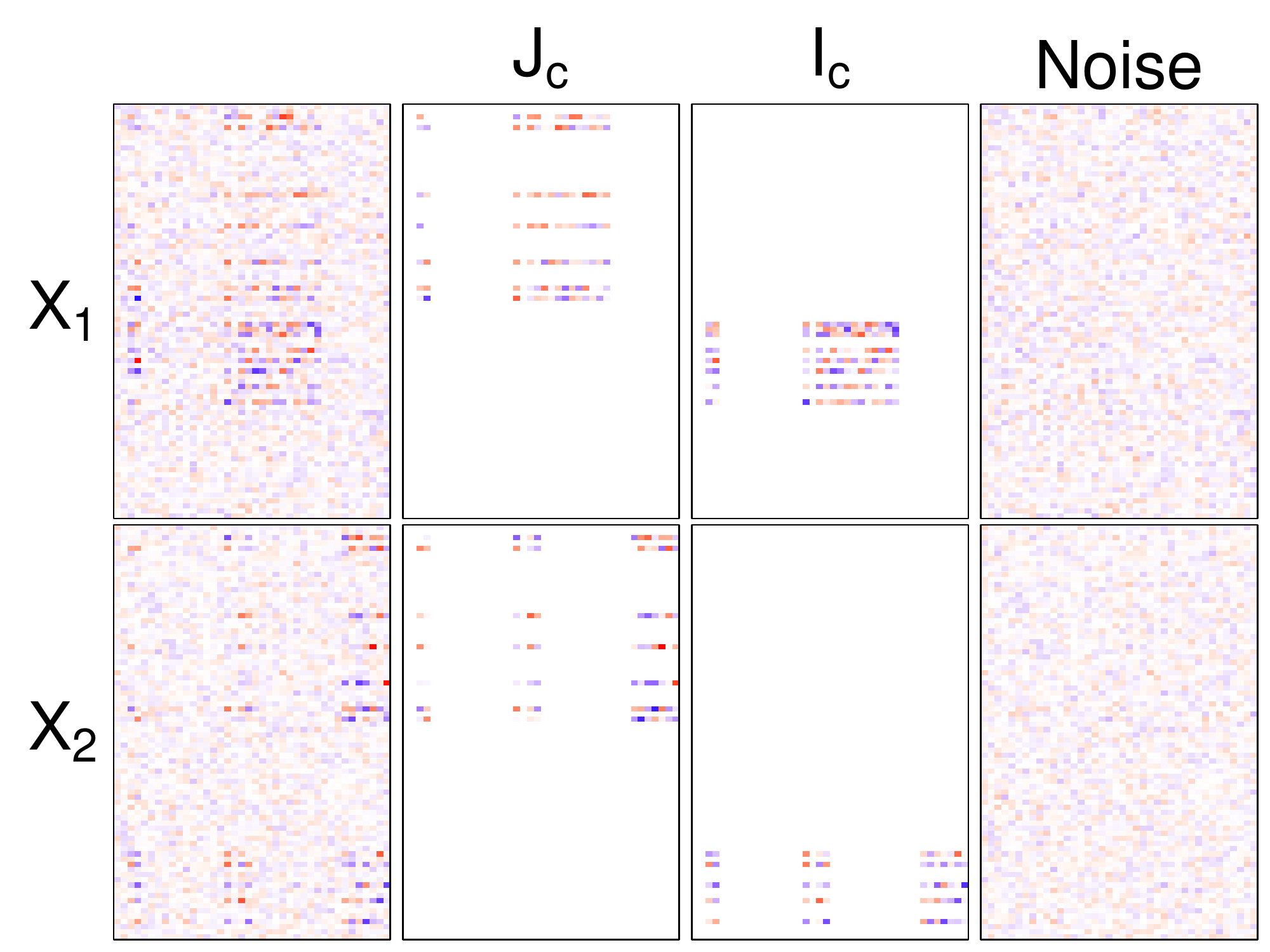}
\caption{Column decomposition}
\label{fig:demo_col}
\end{subfigure}%
\begin{subfigure}{0.5\textwidth}
\includegraphics[width=1\linewidth]{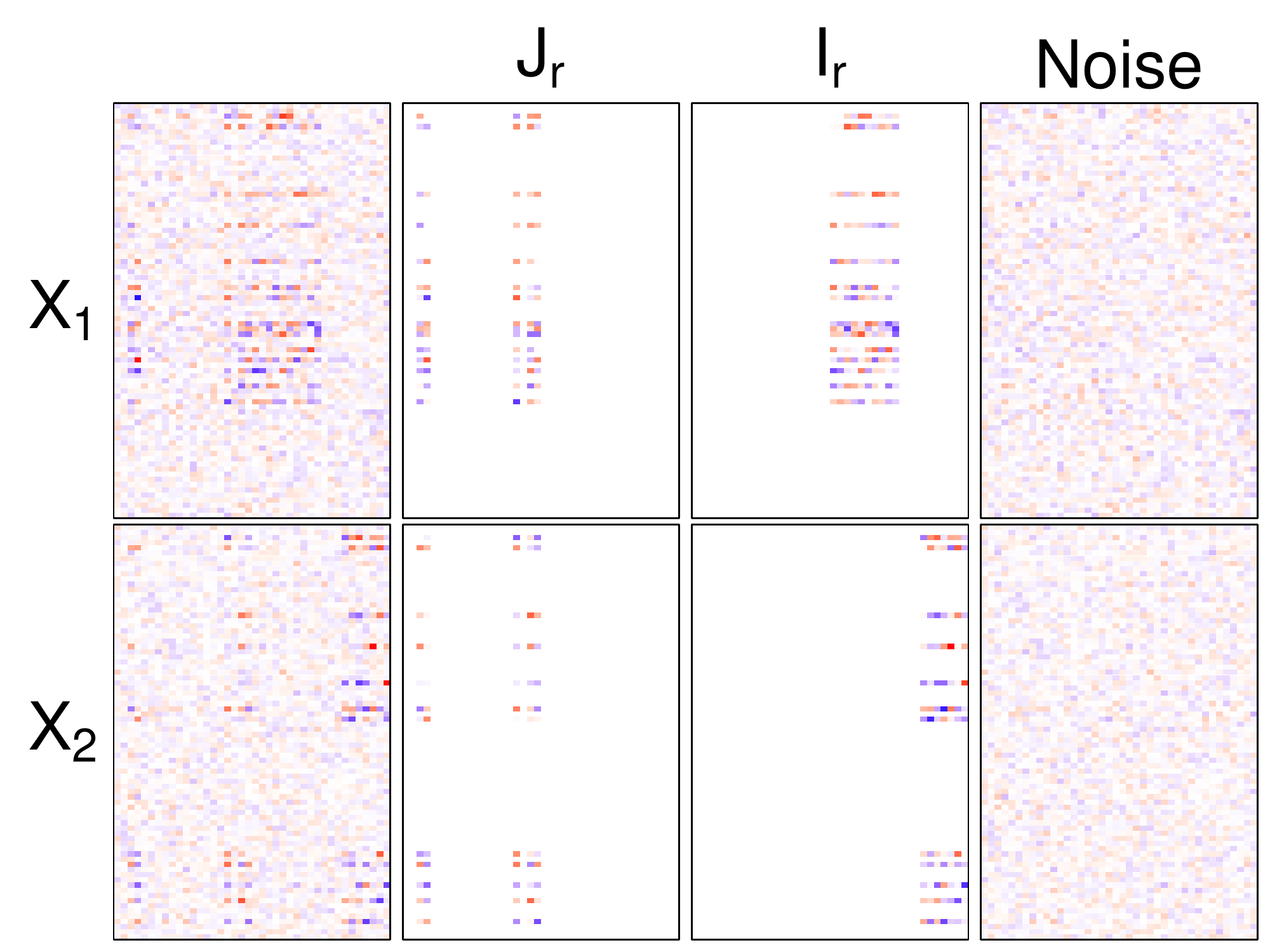}
\caption{Row decomposition}
\label{fig:demo_row}
\end{subfigure}
\caption{Two double-matched matrices are decomposed into joint structure, individual structures and noise in both row and column directions according to DMMD model~\eqref{eq:dmmd}, here $n=80$, $p=40$, $\rank(\MBA_1)=15$, $\rank(\MBA_2)=12$,  $\rank(\MBM) =7$ and $\rank(\MBN) = 5$.}
\end{figure}

\subsection{Estimation}

To fit model~\eqref{eq:dmmd}, we propose the following estimation approach:
\begin{description}
  \item[\textbf{Step 1: Estimate proxy signals.}] Estimate the total ranks of $\MBA_1$ and $\MBA_2$, and construct proxy signal matrices $\MBZ_1$ and $\MBZ_2$ from $\MBX_1$ and $\MBX_2$ given those ranks.  
  \item[\textbf{Step 2: Estimate joint structure.}] Use proxy signals $\MBZ_1$ and $\MBZ_2$ to estimate basis vectors of $\mathcal{M}$ (joint column structure) and $\mathcal{N}$ (joint row structure). 
  \item[\textbf{Step 3: Estimate signals with given joint structure.}] Fit model~\eqref{eq:dmmd} conditionally on the estimated $\mathcal{M}$, $\mathcal{N}$ from step 2 and estimated total ranks from step 1. 
\end{description}
Figure \ref{fig:DMMD_demo} shows the flow chart summarizing DMMD estimation steps. In Supplement S5, we describe a variation of Step 3 that allows for iterative updates of initial $\mathcal{M}$, $\mathcal{N}$ from Step 2 leading to iterative DMMD (DMMD-i). Numerically, the two approaches are very similar, but DMMD-i has a significantly higher computational cost (Section~\ref{sec:simulation}).

\begin{figure}[!t]
\centering
\includegraphics[width=1\linewidth]{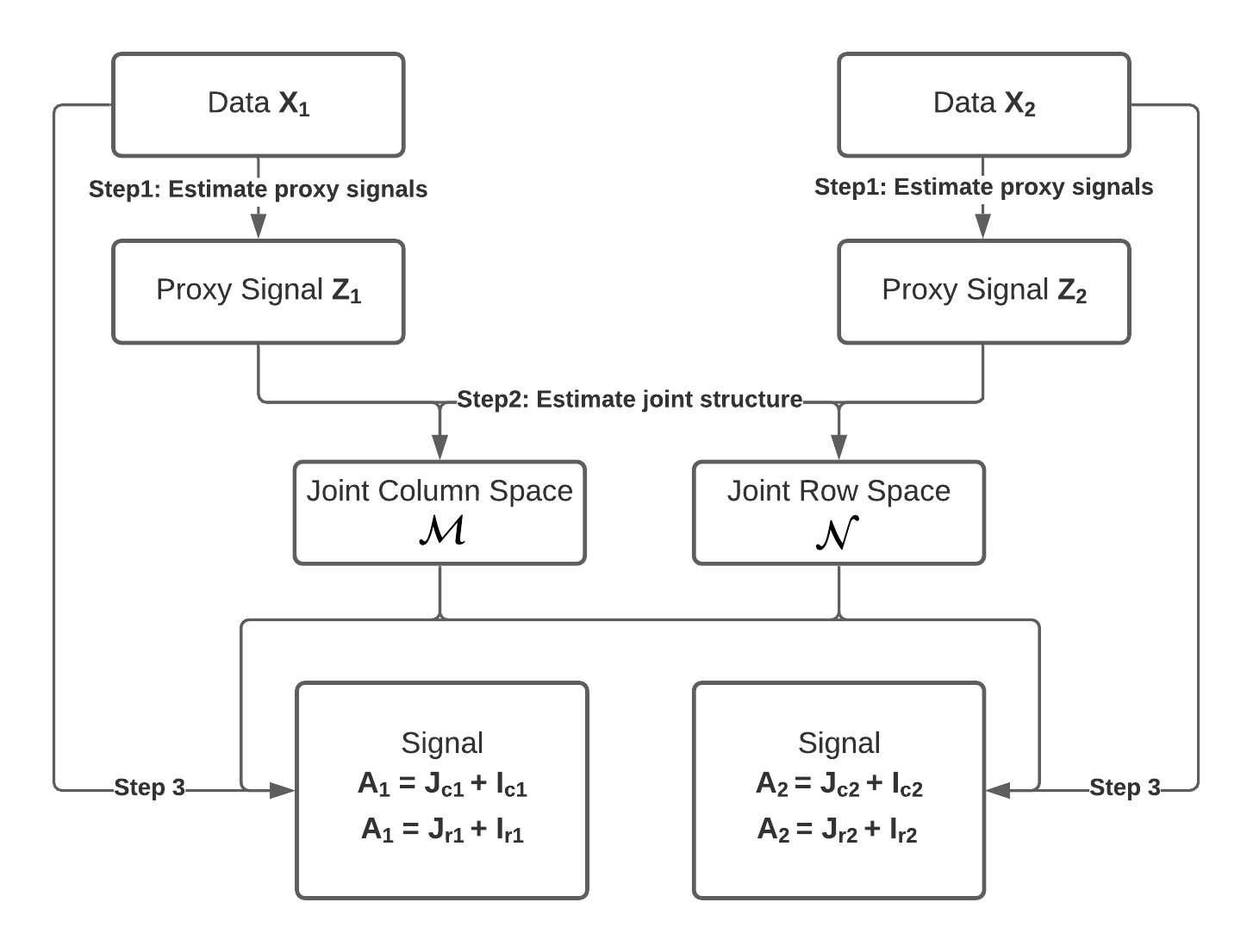}
\caption{Summary of the proposed estimation approach for fitting DMMD model~\eqref{eq:dmmd}.}
\label{fig:DMMD_demo}
\end{figure}

\subsubsection{Estimation of proxy signals}
\label{s:pl}
We estimate proxy low-rank signal matrices $\MBZ_k$ from observed $\MBX_k$ using low-rank singular value decomposition of $\MBX_k$ \citep{jha2010denoising}. We propose to use the profile likelihood approach \citep{ProfileLik2006Zhu} to estimate the total rank of the signal. 

Let $d_1 \geq d_2 \geq \cdots \geq d_m$ be the ordered singular values of matrix $\MBX_1$, where $m = \min(n,p)$. Given a fixed $q$ with $1\leq q \leq m$, define sets $D_1 = \{d_1, d_2,\cdots, d_q\}$ and $D_2 = \{d_{q+1},\cdots, d_m\}$. \citet{ProfileLik2006Zhu} assume that the elements of $D_1$ and $D_2$ come from the normal distributions $N(\mu_1,\sigma^2)$ and $N(\mu_2,\sigma^2)$, respectively. Let $f(\cdot;\mu,\sigma^2)$ be the probability density function of $N(\mu,\sigma^2)$. Then the log-likelihood is $$l(q,\mu_1,\mu_2,\sigma^2) = \sum_{i=1}^{q}{\log f(d_i;\mu_1,\sigma^2)} + \sum_{j=q+1}^{m}{\log f(d_j;\mu_2,\sigma^2)}.$$
Given $q$,  the MLEs are $\widehat{\mu}_1 = \sum_{i=1}^q{d_i}/q$, $\widehat{\mu}_2 = \sum_{j=q+1}^m{d_j}/(m-q)$ and $\widehat{\sigma}^2 = [(q-1)s_1^2 + (m-q-1)s_2^2]/m$, where $s_1^2$ and $s_2^2$ are the sample variances of elements in $D_1$ and $D_2$, respectively. We estimate the rank of signal $\MBA_1$ by maximizing the profile likelihood over $q$ and set $r_1 = \widehat q$, where $\widehat q$ is the maximizer. The same approach is used for $\MBX_2$. Given $r_k$, we obtain proxy signal matrix $\MBZ_k$ by corresponding rank-$r_k$ SVD  of observed $\MBX_k$.

\begin{remark}
We use profile likelihood approach for rank estimation as it is fast and performs well in our simulations, however an alternative rank estimation approach can be used in this step. Some examples are permutation method \citep{JIVE}, edge distribution method \citep{onatski2010determining} and Bi-Cross-Validation method \citep{BCV}. We compare these approaches in simulations in Section~\ref{s:Rankest}.
\end{remark}

\subsubsection{Estimation of joint structure}
\label{s:joint}
In this section we estimate the joint column structure $\mathcal{M}$ and the joint row structure $\mathcal{N}$ based on the proxy signals $\MBZ_1$ and $\MBZ_2$. From the proof of Lemma~\ref{l:model}, the joint column structure $\mathcal{M} = \mathcal{C}(\MBA_1)\cap{\mathcal{C}(\MBA_2)}$ and the joint row structure $\mathcal{N} = \mathcal{R}(\MBA_1)\cap{\mathcal{R}(\MBA_2)}$. Thus, a naive way to estimate $\mathcal{M}$ is to consider the intersection of column spaces of proxy signals $\MBZ_1$ and $\MBZ_2$, $\mathcal{C}(\MBZ_1)\cap{\mathcal{C}(\MBZ_2)}$, however $\MBZ_k$ is only an estimate of $\MBA_k$. Thus, in practice $\mathcal{C}(\MBZ_1)\cap{\mathcal{C}(\MBZ_2)} = \{\MB0\}$ due to the corruption of true joint structure by noise. To circumvent this difficulty, we propose to use principal angles to measure the similarity between $\mathcal{C}(\MBZ_1)$ and $\mathcal{C}(\MBZ_2)$. Both CCA\citep{CCA-principle-angle} and inter-battery factor analysis \citep{IBFAviaPLS} also use the cosines of principal angles to measure similarity. We propose to separate the principal angles into two groups: small angles indicating common signals (albeit not exactly equal) and large angles indicating individual signals. Similar idea is used in \citet{AJIVE}, however our approach for determining the angle cutoff is different. 

We first review principal angles. Let $\mathcal{U}$ and $\mathcal{V}$ be subspaces with $dim(\mathcal{U}) = h_1, dim(\mathcal{V}) = h_2$ in $\mathbb{R}^{n}$. Let $h = \min(h_1,h_2)$, then the principal angles $\Theta(\mathcal{U},\mathcal{V}) = \{\theta_k \in [0,\frac{\pi}{2}]| k = 1,2,\cdots,h\}$ between $\mathcal{U}$ and $\mathcal{V}$ are recursively defined by 
\begin{align*}
    &\cos{\theta_k} = \max_{\boldsymbol{x} \in \mathcal{U}}{\max_{\boldsymbol{y} \in \mathcal{V}}{|\boldsymbol{x}^T\boldsymbol{y}|}} = |\boldsymbol{x}_k^T\boldsymbol{y}_k| \\
    & \mbox{subject to}\quad \|\boldsymbol{x}\| = \|\boldsymbol{y}\| = 1, \boldsymbol{x}^T\boldsymbol{x}_i = 0, \boldsymbol{y}^T\boldsymbol{y}_i = 0, \quad i = 1,2,\cdots,k-1.
\end{align*}
The vectors $\{\boldsymbol{x}_1, \cdots, \boldsymbol{x}_h\}$ and $\{\boldsymbol{y}_1, \cdots, \boldsymbol{y}_h\}$ are called principal vectors. Principal angles can be calculated using singular value decomposition \citep{knyazev2002principal}. Let $\MBX \in \mathbb{R}^{n\times h_1}$ and $\MBY \in \mathbb{R}^{n \times h_2}$ be the orthogonal matrices formed by concatenating orthonormal basis vectors of $\mathcal{U}$ and $\mathcal{V}$ column-wise, respectively. Let SVD of $\MBX^T\MBY$ be $\MBU\mathbf{\Sigma}\MBV^T$ where $\mathbf{\Sigma}$ is a $h_1$ by $h_2$ diagonal matrix with singular values $s_1(\MBX^T\MBY),\cdots,s_h(\MBX^T\MBY)$ in non-increasing order. Then 
$
\cos \Theta(\mathcal{U}, \mathcal{V}) = \{s_1(\MBX^T\MBY),\cdots,s_h(\MBX^T\MBY)\}.
$
Moreover, the corresponding principal vectors are given by the first $h$ columns of $\MBX\MBU$ and $\MBY\MBV$.

Using principal angles, we estimate joint column structure $\mathcal{M}$ from $\MBZ_1$ (with rank $r_1$) and $\MBZ_2$ (with rank $r_2$) as follows. First we calculate the principal angles $\theta_1, \cdots, \theta_l, l = \min(r_1,r_2)$, and principal vectors $\{\boldsymbol{u}_1, \cdots, \boldsymbol{u}_l\}$ and $\{\boldsymbol{v}_1, \cdots, \boldsymbol{v}_l\}$ between two column spaces $\mathcal{C}(\MBZ_{1})$ and $\mathcal{C}(\MBZ_{2})$. Since the joint rank could be 0 or $l$, we add artificial $\theta_0 = 0$ and $\theta_{l+1} = \pi/2$ angles into the principal angle vector. We then separate the angles into two groups by using profile likelihood as described in Section~\ref{s:pl} to estimate the optimal cutoff $\widehat{q}$. The estimated joint column rank is then $r_c = \widehat{q} - 1$. We estimate the basis for joint column structure $\mathcal{M}$ by calculating the element-wise average of principal vectors, e.g., $\boldsymbol{w}_i = \frac{1}{2}(\boldsymbol{u}_i + \boldsymbol{v}_i), i = 1,2,\cdots,r_c,$ corresponding to the smallest $r_c$ principal angles. The basis of joint row structure $\mathcal{N}$ together with its rank $r_r$ is determined similarly from the row spaces $\mathcal{R}(\MBZ_{1})$ and $\mathcal{R}(\MBZ_{2})$. Let $\MBM$ and $\MBN$ be the matrices containing averaged principal vectors corresponding to $\mathcal{M}$ and $\mathcal{N}$, respectively. To form basis vectors, we orthogonalize $\MBM$ and $\MBN$ using Gram-Schmidt process. The full procedure  is summarized in Figure \ref{fig:Caljoint}. 

\begin{figure}[!t]
\centering
\includegraphics[width=1\linewidth]{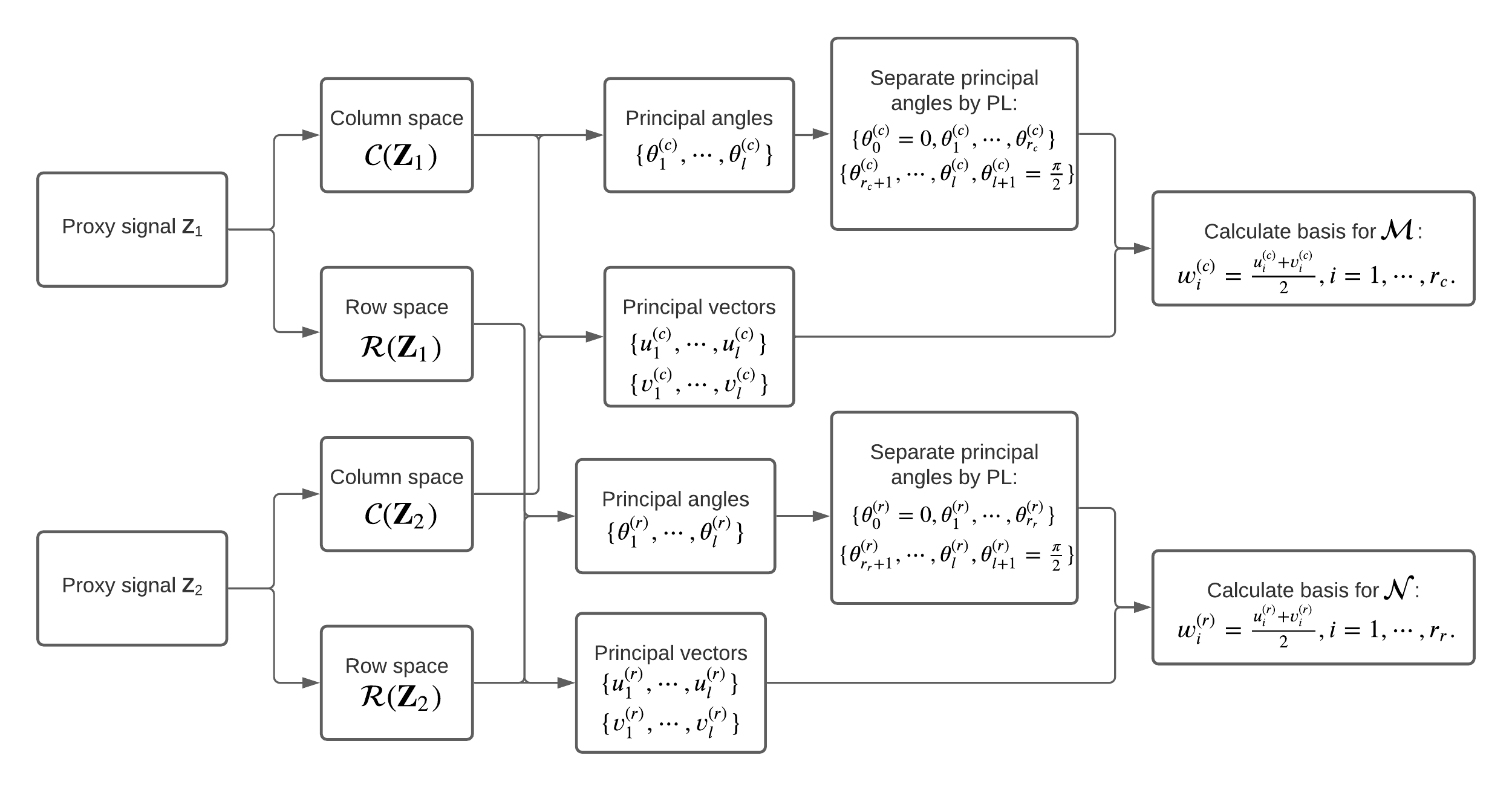}
\caption{Procedure of calculating joint structure in DMMD}
\label{fig:Caljoint}
\end{figure}

\begin{remark}
An alternative rank estimation approach can be used in this step. Some examples are permutation method \citep{JIVE}, Bi-Cross-Validation method \citep{BCV} and Wedin bound method \citep{AJIVE}. See Section~\ref{s:Rankest} for comparison.
\end{remark}

\subsubsection{Estimation of signals with given joint structure}

We consider the estimation of the signals $\MBA_k$, $k = 1, 2$, in~\eqref{eq:dmmd} given the joint structures $\mathcal{M}$ and $\mathcal{N}$. Our goal is to find the closest matrix to $\MBX_k$ that simultaneously contains both given joint column structure $\mathcal{M}$ and given joint row structure $\mathcal{N}$, that is to solve
\begin{align}
    &\minimize_{\MBA_k\in \R^{n\times p}}{\|\MBX_k - \MBA_k\|^2_{F}} \label{eq:DMopt}\\
    & \mbox{such that}\quad \mathcal{C}(\MBM) \subset \mathcal{C}(\MBA_k),\ \mathcal{C}(\MBN) \subset \mathcal{R}(\MBA_k),\ \text{rank}(\MBA_k) = r_k,\quad k = 1,2 \nonumber,
\end{align}
where $r_k$ is the total rank for signal $\MBA_k$ as in Section~\ref{s:pl}, $\MBM $ is the joint column space with $r_c$ basis in $\mathbb{R}^n$ and $\MBN$ is the joint row space with $r_r$ basis in $\mathbb{R}^p$ estimated as in Section~\ref{s:joint}. In Supplement S5, we describe a variation of \eqref{eq:DMopt} with additional minimization over joint structures that allows for iterative updates of initial $\MBM$, $\MBN$, leading to iterative DMMD (DMMD-i). We compare DMMD and DMMD-i in Section~\ref{sec:simulation}.

To solve \eqref{eq:DMopt}, we first consider a simplified problem by removing the row-space constraint, for which we derive a closed-form solution.

\begin{Lemma}
\label{l:SMMD}
Given $\MBX\in \R^{n\times p}$, $\MBM\in \R^{n\times r_c}$ with orthonormal columns with $\text{rank}(\MBM\MBM^T\MBX) = r_c$, and rank $r$ with $r_c\leq r \leq \min(n, p$), consider
\begin{equation}\label{eq:simpler}
    \minimize_{\MBA \in \R^{n\times p}}{\|\MBX - \MBA\|_{F}^2}
     \quad \mbox{such that}\quad \mathcal{C}(\MBM) \subset \mathcal{C}(\MBA), \quad rank(\MBA) = r. 
\end{equation}
Let $\MBA_M^* = \MBM\MBM^T\MBX + \MBR\MBR^T\MBX$, where the columns of $\MBR$ are the first $r - r_c$ left singular vectors of $(\MBId - \MBM\MBM^T)\MBX$. Then $\MBA_M^{*}$ is the global minimizer of~\eqref{eq:simpler}. Furthermore, if matrix $(\MBId - \MBM\MBM^T)\MBX$ has distinct $(r - r_c)$-th and $(r - r_c + 1)$-th singular values, then $\MBA_M^{*}$ is the unique minimizer of~\eqref{eq:simpler}.
\end{Lemma}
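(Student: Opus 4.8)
The plan is to turn the matrix problem \eqref{eq:simpler} into a problem of choosing an $r$-dimensional column space, and then recognize the resulting subspace optimization as a projection/Eckart--Young problem. First I would note that any feasible $\MBA$ is specified by its column space $\mathcal{W} = \mathcal{C}(\MBA)$ (an $r$-dimensional subspace containing $\mathcal{C}(\MBM)$) together with $\MBA$ itself, and that for a \emph{fixed} $\mathcal{W}$ the best $\MBA$ is the orthogonal projection of $\MBX$ onto $\mathcal{W}$: writing $P_{\mathcal{W}}$ for that projector and decomposing $\MBX - \MBA = (P_{\mathcal{W}}\MBX - \MBA) + (\MBId - P_{\mathcal{W}})\MBX$ into the orthogonal pieces with columns in $\mathcal{W}$ and in $\mathcal{W}^{\perp}$, Pythagoras gives $\|\MBX - \MBA\|_F^2 = \|P_{\mathcal{W}}\MBX - \MBA\|_F^2 + \|\MBX\|_F^2 - \|P_{\mathcal{W}}\MBX\|_F^2 \ge \|\MBX\|_F^2 - \|P_{\mathcal{W}}\MBX\|_F^2$, with equality exactly when $\MBA = P_{\mathcal{W}}\MBX$. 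So it remains to maximize $\|P_{\mathcal{W}}\MBX\|_F^2$ over $r$-dimensional $\mathcal{W}\supseteq \mathcal{C}(\MBM)$.

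\textbf{Reducing to the orthogonal complement of $\mathcal{M}$.} Next I would write every such $\mathcal{W}$ as the orthogonal direct sum $\mathcal{W} = \mathcal{C}(\MBM)\oplus\mathcal{S}$ with $\mathcal{S} = \mathcal{W}\cap\mathcal{C}(\MBM)^{\perp}$ of dimension $r - r_c$; then $P_{\mathcal{W}} = \MBM\MBM^{T} + P_{\mathcal{S}}$, and since $\mathcal{S}\perp\mathcal{C}(\MBM)$ we get $P_{\mathcal{S}}\MBX = P_{\mathcal{S}}\MBB$ with $\MBB := (\MBId - \MBM\MBM^{T})\MBX$. Hence $\|P_{\mathcal{W}}\MBX\|_F^2 = \|\MBM\MBM^{T}\MBX\|_F^2 + \|P_{\mathcal{S}}\MBB\|_F^2$, where the first term is independent of $\mathcal{W}$. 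Maximizing $\|P_{\mathcal{S}}\MBB\|_F^2 = \operatorname{tr}(\MBQ^{T}\MBB\MBB^{T}\MBQ)$ over $(r-r_c)$-dimensional subspaces $\mathcal{S} = \mathcal{C}(\MBQ)$ ($\MBQ$ with orthonormal columns) is the classical Ky Fan / Eckart--Young problem, whose solution is $\mathcal{S} = \mathcal{C}(\MBR)$ with $\MBR$ the top $r-r_c$ left singular vectors of $\MBB$, attaining $\sum_{i=1}^{r-r_c} s_i(\MBB)^2$.

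\textbf{Assembling the optimum and checking feasibility.} Combining the two steps, the optimal column space is $\mathcal{W}^{*} = \mathcal{C}(\MBM)\oplus\mathcal{C}(\MBR)$ and the optimal matrix is $P_{\mathcal{W}^{*}}\MBX = (\MBM\MBM^{T} + \MBR\MBR^{T})\MBX = \MBA_M^{*}$. To finish the ``global minimizer'' claim I then check $\MBA_M^{*}$ is actually feasible: its column space lies in $\mathcal{W}^{*}$; it \emph{contains} $\mathcal{C}(\MBM)$ because $\mathcal{C}(\MBM\MBM^{T}\MBX)\subseteq\mathcal{C}(\MBM)$ has dimension $r_c = \dim\mathcal{C}(\MBM)$ by the hypothesis $\rank(\MBM\MBM^{T}\MBX) = r_c$ and therefore equals $\mathcal{C}(\MBM)$; and it has rank \emph{exactly} $r$ since $\mathcal{C}(\MBA_M^{*})$ is the orthogonal sum of $\mathcal{C}(\MBM)$ (dimension $r_c$) and $\mathcal{C}(\MBR\MBR^{T}\MBX)$, the latter of dimension $r - r_c$ as soon as $s_{r-r_c}(\MBB) > 0$.

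\textbf{Uniqueness, and the hard part.} For uniqueness I would take an arbitrary global minimizer $\MBA$ with column space $\mathcal{W}$ and run the inequality chain backwards: optimality forces both $\MBA = P_{\mathcal{W}}\MBX$ and that $\mathcal{W}$ — equivalently $\mathcal{S} = \mathcal{W}\cap\mathcal{C}(\MBM)^{\perp}$ — maximizes $\operatorname{tr}(\MBQ^{T}\MBB\MBB^{T}\MBQ)$; when $s_{r-r_c}(\MBB) > s_{r-r_c+1}(\MBB)$, the Eckart--Young maximizing subspace is unique, so $\mathcal{S} = \mathcal{C}(\MBR)$, $\mathcal{W} = \mathcal{W}^{*}$, and $\MBA = \MBA_M^{*}$. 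I expect the main obstacle to be not any one deep step — the analytic content (orthogonal projection plus Ky Fan) is standard — but the bookkeeping that the matrix-to-subspace reduction is lossless and, above all, that the exhibited optimum has rank exactly $r$ rather than smaller; this is precisely what $\rank(\MBM\MBM^{T}\MBX) = r_c$ and the singular-value gap are there to guarantee, and I would state explicitly the nondegeneracy condition $s_{r-r_c}(\MBB) > 0$ that the global-minimizer claim implicitly requires.
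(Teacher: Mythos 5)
Your proposal is correct and follows essentially the same route as the paper's proof: fix the column space, observe that the optimal $\MBA$ for a fixed space is the orthogonal projection of $\MBX$ onto it, and then optimize the residual $(r-r_c)$-dimensional subspace of $\mathcal{C}(\MBM)^{\perp}$ via Eckart--Young (you phrase this as Ky Fan trace maximization, the paper as low-rank approximation of $(\MBId-\MBM\MBM^T)\MBX$ via its Lemma S.2, but these are the same computation), with the identical singular-value-gap argument for uniqueness. Your explicit flagging of the nondegeneracy condition $s_{r-r_c}\bigl((\MBId-\MBM\MBM^T)\MBX\bigr)>0$ needed for $\rank(\MBA_M^*)=r$ is a genuine improvement in precision: the paper secures this by silently invoking $\rank(\MBX)=r$ mid-proof, an assumption not stated in the lemma.
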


From Lemma~\ref{l:SMMD}, the columns of matrix $[\MBM, \MBR]$ are the basis vectors of the column-space of the solution $\MBA_M^{*}$ to~\eqref{eq:simpler}.
Similarly, consider a simplified problem \eqref{eq:DMopt} with row-space constraint, but no column-space constraint. 
\begin{Lemma}
\label{l:SMMDN}
Given $\MBX\in \R^{n\times p}$, $\MBN\in \R^{p\times r_r}$ with orthonormal columns with $\text{rank}(\MBX\MBN\MBN^T) = r_r$, and rank $r$ with $r_r\leq r \leq \min(n, p$), consider
\begin{equation}\label{eq:simpler2}
    \minimize_{\MBA \in \R^{n\times p}}{\|\MBX - \MBA\|_{F}^2}
     \quad \mbox{such that}\quad \mathcal{C}(\MBN) \subset \mathcal{R}(\MBA), \quad rank(\MBA) = r. 
\end{equation}
Let $\MBA_N^* = \MBX\MBN\MBN^T + \MBX\MBS\MBS^T$, where the columns of $\MBS$ are the first $r - r_r$ right singular vectors of $\MBX(\MBId - \MBN\MBN^T)$. Then $\MBA_N^{*}$ is the global minimizer of~\eqref{eq:simpler2}. Furthermore, if matrix $\MBX(\MBId - \MBN\MBN^T)$ has distinct $(r - r_r)$-th and $(r - r_r + 1)$-th singular values, then $\MBA_N^{*}$ is the unique minimizer of~\eqref{eq:simpler2}.
\end{Lemma}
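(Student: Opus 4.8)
The plan is to deduce Lemma~\ref{l:SMMDN} directly from Lemma~\ref{l:SMMD} by transposing. The three facts that make this work are: $\|\MBX - \MBA\|_F = \|\MBX^T - \MBA^T\|_F$; the constraint $\mathcal{C}(\MBN) \subset \mathcal{R}(\MBA)$ is the same as $\mathcal{C}(\MBN) \subset \mathcal{C}(\MBA^T)$; and $\rank(\MBA) = \rank(\MBA^T)$. Hence, writing $\MBB = \MBA^T \in \R^{p\times n}$, the optimization problem~\eqref{eq:simpler2} is \emph{identical} to problem~\eqref{eq:simpler} with the data matrix taken to be $\MBX^T \in \R^{p\times n}$, the constraint matrix taken to be $\MBN \in \R^{p\times r_r}$, and $r_c$ replaced by $r_r$.

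First I would verify that this transposed instance satisfies the hypotheses of Lemma~\ref{l:SMMD}: $\MBN$ has orthonormal columns by assumption; $r_r \le r \le \min(p,n)$; and the rank hypothesis $\rank(\MBN\MBN^T\MBX^T) = r_r$ holds because $\MBN\MBN^T\MBX^T = (\MBX\MBN\MBN^T)^T$ and transposition preserves rank. Applying Lemma~\ref{l:SMMD} then gives that the global minimizer of the transposed problem is $\MBB^* = \MBN\MBN^T\MBX^T + \MBR\MBR^T\MBX^T$, where the columns of $\MBR$ are the first $r - r_r$ left singular vectors of $(\MBId - \MBN\MBN^T)\MBX^T$. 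Transposing back, $\MBA^* = (\MBB^*)^T = \MBX\MBN\MBN^T + \MBX\MBR\MBR^T$.

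It then remains to identify $\MBR$ with the matrix $\MBS$ in the statement. Since $(\MBId - \MBN\MBN^T)\MBX^T = \big(\MBX(\MBId - \MBN\MBN^T)\big)^T$ and the left singular vectors of a matrix are the right singular vectors of its transpose (from $\MBB = \MBU\mathbf{\Sigma}\MBV^T$ we get $\MBB^T = \MBV\mathbf{\Sigma}\MBU^T$), the columns of $\MBR$ are exactly the first $r - r_r$ right singular vectors of $\MBX(\MBId - \MBN\MBN^T)$, i.e.\ $\MBR = \MBS$, so $\MBA^* = \MBA_N^*$. For the uniqueness claim, Lemma~\ref{l:SMMD} gives uniqueness when $(\MBId - \MBN\MBN^T)\MBX^T$ has distinct $(r-r_r)$-th and $(r-r_r+1)$-th singular values; because a matrix and its transpose share the same singular values, this is precisely the stated condition on $\MBX(\MBId - \MBN\MBN^T)$.

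The only thing requiring care -- rather than a genuine obstacle -- is the bookkeeping: confirming that each hypothesis of Lemma~\ref{l:SMMD} survives transposition (in particular the rank condition and the index of the singular-value gap), and correctly tracking the left-versus-right singular vector correspondence so that $\MBR$ is matched with $\MBS$. No analysis beyond Lemma~\ref{l:SMMD} is needed.
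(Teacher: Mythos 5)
Your proposal is correct and is exactly the route the paper intends: Lemma~\ref{l:SMMDN} is the transpose of Lemma~\ref{l:SMMD} (the supplement omits a separate proof, just as it declares the analogous auxiliary Lemma~S.3 to be ``analogous to the proof of Lemma~S.2''). All the bookkeeping you flag --- preservation of the Frobenius norm, rank, and the rank hypothesis under transposition, and the left/right singular-vector swap identifying $\MBR$ with $\MBS$ --- checks out.
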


The columns of matrix $[\MBN, \MBS]$ are the basis vectors of the row-space of the solution $\MBA_N^{*}$ to~\eqref{eq:simpler2}.
Given the closed form solutions to~\eqref{eq:simpler} and~\eqref{eq:simpler2}, we propose an iterative algorithm for the full problem~\eqref{eq:DMopt}, where we alternate the update of column space with the update of the row space. The full algorithm is summarized in Algorithm \ref{a:opt}. In words, we first initialize the full column space $\widetilde \MBM_k$ of each signal matrix, and update the row space that is not captured by $\MBN$. Given the updated full row space of each signal matrix $\widetilde \MBN_k$, we then  update the column space that is not captured by $\MBM$. At each step, the current estimated signal matrix is $\MBA_k^{(t)} =\widetilde{\MBM}_k^{(t)}\widetilde{\MBM}_k^{(t)T}\MBX_k\widetilde{\MBN}_k^{(t)}\widetilde{\MBN}_k^{(t)T}$, which is feasible as long as it has full rank $r_k$ (this is always satisfied in our numerical studies because of noisy $\MBX_k$). Once the estimated signal matrices $\MBA_k$ are obtained, by construction it holds that for $k = 1, 2$: $\mathbf{J}_{ck} =  \mathbf{M}\mathbf{M}^\top\mathbf{A}_k, \mathbf{I}_{ck} =  (\mathbf{Id} - \mathbf{M}\mathbf{M}^\top)\mathbf{A}_k.$
Similarly, $\mathbf{J}_{rk} =  \mathbf{A}_k\mathbf{N}\mathbf{N}^\top, \mathbf{I}_{rk} = \mathbf{A}_k(\mathbf{Id} - \mathbf{N}\mathbf{N}^\top).$

 Let $L_k^{(t)}$ be the objective function of \eqref{eq:DMopt} at iteration step $t$, $L_k^{(t)} = L_k(\MBR^{(t)},\MBS^{(t)}) = \|\MBX_k - \MBA^{(t)}_k\|^2_F = \|\MBX_k - (\MBM\MBM^T + \MBR^{(t)}\MBR^{(t)T})\MBX_k(\MBN\MBN^T+\MBS^{(t)}\MBS^{(t)T})\|_F^2$. We show that Algorithm~\ref{a:opt} is guaranteed to converge as it leads to non-increasing sequence of $L_k^{(t)}$.   


\begin{Proposition}
\label{convergence}
If at each iteration step $t$ in Algorithm \ref{a:opt}, $(\mathbf{M}\mathbf{M}^T + \mathbf{R}_k^{(t)}\mathbf{R}_k^{(t)T})\mathbf{X}_k(\mathbf{Id} - \mathbf{N}\mathbf{N}^T)$ is of rank at least $r_k-r_r$ and $(\mathbf{Id} - \mathbf{M}\mathbf{M}^T)\mathbf{X}_k(\MBN\MBN^T + \MBS_k^{(t)}\MBS_k^{(t)T})$ is of rank at least $r_k-r_c$, then the sequence of objective values $L_k^{(t)}$ is non-increasing.
\end{Proposition}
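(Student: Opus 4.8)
The plan is to show that each full iteration of Algorithm~\ref{a:opt} (one row-space update followed by one column-space update, or vice versa) cannot increase $L_k^{(t)}$, by viewing each half-step as an instance of the closed-form problems in Lemma~\ref{l:SMMD} and Lemma~\ref{l:SMMDN}. Fix the view index $k$ and suppress it where harmless. Recall $\MBA_k^{(t)} = (\MBM\MBM^T + \MBR^{(t)}\MBR^{(t)T})\MBX_k(\MBN\MBN^T + \MBS^{(t)}\MBS^{(t)T})$, and write $\widetilde\MBM^{(t)} = [\MBM, \MBR^{(t)}]$, $\widetilde\MBN^{(t)} = [\MBN, \MBS^{(t)}]$; these have orthonormal columns because $\MBR^{(t)}$ is built from left singular vectors of a matrix that has been projected off $\mathcal{C}(\MBM)$ (and symmetrically for $\MBS^{(t)}$), so $\MBM^T\MBR^{(t)} = \MB0$ and $\MBN^T\MBS^{(t)} = \MB0$.

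First I would isolate the row-space update. With $\widetilde\MBM^{(t)}$ held fixed, the algorithm sets $\widetilde\MBM^{(t)}\widetilde\MBM^{(t)T}\MBX_k$ to be the working matrix and updates $\MBS$ to $\MBS^{(t+1)}$ using Lemma~\ref{l:SMMDN} applied with $\MBX \leftarrow \widetilde\MBM^{(t)}\widetilde\MBM^{(t)T}\MBX_k$, $\MBN \leftarrow \MBN$, $r \leftarrow r_k$; the rank-$(r_k - r_r)$ hypothesis in the proposition is exactly what makes this application of Lemma~\ref{l:SMMDN} legitimate. The key algebraic point is that $\|\MBX_k - \MBA\|_F^2$ is \emph{not} literally the objective minimized in Lemma~\ref{l:SMMDN} — that lemma minimizes $\|\widetilde\MBM^{(t)}\widetilde\MBM^{(t)T}\MBX_k - \MBA\|_F^2$. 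So I would insert the Pythagorean decomposition
\[
\|\MBX_k - \MBA\|_F^2 = \|(\MBId - \widetilde\MBM^{(t)}\widetilde\MBM^{(t)T})\MBX_k\|_F^2 + \|\widetilde\MBM^{(t)}\widetilde\MBM^{(t)T}\MBX_k - \MBA\|_F^2,
\]
valid whenever $\mathcal{C}(\MBA) \subseteq \mathcal{C}(\widetilde\MBM^{(t)})$, which holds for both $\MBA = \MBA_k^{(t)}$ and $\MBA = \MBA_k^{(t+1/2)} := \widetilde\MBM^{(t)}\widetilde\MBM^{(t)T}\MBX_k\widetilde\MBN^{(t+1)}\widetilde\MBN^{(t+1)T}$. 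The first term on the right is independent of the update, and Lemma~\ref{l:SMMDN} says the second term is minimized by $\MBA_k^{(t+1/2)}$ over all rank-$r_k$ matrices with $\mathcal{C}(\MBN)\subseteq\mathcal{R}(\MBA)$; since $\MBA_k^{(t)}$ is a feasible competitor (it has rank $r_k$ and contains $\mathcal{N}$ in its row space by construction), we get $L_k^{(t+1/2)} \le L_k^{(t)}$. The column-space half-step is the mirror image: hold $\widetilde\MBN^{(t+1)}$ fixed, write $\|\MBX_k - \MBA\|_F^2 = \|\MBX_k(\MBId - \widetilde\MBN^{(t+1)}\widetilde\MBN^{(t+1)T})\|_F^2 + \|\MBX_k\widetilde\MBN^{(t+1)}\widetilde\MBN^{(t+1)T} - \MBA\|_F^2$ for $\MBA$ with $\mathcal{R}(\MBA)\subseteq\mathcal{R}(\widetilde\MBN^{(t+1)})$, apply Lemma~\ref{l:SMMD} with the second rank hypothesis of the proposition, and conclude $L_k^{(t+1)} \le L_k^{(t+1/2)}$. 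Chaining the two inequalities gives $L_k^{(t+1)} \le L_k^{(t)}$.

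The main obstacle I anticipate is bookkeeping the feasibility and the subspace-containment conditions carefully enough that the two Pythagorean splittings are actually valid and that $\MBA_k^{(t)}$ really is an admissible competitor at each half-step — in particular verifying that after the update $\MBA_k^{(t+1/2)}$ still has $\mathcal{C}(\MBM)$ in its column space (so that the next, column-side, Pythagorean identity applies with the right projector) and that the rank stays exactly $r_k$ rather than dropping. This is where the two rank hypotheses in the statement do all the work: they guarantee the singular vectors extracted in Lemma~\ref{l:SMMD}/\ref{l:SMMDN} genuinely supply $r_k - r_c$ (resp. $r_k - r_r$) new orthonormal directions, so $\widetilde\MBM$ and $\widetilde\MBN$ retain full column count and the products retain rank $r_k$. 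Once that is pinned down, the monotonicity is immediate; a short closing remark would note that $L_k^{(t)} \ge 0$ is bounded below, so the sequence converges, though the proposition as stated only claims monotonicity.
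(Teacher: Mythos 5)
Your overall skeleton---alternate exact minimization over the row-space block and the column-space block, a Pythagorean splitting that isolates the term affected by the current half-step, then chaining $L_k^{(t)} \ge L_k^{(t+1/2)} \ge L_k^{(t+1)}$---is exactly the paper's argument. The paper writes $L_k(\MBR,\MBS)$ in the two decomposed forms you describe and observes that each update in Algorithm~\ref{a:opt} is the exact minimizer of $L_k$ over one block with the other held fixed.

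The one place where your execution diverges, and where it acquires a genuine gap, is the decision to route each half-step through the exact-rank-constrained problems of Lemma~\ref{l:SMMD} and Lemma~\ref{l:SMMDN}. Those lemmas (i) assume $\text{rank}(\MBX\MBN\MBN^T) = r_r$ (resp.\ $\text{rank}(\MBM\MBM^T\MBX) = r_c$) for the working matrix, which for your application becomes a condition on $\widetilde{\MBM}^{(t)}\widetilde{\MBM}^{(t)T}\MBX_k\MBN\MBN^T$, and (ii) minimize over matrices of rank \emph{exactly} $r_k$, so your comparison step needs $\MBA_k^{(t)}$ to be a feasible competitor, i.e.\ to have rank exactly $r_k$ and to satisfy $\mathcal{C}(\MBN) \subset \mathcal{R}(\MBA_k^{(t)})$. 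Neither condition is implied by the two rank hypotheses of the Proposition: those only guarantee that the extracted singular vectors supply $r_k - r_r$ (resp.\ $r_k - r_c$) orthonormal directions, and the main text explicitly notes that $\text{rank}(\MBA_k^{(t)}) = r_k$ is an empirical observation rather than a consequence of the assumptions. So your claim that ``the two rank hypotheses do all the work'' for feasibility is not right. The paper sidesteps this entirely by never posing the half-step as a constrained matrix-approximation problem: its supplementary Lemmas~\ref{l:s2} and~\ref{l:s3} minimize $\|\MBW(\MBId - \MBN\MBN^T) - \MBW\MBS\MBS^T\|_F^2$ directly over orthonormal $\MBS$ (and the analogue for $\MBR$), for which the Proposition's rank-at-least hypotheses are exactly sufficient and for which $\MBS^{(t)}$ is automatically an admissible competitor, with no rank or row-space feasibility left to verify. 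Replacing your appeals to Lemma~\ref{l:SMMD} and Lemma~\ref{l:SMMDN} by this Eckart--Young comparison over orthonormal factors closes the proof without extra assumptions.
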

Proposition~\ref{convergence} only guarantees the convergence of objective values, but in practice we found that that the sequences of $\MBR_k^{(t)}$ and $\MBS_k^{(t)}$ also converge. The convergence to the global minimizer is not guaranteed since problem~\eqref{eq:DMopt} is nonconvex, thus the output may depend on initial $\MBR_k^{(0)}$. The proposed $\MBR_k^{(0)}$ corresponds to global solution when $r_r = 0$ due to Lemma~2. Empirically this choice leads to a smaller objective value at convergence compared to a random $\MBR_k^{(0)}$, and excellent signal estimation performance.

\begin{algorithm}[!t]
\caption{Iterative algorithm for \eqref{eq:DMopt}}
\label{a:opt}
\begin{algorithmic}[1]
\State Given: $\MBX_k \in \R^{n\times p}$, $r_k$, $k=1, 2$; $\MBM\in \R^{n\times r_c}$, $\MBN\in \R^{p\times r_r},t_{max},\epsilon > 0$
\For{$k = 1,2$}
\State SVD: $(\MBId - \MBM\MBM^T)\MBX_k = \MBU_k\MBD_k\MBV_k^T$
\State $\MBR_k^{(0)} \gets$ first $r_k - r_c$ columns of $\MBU_k$
\State $\widetilde{\MBM}_k^{(0)} \gets [\MBM,\MBR_k^{(0)}]$
\State $t \gets 0$
\While{$t\neq t_{max}$ and $|L_k^{(t)} - L_k^{(t-1)}|>\epsilon$}
\State SVD: $\widetilde{\MBM}_k^{(t)}\widetilde{\MBM}_k^{(t)T}\MBX_k(\MBId- \MBN\MBN^T) = \MBU^{(t)}_{1,k}\MBD^{(t)}_{1,k}\MBV^{(t)T}_{1,k}$
\State $\MBS_k^{(t+1)} \gets$ first $r_k - r_r$ columns of $\MBV^{(t)}_{1,k}$
\State $\widetilde{\MBN}_k^{(t+1)} \gets [\MBN,\MBS_k^{(t+1)}]$
\State SVD: $(\MBId - \MBM\MBM^T)\MBX_k\widetilde{\MBN}_k^{(t+1)}\widetilde{\MBN}_k^{(t+1)T} = \MBU^{(t)}_{2,k}\MBD^{(t)}_{2,k}\MBV^{(t)T}_{2,k}$
\State $\MBR_k^{(t+1)} \gets$ first $r_k - r_c$ columns of $\MBU^{(t)}_{2,k}$
\State $\widetilde{\MBM}_k^{(t+1)} \gets [\MBM,\MBR_k^{(t+1)}]$
\State $t \gets t + 1$
\State $L_k^{(t)} = \|\MBX_k - \widetilde{\MBM}_k^{(t)}\widetilde{\MBM}_k^{(t)T}\MBX_k\widetilde{\MBN}_k^{(t)}\widetilde{\MBN}_k^{(t)T}\|^2_F$
\EndWhile
\State \Return {$\MBA_k^* = \widetilde{\MBM}_k^{(t)}\widetilde{\MBM}_k^{(t)T}\MBX_k\widetilde{\MBN}_k^{(t)}\widetilde{\MBN}_k^{(t)T}$}
\EndFor
\end{algorithmic}
\end{algorithm}

\section{Simulation studies}
\label{sec:simulation}
We generate the signal matrices $\MBA_k \in \mathbb{R}^{n \times p}$ given the sample size $n$, the number of features $p$, the total signal ranks $r_k \leq \min(n, p)$, $k=1,2$, the rank of joint column structure $r_c\leq \min(r_1, r_2)$ and the rank of joint row structure $r_r\leq \min(r_1, r_2)$ in accordance with Lemma~\ref{l:model} (Supplement S3). We then set
$
\MBX_k = \MBA_k + \MBE_k,
$
where $\MBE_k$ has independent entries $e_{kij} \sim \mathcal{N}(0, \sigma_k^2),\ i \in \{1,\cdots,n\},\ j \in \{1,\cdots,p\}$. We define the signal to noise ratio as $$\text{SNR} =\frac{\|\MBA_k\|^2_F}{\mathbb{E}(\|\MBE_k\|^2_F)} = \frac{r_k}{np\sigma^2_k},$$ 
and choose $\sigma_k$ to control the SNR at pre-specified levels. 

\subsection{Rank estimation}
\label{s:Rankest}
We investigate the performance of profile likelihood (PL) method from Sections~\ref{s:pl}--\ref{s:joint} on estimating the total signal ranks $r_k$, the joint column rank $r_c$ and the joint row rank $r_r$. We compare with permutation method used in \textsf{r.jive} package \citep{JIVE_R} and Bi-Cross-Validation (BCV) method \citep{BCV} implemented in \textsf{SLIDE} package \citep{SLIDE_R}.  JIVE and SLIDE rank selection methods are applied in two ways: (i) column space decomposition based on matched rows (samples); (ii) row space decomposition based on matched columns (features). When we perform JIVE on matched rows, we denote it as JIVE (Row), similarly for SLIDE. For total rank estimation, we also consider edge distribution (ED) method \citep{onatski2010determining}. We implement ED method in R ourselves by translating python code from \citet{shu2020d}. For joint rank estimation, we also consider the Wedin threshold method \citep{AJIVE} as implemented in \textsf{ajive} R package \citep{AJIVE_R}. The Wedin method is applied with the given true total ranks rather than estimated total ranks as the latter is not implemented in \textsf{ajive}.

We consider three settings, with 140 replications for each.
\begin{description}
  \item[\textbf{Setting 1}] $n = 240, p = 200$, $\mbox{SNR}=1$, $r_1, r_2$ sampled from $\{2,3,\cdots,20\}$ with replacement and $r_c, r_r$ sampled from $\{1,2,\cdots,\min(r_1, r_2,5)\}$ with replacement. 
  \item[\textbf{Setting 2}]Same as \textbf{Setting 1} with $\mbox{SNR}=0.5$.
  \item[\textbf{Setting 3}] $n = 240, p = 200$, $\mbox{SNR}=1$, $r_1, r_2$ sampled from $\{2,3,\cdots,20\}$ with replacement. In the first 35 replications, $r_c, r_r = 0$. In the next 35 replications, $r_c = 0, r_r = \min(r_1, r_2)$. In the third 35 replications, $r_c = \min(r_1, r_2), r_r = 0$. For the last 35 replications, $r_c = r_r = \min(r_1, r_2)$. This setting is used to demonstrate cases where there is either no joint structure or no individual structure. 
\end{description}
Figure~\ref{fig:set1_total} displays the difference between the estimated total rank and the true rank $r_k$ for each method in Setting 1. ED works the best, followed by the proposed PL. The permutation approach in \textsf{r.jive} works poorly in this setting, and is also not consistent (different ranks are estimated depending on whether the matching is done by rows or by columns). SLIDE rank estimation based on BCV also works poorly, however this is likely due to automatic centering implemented in the package which will perturb the column-space of the true non-centered signal. Figure~\ref{fig:set1_joint} displays the difference between the estimated joint rank and the true joint rank (either $r_c$ or $r_r$). The Wedin bound method works perfectly, however it uses the knowledge of true total ranks. The proposed PL works as well as Wedin bound without such knowledge with the exception of two cases. Both methods are significantly more accurate compared to other approaches. 
The results in Setting 2 are qualitatively similar to results in Setting 1 (Supplement S3), however the performance tends to be worse due to lower SNR. On total rank estimation, the median performance of ED and PL is still superior to the permutation method used in \textsf{r.jive} package and to BCV , however ED tends to underestimate the total ranks, whereas PL tends to overestimate the total ranks. On joint ranks estimation, the Wedin bound method still works perfectly. PL estimates joint ranks perfectly over 90\% of the times, however significantly overestimates the rank in remaining cases. JIVE on average correctly estimates the joint ranks but has higher IQR compared to PL. SLIDE consistently underestimates the joint ranks, which is likely again due to its automatic centering within bi-cross-validation.

\begin{figure}[!t]
\centering
\begin{subfigure}{0.54\textwidth}
\includegraphics[scale = 0.31]{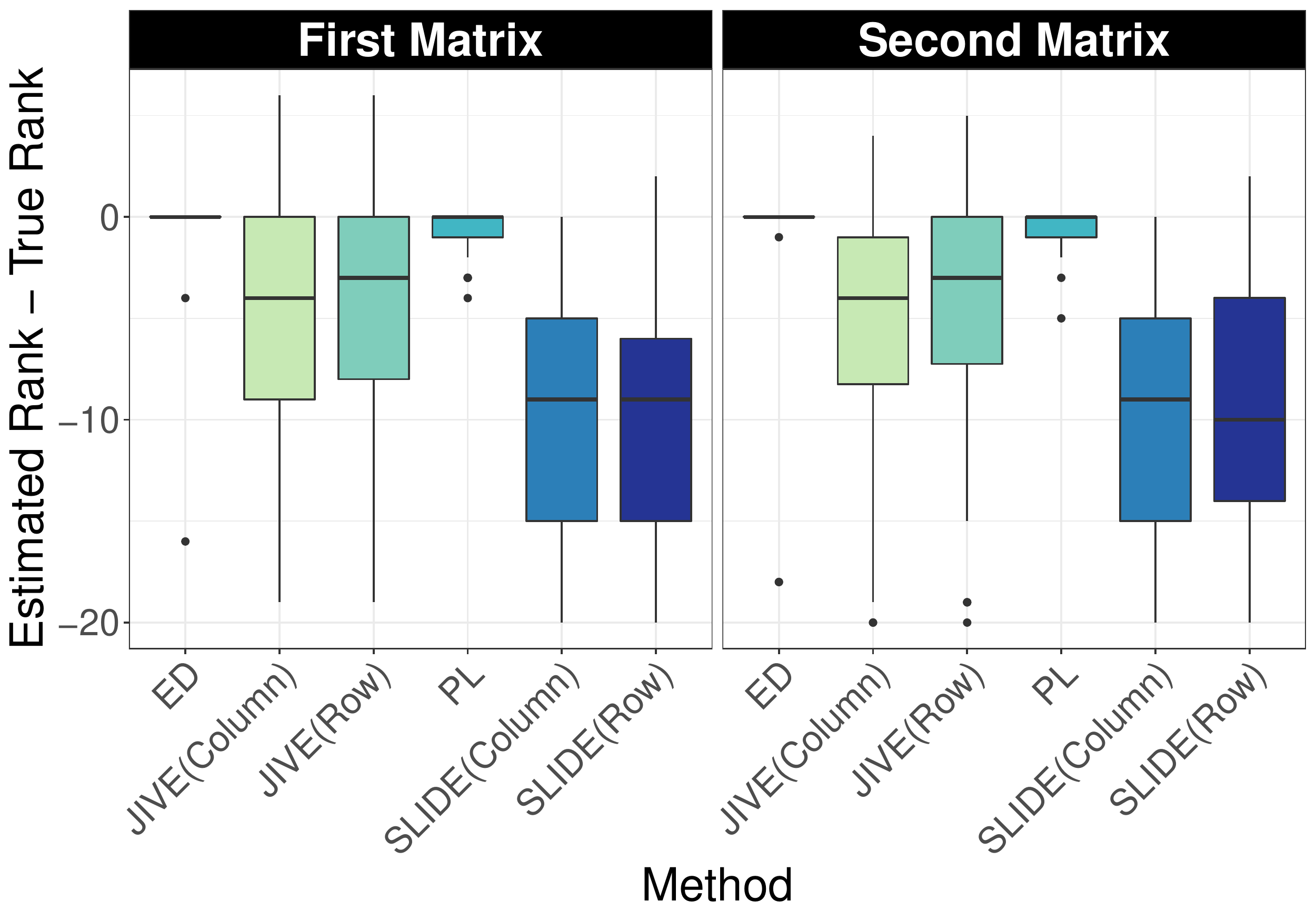}
\caption{Total rank estimation}
\label{fig:set1_total}
\end{subfigure}%
\begin{subfigure}{0.46\textwidth}
\includegraphics[scale = 0.29]{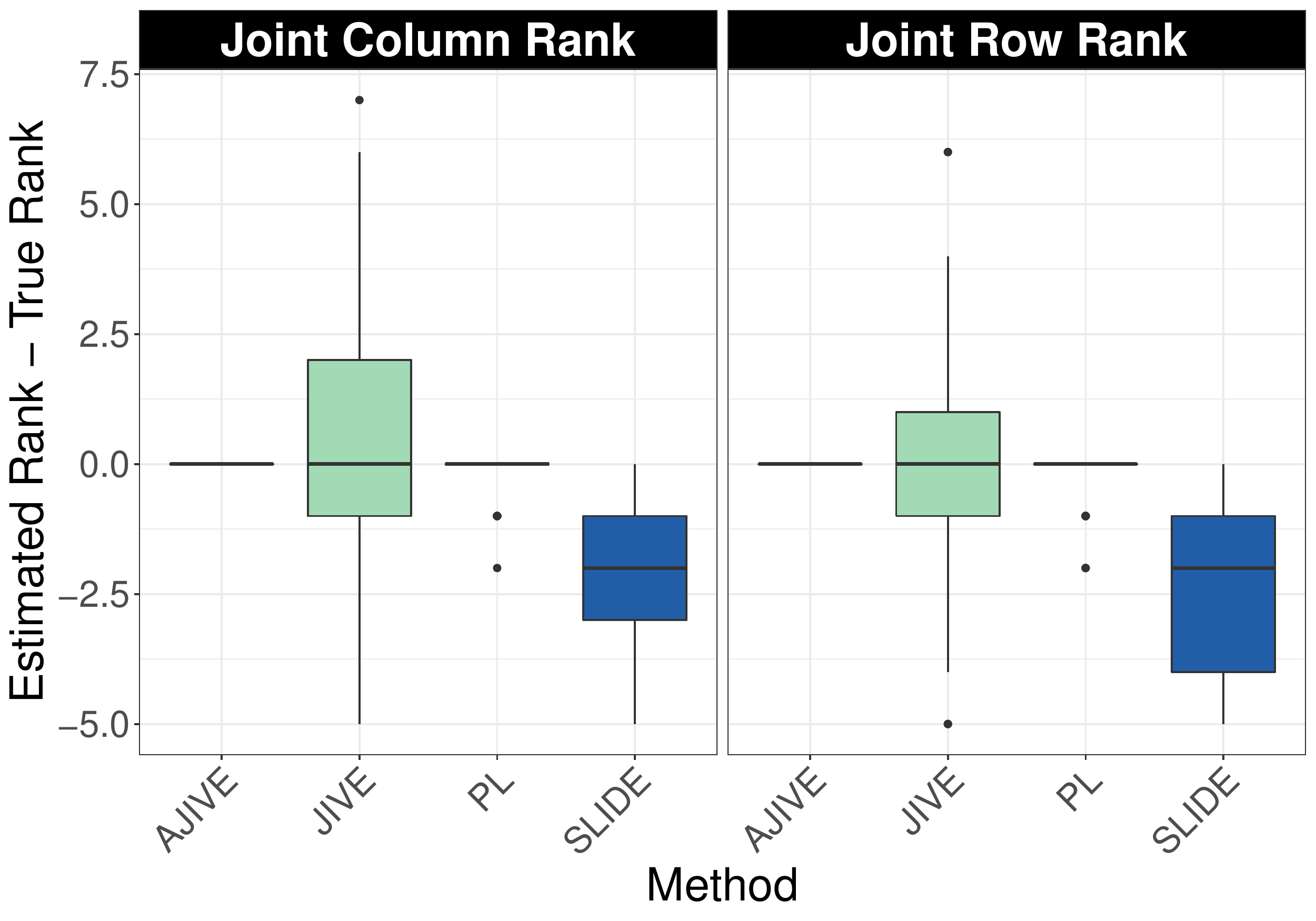}
\caption{Joint rank estimation}
\label{fig:set1_joint}
\end{subfigure}
\caption{Comparison of rank estimation for Setting 1 over 140 replications. $n = 240, p = 200$, $2 \leq r_1, r_2 \leq 20, 1 \leq r_c, r_r \leq 5$, $\mbox{SNR} = 1$. JIVE (Column) or SLIDE (Column) estimates the total rank when columns are matched and vice versa.}
\end{figure}

Figures~\ref{fig:set3_total} and~\ref{fig:set3_joint}
show the results on total rank estimation and joint rank estimation, respectively, in Setting 3. As in Setting~1, ED and PL methods are significantly more accurate in estimating total ranks $r_k$ compared to the permutation approach, and the results of the latter are again dependent on whether the matching is based on rows or columns. Unlike Setting~1, PL is slightly better than ED, as occasionally ED grossly underestimates the total rank. On joint rank estimation, Wedin bound works best, however it uses the knowledge of true total ranks. All methods correctly identify zero joint rank when $r_c=r_r=0$. Overall PL is more accurate than JIVE and SLIDE when $r_r=r_c=r_{\min}$, however in few cases it underestimates the joint rank more severely than JIVE. When $r_c=0,\ r_r = r_{\min}$, PL and JIVE are comparable in estimating $r_r$ on average, but PL has lower variance across replications. When $r_c = r_{\min},\ r_r=0$, PL slightly underestimates $r_c$ compared to the permutation approach.

\begin{figure}[!t]
\begin{subfigure}{0.53\textwidth}
\includegraphics[width=0.99\textwidth]{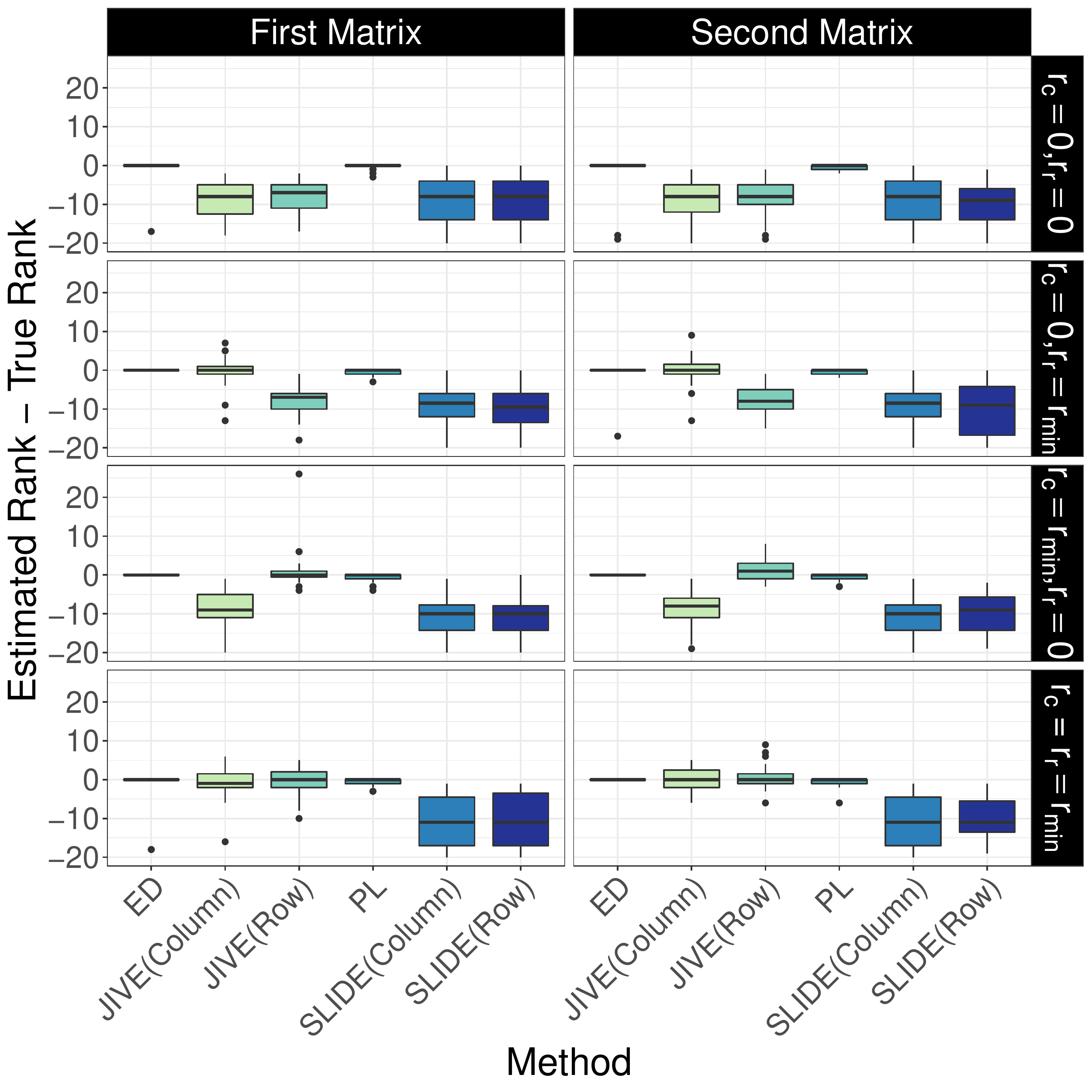}
\caption{Total rank estimation}
\label{fig:set3_total}
\end{subfigure}%
\begin{subfigure}{0.47\textwidth}
\includegraphics[width=0.99\textwidth]{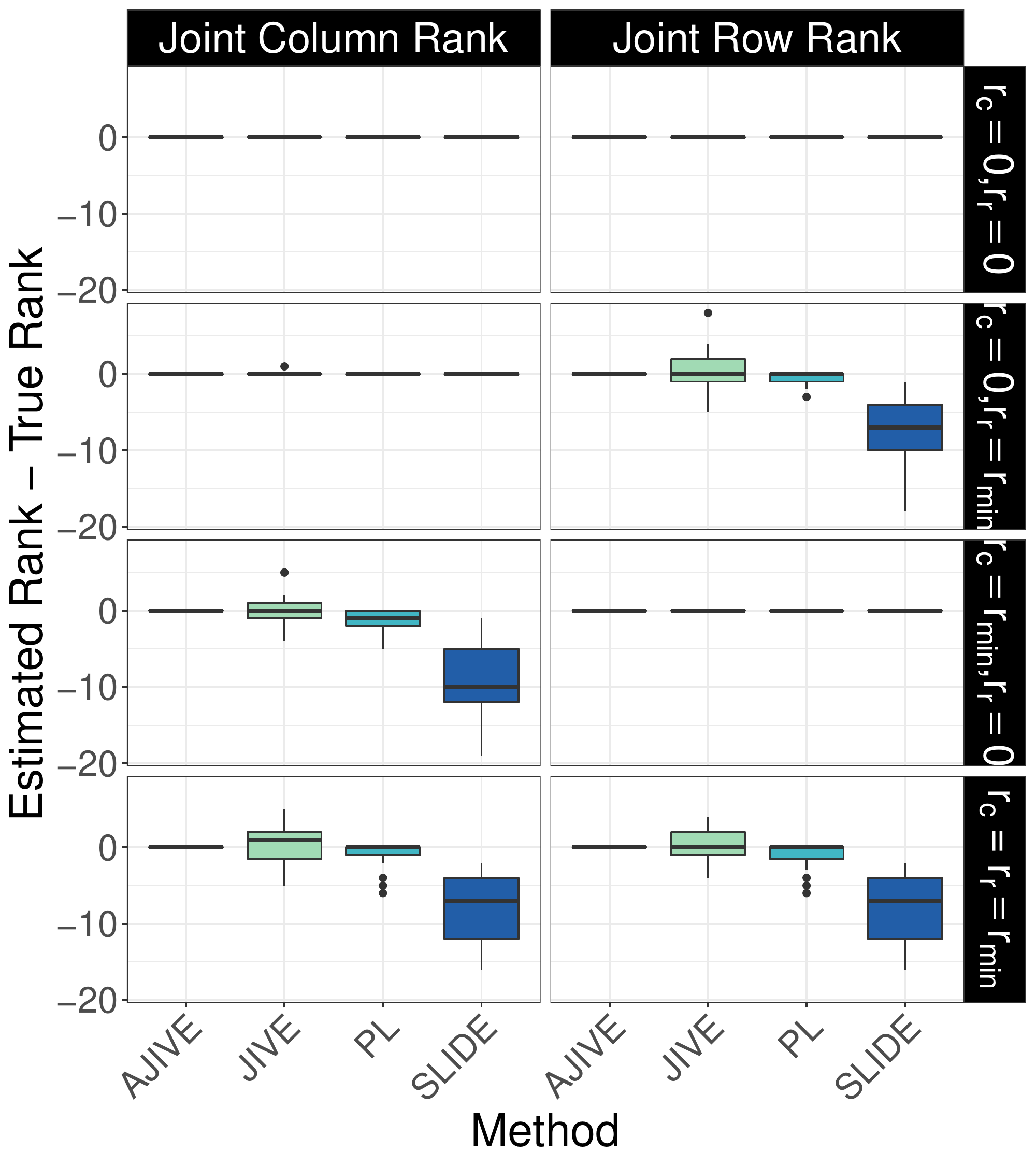}
\caption{Joint rank estimation}
\label{fig:set3_joint}
\end{subfigure}
\caption{Comparison of rank estimation for Setting 3 over 140 replications. $n = 240, p = 200, 2 \leq r_1, r_2 \leq 20, r_{min} = \min(r_1,r_2)$. 
}
\end{figure}

Overall, we found that ED and PL work best in total rank estimation. While in Setting~1 ED works better than PL, ED assumes that the maximal possible signal rank is bounded by $0.1\min(n,p)$ \citep{ahn2013eigenvalue}, and this assumption is satisfied in all of our settings. Since in practice this assumption may be violated, we use PL method as default. On joint rank estimation, Wedin bound method works perfectly in all of the settings, however it does so by using true total ranks. Since in practice the true ranks are unknown, and PL works similarly, we use PL as default. We reach the same conclusions in a high-dimensional setting that mimics TCGA data set in Section~\ref{s:TCGA} (Supplement S3).

\subsection{Signal identification}
\label{s:Signalid}
We investigate the performance of DMMD on estimating signals $\MBA_k$ in model~\eqref{eq:dmmd} if the true ranks are known. We also consider DMMD-i (Supplement S5) which allows iterative adjustment of initial $\MBM$ and $\MBN$ from Step 2. We measure the performance as
$$
\text{Relative Error}(\widehat \MBA_k, \MBA_k) = \frac{\|\widehat {\MBA}_k - \MBA_k\|^2_F}{\|\MBA_k\|^2_F}.
$$
We also measure the relative error separately on joint $\MBJ_{ck}$, $\MBJ_{rk}$ and individual $\MBI_{ck}$, $\MBI_{rk}$ in model \eqref{eq:dmmd}. We compare with JIVE \citep{JIVE}, SLIDE \citep{SLIDE} and AJIVE \citep{AJIVE} using the same implementation as in Section~\ref{s:Rankest}.
AJIVE, JIVE and SLIDE are fitted in two ways: (i) column space decomposition based on matched rows (samples); (ii) row space decomposition based on matched columns (features). We use JIVE (Row) to indicate model based on matched rows, and similarly JIVE (Column).

We consider three settings with 140 replications for each.
\begin{description}
\item[\textbf{Setting 4}] $n = 240, p = 200$, $r_1 = 20, r_2 = 18, r_c = 4, r_r = 3$, $\mbox{SNR} = 1$.
\item[\textbf{Setting 5}] Same as \textbf{Setting 4} with $\mbox{SNR}=0.5$.
\item[\textbf{Setting 6}] $n = 240, p = 200$, $\mbox{SNR} = 1$, $r_1 = 20, r_2 = 18$. In the first 35 replications, $r_c, r_r = 0$. In the next 35 replications, $r_c = 0, r_r = \min(r_1, r_2) = r_2$. In the third 35 replications, $r_c = \min(r_1, r_2) = r_2, r_r = 0$. For the last 35 replications, $r_c = r_r = \min(r_1, r_2) = r_2$. Like Setting 3 in Section~\ref{s:Rankest}, this setting is used to demonstrate cases where there is either no joint structure, or no individual structure.
\end{description}
All DMMD, JIVE, AJIVE and SLIDE use true total ranks $r_k$, true joint rank $r_c$ (for column decomposition) and true joint rank $r_r$ (for row decomposition), $k=1,2$, as input. Ranks misspecifications are investigated in Supplement S3, where DMMD performs best in total signal estimation.

Figures~\ref{fig:set4_col} and~\ref{fig:set4_row} show relative errors of all methods in Setting 4 for estimated signals based on matched rows ($\MBJ_{ck},\MBI_{ck}$) and matched columns ($\MBJ_{rk},\MBI_{rk}$), respectively. The errors for total signal are the same for DMMD as it enforces model~\eqref{eq:dmmd}. In contrast, the errors for JIVE, AJIVE and SLIDE depend on matching (by rows or by columns) as it affects the estimated signal. For joint signals, DMMD and SLIDE perform similar, and are both more accurate than JIVE and AJIVE. DMMD has the smallest errors on full signals and individual signals in all scenarios, confirming that taking into account double matching leads to more accurate signal estimation.  The same conclusion holds in Setting 5 with smaller SNR (see Supplementary Materials Section 2). 

\begin{figure}[!t]
\begin{subfigure}{0.5\textwidth}
\includegraphics[width=1\textwidth]{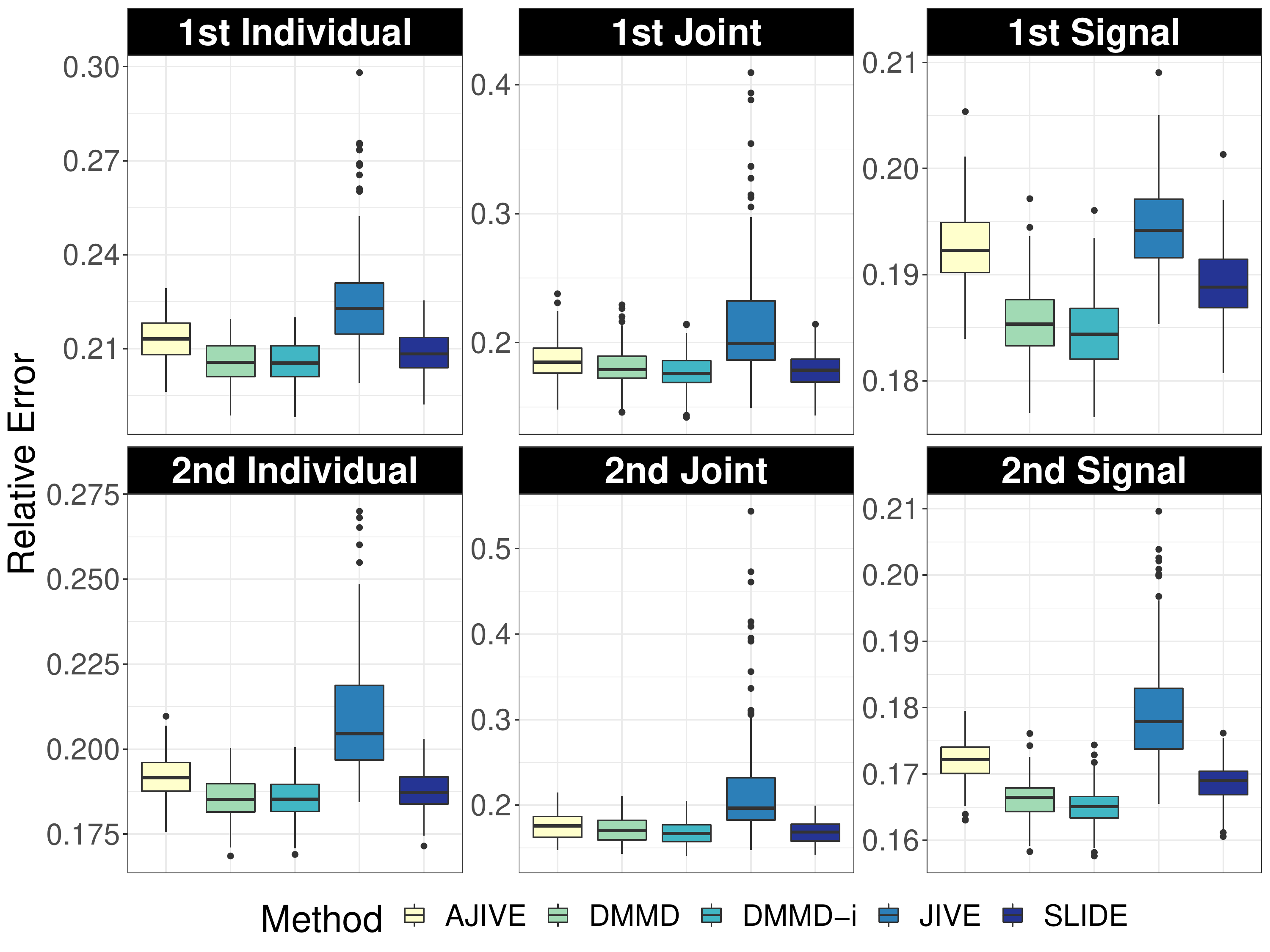}
\caption{Column space decomposition (matched rows)}
\label{fig:set4_col}
\end{subfigure}%
~
\begin{subfigure}{0.5\textwidth}
\includegraphics[width=1\textwidth]{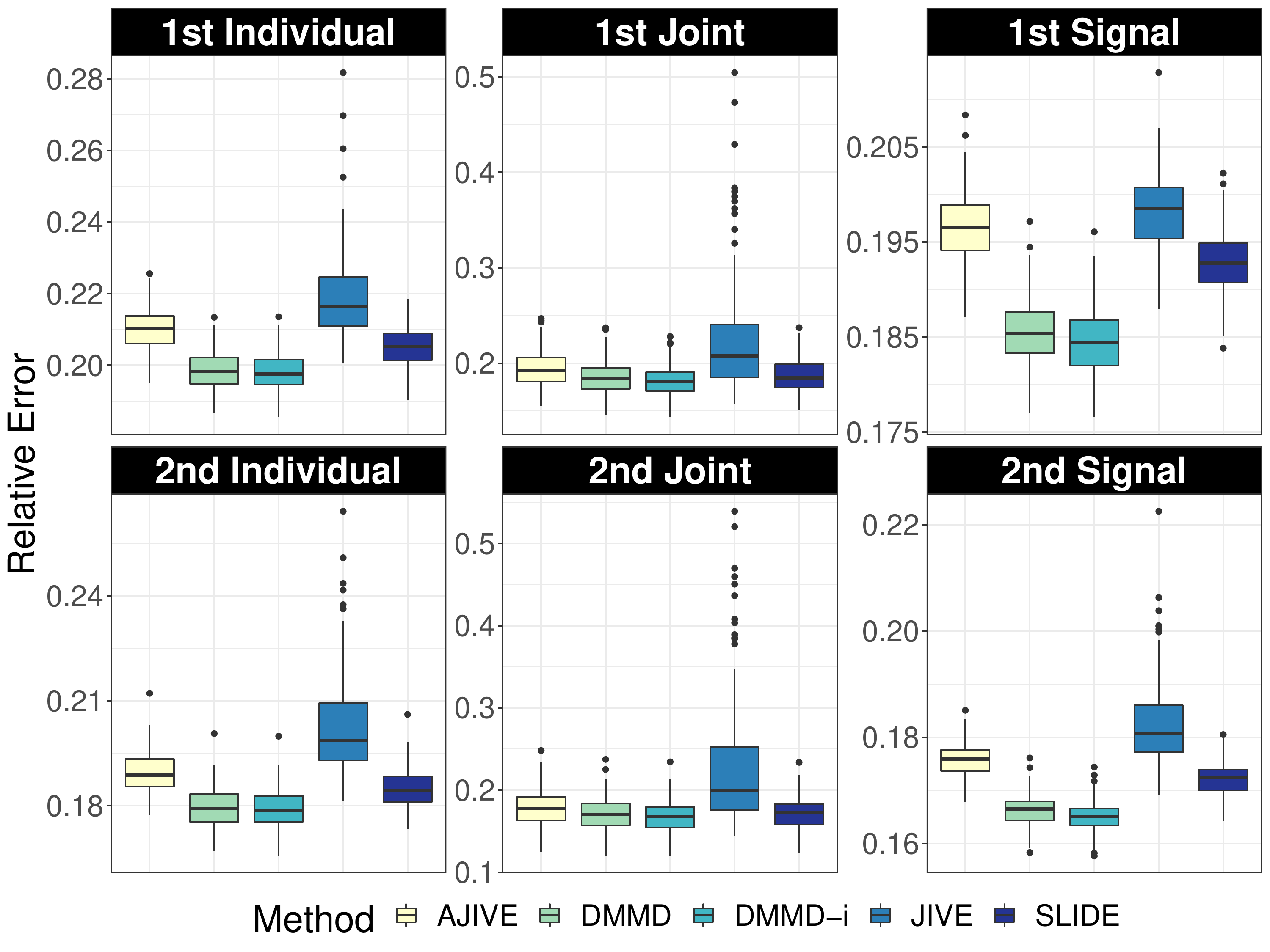}
\caption{Row space decomposition (matched columns)}
\label{fig:set4_row}
\end{subfigure}
\caption{Comparison of signal identification for Setting 4 over 140 replications, $n = 240,\ p = 200$, $r_1 = 20,\ r_2 = 18,\ r_c = 4,\ r_r = 3$, $\mbox{SNR} = 1$. }
\end{figure}

In Setting 6, either joint signal matrix or individual signal matrix is exactly equal to zero, thus we use the absolute error, $\|\text{Estimated Signal} - \text{True  Signal}\|^2_F$, rather than the relative error to measure the performance. Figures~\ref{fig:set6_col} and~\ref{fig:set6_row} show absolute errors of the four methods based on column space decomposition due to matched rows ($\MBJ_{ck},\MBI_{ck}$), or row space decomposition due to matched columns ($\MBJ_{rk},\MBI_{rk}$), respectively. When both joint column and row structures are absent ($r_c = r_r =0$), all methods perform the same, which is expected as the estimation is completely separate across two views. When $r_c=r_r=r_2$, DMMD gives smallest errors as it takes advantage of double matching. When $r_c = 0, r_r = r_2$, JIVE and SLIDE work better than DMMD on estimating joint row structure, whereas when $r_r = 0, r_c = r_2$, they work better than DMMD on estimating joint column structure. 
A possible explanation for this is a different approach for estimating the joint structures used by the methods. DMMD uses element-wise averaging of pairs of basis vectors from each view with smallest principal angles as in AJIVE \citep{AJIVE}, whereas JIVE and SLIDE extract basis vectors from concatenated matrix of view-specific residuals after subtracting individual structures. 

Overall, we find that DMMD has the smallest signal estimation error, with DMMD-i being slightly better than DMMD. This remains true in a high-dimensional setting that mimics TCGA data set in Section~\ref{s:TCGA} (Supplement S3). While in Setting 6 JIVE and SLIDE sometimes lead to better performance, these cases correspond to absent individual structures in either row or column directions and absent joint structures in the other directions, which is rarely the case for real data. When individual structures are present, DMMD always leads to improved errors as it enforces equality in estimated total signals from row and column decomposition of double-matched data, which subsequently leads to more accurate estimation of individual structures, and consequently, of the total signal.

\begin{figure}[!t]
\begin{subfigure}{0.48\textwidth}
\includegraphics[width=1\textwidth]{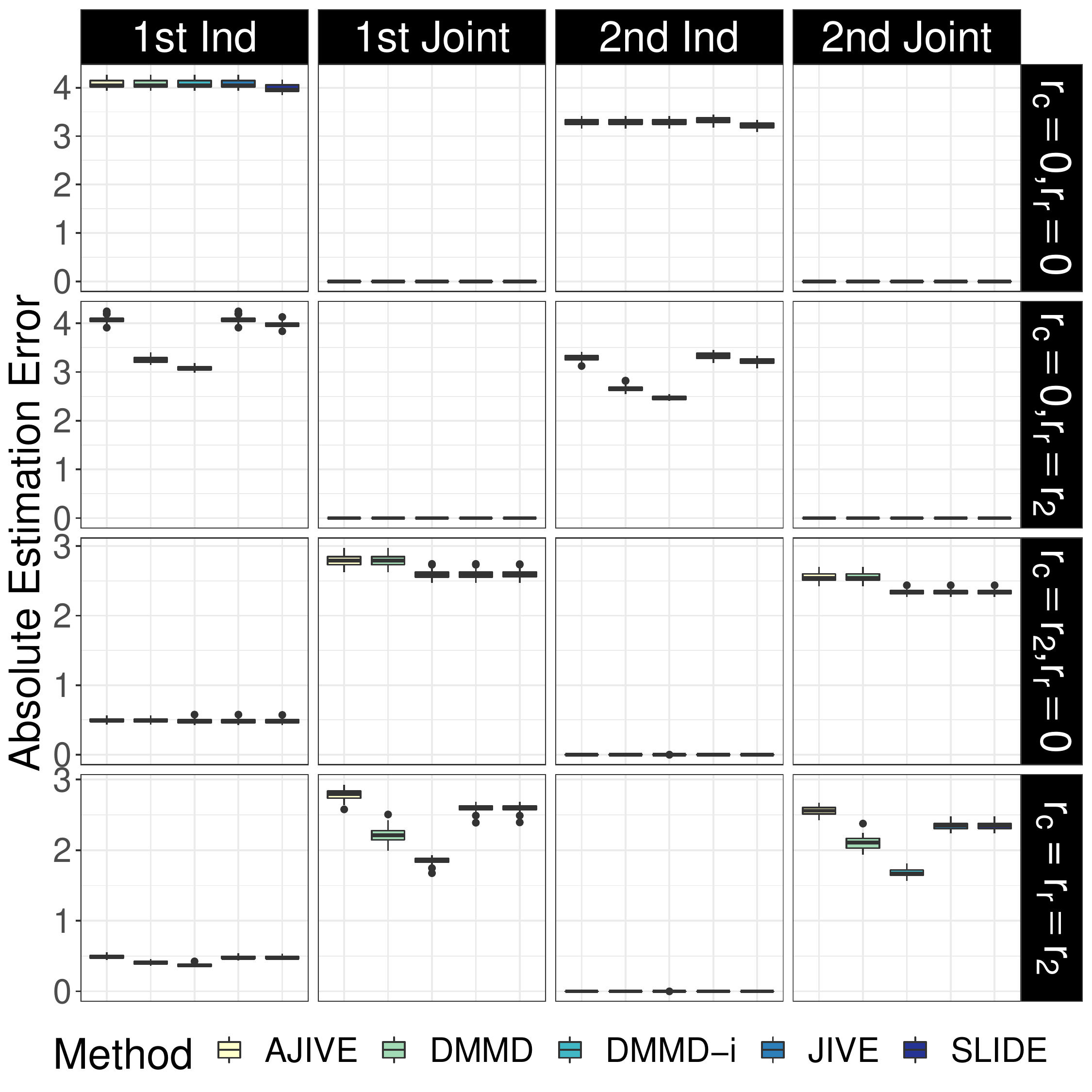}
\caption{Matched rows decomposition}
\label{fig:set6_col}
\end{subfigure}%
~
\begin{subfigure}{0.48\textwidth}
\includegraphics[width=1\textwidth]{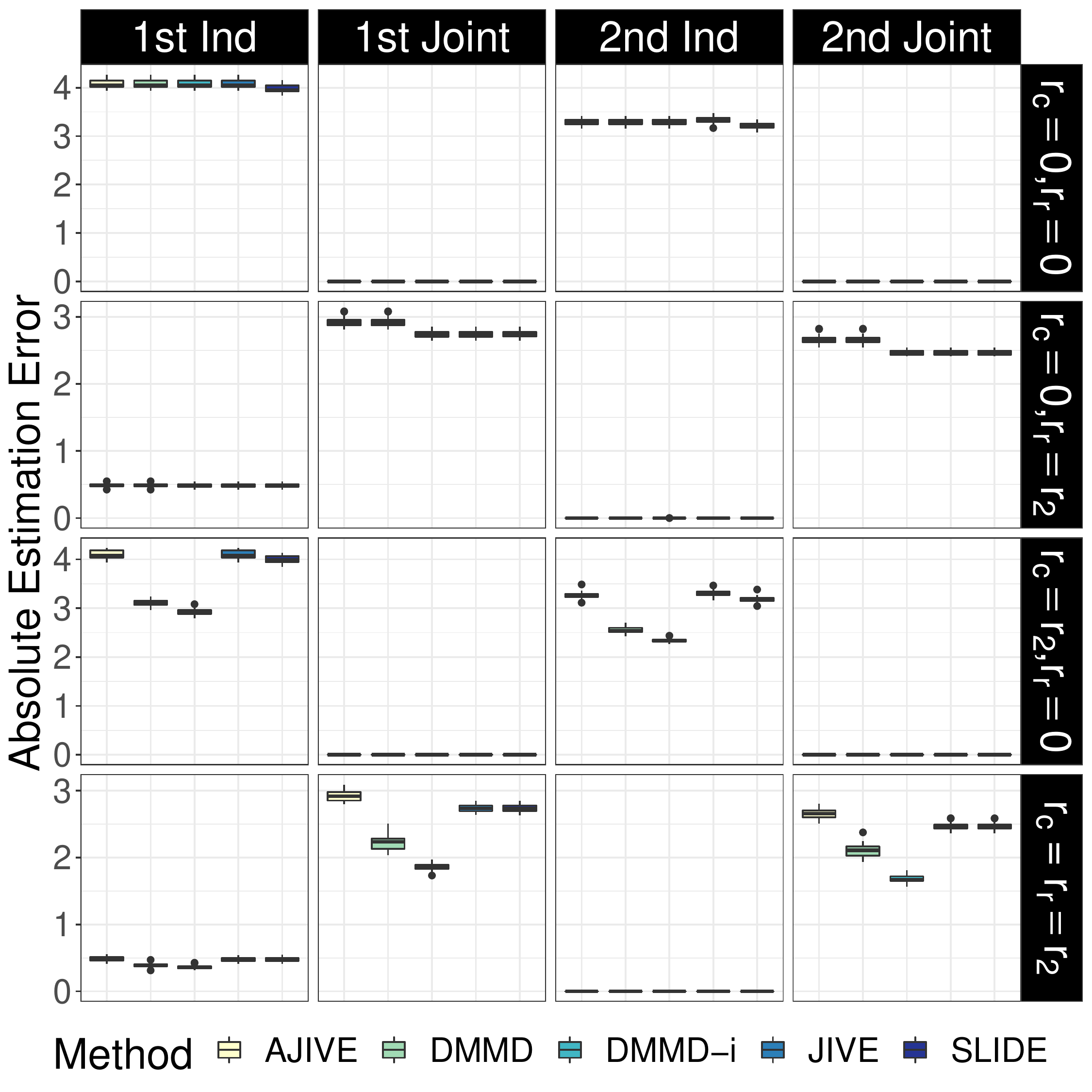}
\caption{Matched columns decomposition}
\label{fig:set6_row}
\end{subfigure}
\caption{Comparison of signal identification for Setting 6 over 140 replications. $n = 240,\ p = 200,\ r_1 = 20,\ r_2 = 18,\ \mbox{SNR} = 1$.}
\end{figure}

Computationally, DMMD is significantly faster than the competitors; its total run time (rank estimation plus signal identification) on data with $n= 100$, $p=800$ is 24 seconds on Intel(R) Core(TM) i5-7300U CPU @ 2.60GHz. As expected, DMMD-i comes with a significantly higher computational cost taking 618 seconds on the same data. More details are in Supplement S3. Given the relatively small improvement of DMMD-i over DMMD, in practice we recommend to use DMMD with large datasets for computational efficiency.

\section{Application}
\label{sec:application}
\subsection{Application to TCGA data}
\label{s:TCGA}

We consider data from The Cancer Genome Atlas (TCGA) repository corresponding to the Breast Invasive Carcinoma (BRCA) cancer type. We use \textsf{TCGA-Assembler 2} software pipeline \citep{wei2018tcga} to obtain miRNA read counts corresponding to the primary tumor tissue (view~1) and to the normal tissue (view~2) of the same subjects. We log-transform the counts to account for skewness, and remove the samples and features with zero variance for both views. We then apply double-standardization as in \citet{efron2012large} so that all rows and columns in each matrix have mean zero and sample variance one; such processing of multi-view data is also used in \citet{risk2021sing}. The double-standardization makes the estimated decomposition mean and scale-invariant. The final double-matched $\MBX_1$ (primary tumor tissue) and $\MBX_2$ (normal tissue) each contain $p=734$ matched miRNAs from $n=87$ matched samples. To evaluate possible biological relevance of obtained decomposition, we match each sample with one of the cancer subtypes: Luminal A (LumA), Luminal B (LumB), Basal-like (Basal), HER2-enriched (H) obtained from \url{https://www.cbioportal.org/study/clinicalData?id=brca_tcga_pub}. For 46 out of 87 subjects there are missing clinical records which are denoted as unknown. 

Our goal is to extract common (across tissues) as well as individual (tissue-specific) signals from each view, where common/individual signals have two meanings: (i) across subjects, and (ii) across miRNAs. Here we consider two methods: JIVE (rank selection via permutation test with subsequent fitting of the JIVE model) and the proposed DMMD (rank selection via profile likelihood with subsequent fitting of model~\eqref{eq:dmmd}). For JIVE, we consider both matching by subjects (column space decomposition), and matching by miRNAs (row space decomposition). 

First, we compare the estimated ranks. Permutation approach used by JIVE leads to inconsistent total ranks as the estimates depend on the type of matching: matching by samples leads to $\widehat r_1 = 11$, $\widehat r_2 = 9$, whereas matching by miRNAs leads to $\widehat r_1 = 14, \widehat r_2 = 11$. The PL method gives smaller estimated ranks $\widehat r_1 = 8, \widehat r_2 = 6$. Despite the discrepancy in total ranks between JIVE and DMMD, both lead to the same estimated joint ranks with $\widehat r_c = 0$ and $\widehat r_r = 2$. 

Next, we compare the variance explained by each method, together with the variance explained separately by joint/individual parts of the estimated signal. 
Figure~\ref{fig:variation} shows the percent variance explained by each part of the estimated decomposition separately for tumor and normal tissues. The total variance explained by estimated signal (joint plus individual) is the same for DMMD regardless of the type of matching considered, whereas it changes for JIVE due to discrepancy in estimated signals. The overall variance explained is higher for JIVE as it estimates higher total ranks compared to DMMD. Both DMMD and JIVE show that the variance explained by joint structure is higher for normal tissue compared to primary tumor tissue. We believe that this is in agreement with what would be expected from biological knowledge since tumor tissue evolves from originally normal tissue, and becomes more heterogeneous as cancer develops. 

\begin{figure}[!t]
\includegraphics[width=0.5\textwidth]{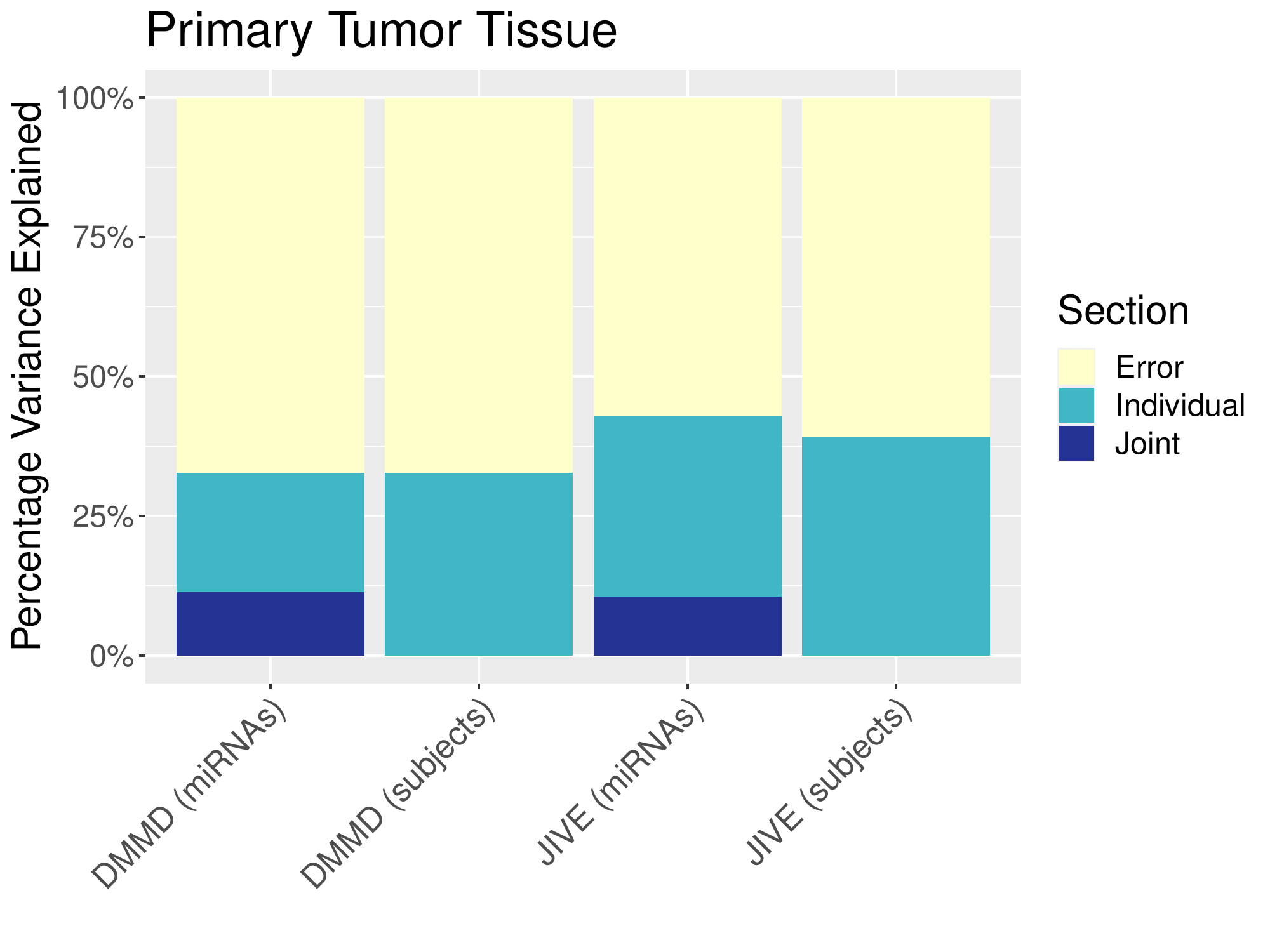}
\includegraphics[width=0.5\textwidth]{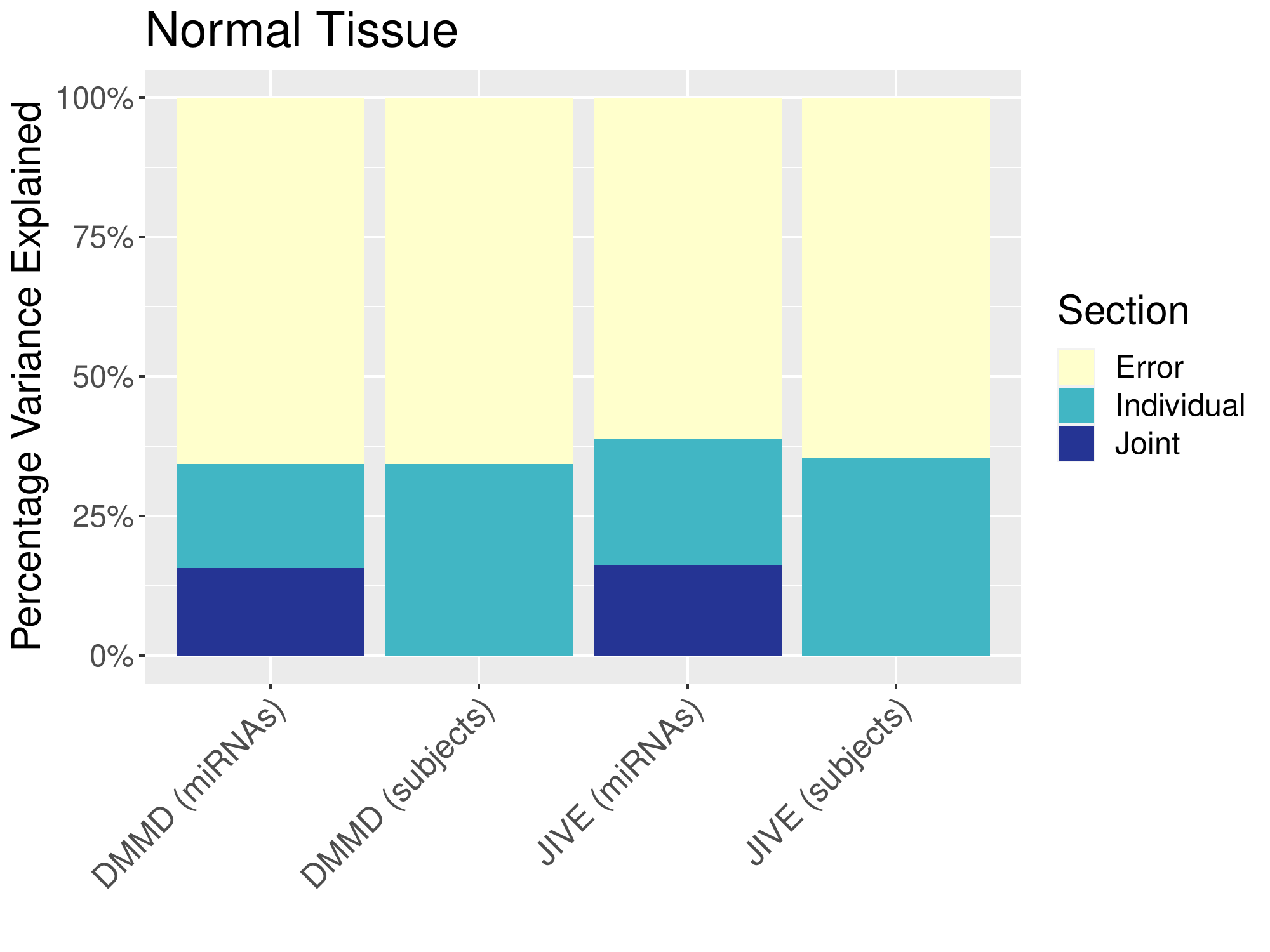}
\caption{Percentage variance explained by extracted DMMD and JIVE decompositions on TCGA BRCA matched miRNA data from primary tumor and normal tissues.}\label{fig:variation}
\end{figure}

We next display the found joint row structure ($\widehat r_r = 2$) corresponding to matched miRNAs in Figures~\ref{fig:joint_miRNA_tumor} and~\ref{fig:joint_miRNA_normal}. An alternative vertical alignment of these heatmaps is in Supplement Figure S3. The order of samples and miRNAs is the same in both tissues, and are determined based on the hierarchical clustering of the joint structure of the primary tumor tissue. Visually, the joint row structure captures the division of miRNAs in 3 clusters. While the displayed cluster partition is based on tumor tissue, the heatmap of normal tissue in Figure~\ref{fig:joint_miRNA_normal} has block structure based on the same partition.

\begin{figure}[!t]
\begin{subfigure}{0.5\textwidth}
\includegraphics[width =1\textwidth]{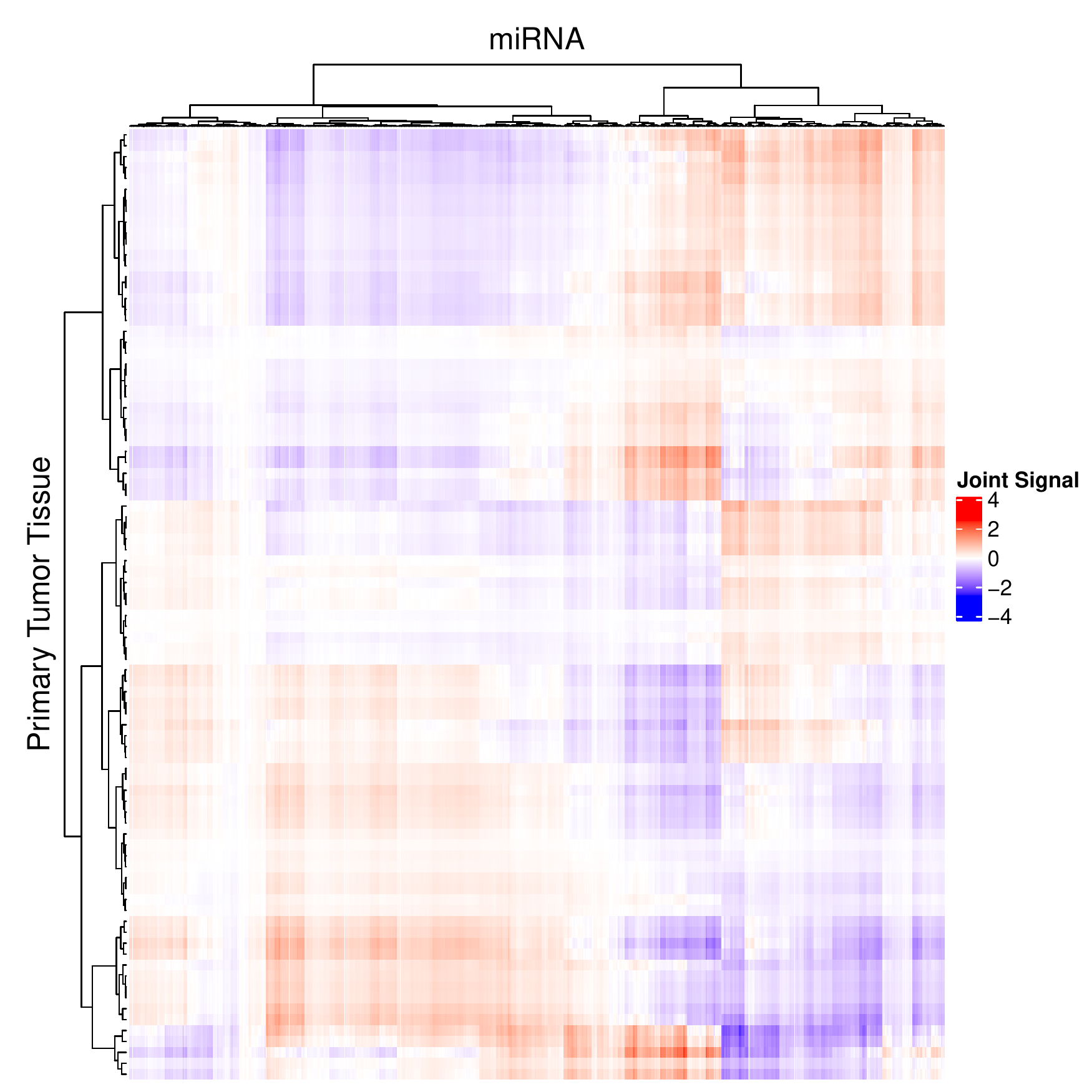}
\caption{}
\label{fig:joint_miRNA_tumor}
\end{subfigure}%
\begin{subfigure}{0.5\textwidth}
\includegraphics[width=1\textwidth]{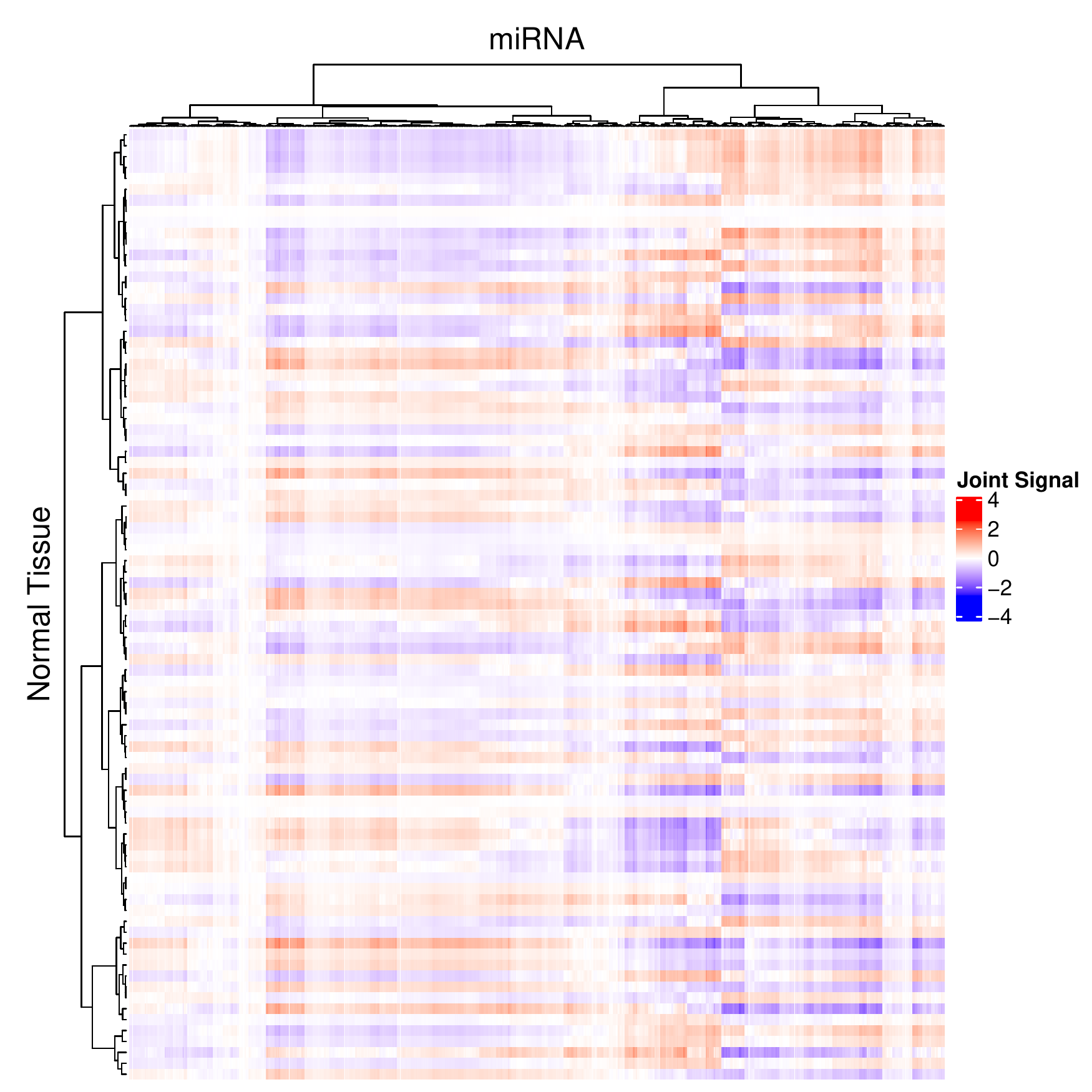}
\caption{}
\label{fig:joint_miRNA_normal}
\end{subfigure}
\caption{Joint row (miRNA) structures extracted by DMMD for primary tumor and normal tissues from matched TCGA-BRCA miRNA data. The order of samples and miRNAs in both figures is the same, which is determined by the joint structure of primary tumor tissue.}
\end{figure}

To provide further interpretation of estimated decomposition, we next consider individual structures from DMMD with respect to matched subjects. Figure \ref{fig:TCGA_ind_row} displays the heatmaps of estimated $\MBI_{ck}^{\top}\in \R^{p \times n}$, $k=1,2$, from model~\eqref{eq:dmmd} with samples sorted according to subtype information. Here $\rank(\MBI_{c1}) = 8$ and $\rank(\MBI_{c2}) = 6$. To better visualize the corresponding individual column spaces, Figure~\ref{fig:TCGA_ind_rowb} shows the leading left singular vectors of $\MBI_{c1}$ and $\MBI_{c2}$, respectively, which are the leading basis vectors for corresponding individual column spaces. For the primary tumor tissue, the basis vector displays a strong contrast between the Basal and LumA subtypes, expresses 27.3\% of the whole variation in individual structure. The effect of this basis vector can be seen in the whole signal heatmap in Figure \ref{fig:TCGA_ind_row}, where the contrast in the same direction is observed roughly in the top half of the miRNAs in primary tumor tissue, and is observed in the opposite direction in the bottom half of the miRNAs. In contrast, the individual structure for normal tissue does not display this contrast. Furthermore, the leading basis vector for the individual structure of normal tissues does not appear to separate any of the cancer subtypes, which is in agreement with what would be expected from biological knowledge since it captures individual subjects structure in normal tissues that is not present in primary tumor tissue.

\begin{figure}[!t]
\begin{subfigure}{1\textwidth}
\includegraphics[width=0.5\textwidth]{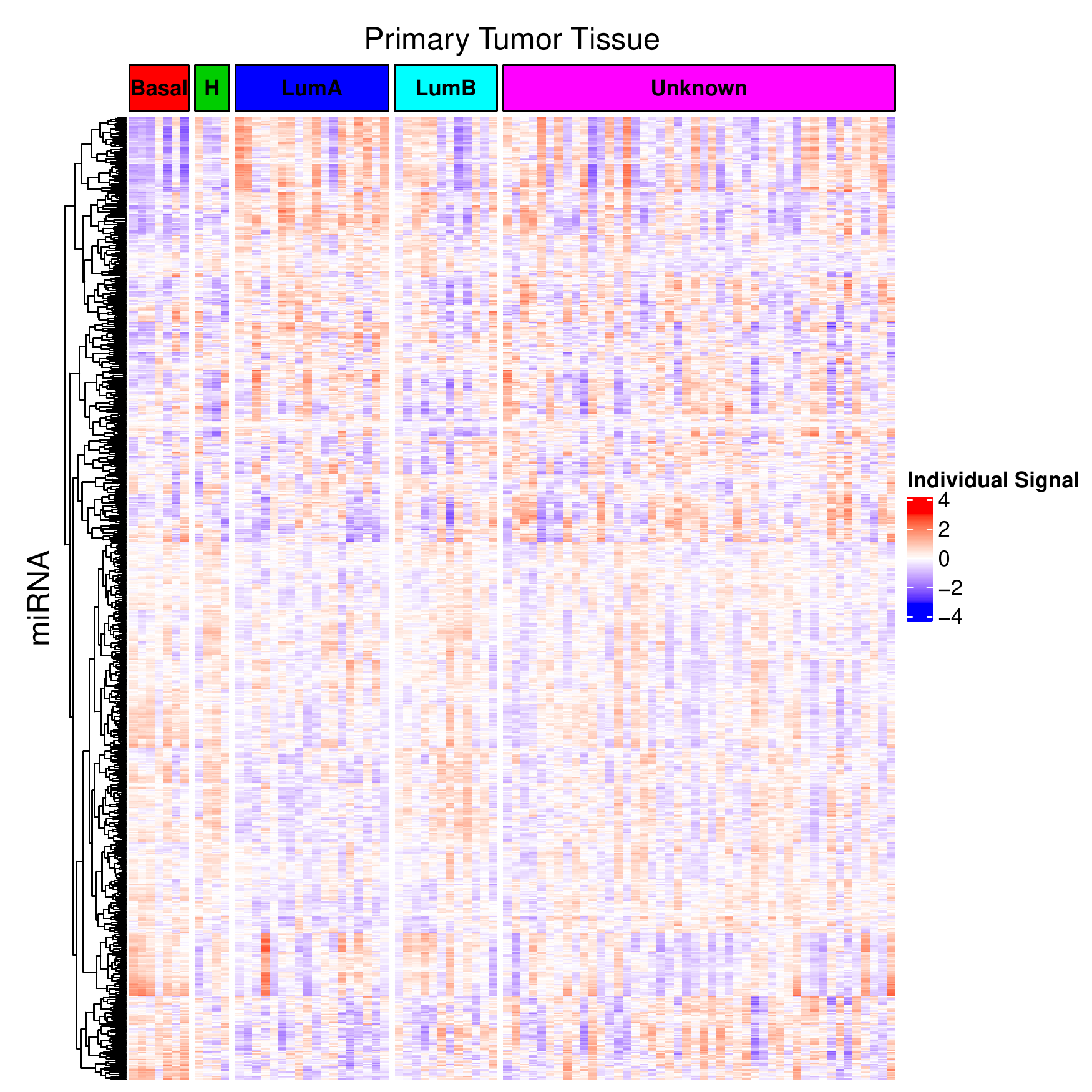}
\includegraphics[width=0.5\textwidth]{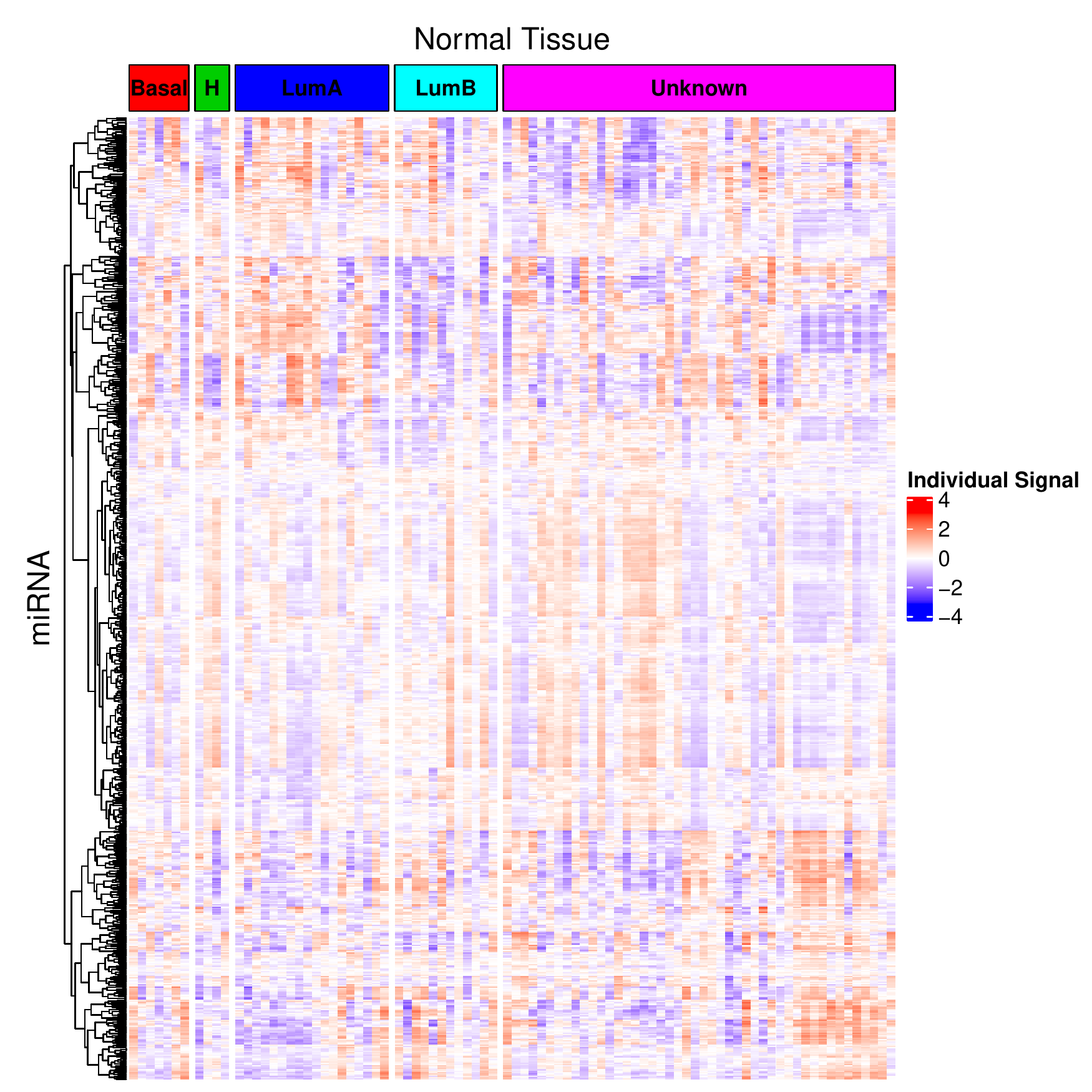}
\caption{Individual signal matrices (subjects)}
\label{fig:TCGA_ind_row}
\end{subfigure}
\begin{subfigure}{1\textwidth}
\includegraphics[width=0.5\textwidth]{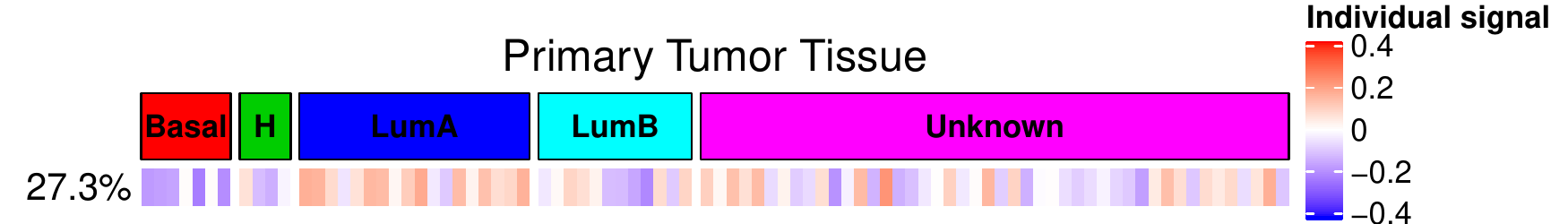}
\includegraphics[width=0.5\textwidth]{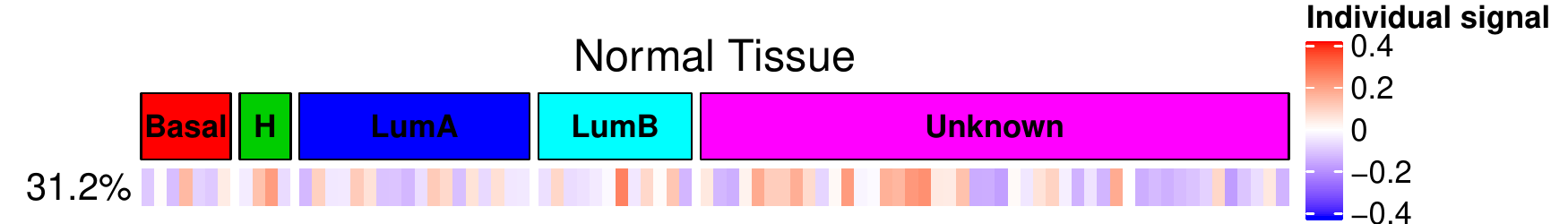}
\caption{Leading basis vectors for individual signals (subjects)}
\label{fig:TCGA_ind_rowb}
\end{subfigure}
\caption{Individual column (matched subjects) structures extracted by DMMD for primary tumor and normal tissues from double-matched TCGA-BRCA miRNA data. The samples are ordered according to the cancer subtype. The top row shows full individual signals, whereas the bottom row shows the leading individual basis vectors along with the percentage of variance explained (relative to the full individual signal).}
\end{figure}

\subsection{Application to soccer data}
\label{s:soccer}
We consider data from soccer matches in the English Premier League obtained from \url{https://www.kaggle.com/kenzeng24/premier-league-matches}. 
Each row of data represents a soccer match played in the English Premier League, and each column represents a unique feature recorded for that match (e.g. Date, Home Team, Full Time Goals for each team, etc). First, we filter the data by removing the matches with data quality issues (decimal values recorded for the number of yellow cards or red cards), and removing the matches corresponding to draw games. Secondly, we determine the winning and losing team for each of the matches, and extract ten match statistics recorded for each team representing numbers of full-time goals, half-time goals, shots, shots on target, hit woodwork, corners, fouls, offsides, yellow cards, and red cards. To aid interpretation of extracted signals in terms of original untransformed match statistics, we do not apply double standardization from  Section~\ref{s:TCGA} to soccer data. In the end, we obtain two double-matched $558\times 10$ matrices, one for the winning team and one for the losing team, where each row corresponds to a match, and each column corresponds to a team statistic from the match. Our goal is to investigate the relationship between match statistics that are (i) common across teams, and (ii) individual to the winning team.

First, we estimate the ranks of underlying signals. Both PL and ED automatically select $\widehat r_1 = \widehat r_2 = 1$, and PL determines $\widehat r_c = \widehat r_r = 1$, which is one of the special cases considered in simulation setting 6. DMMD works best in that setting, so we only report DMMD results. Table~\ref{t:Soccer11} displays the coefficient of the extracted row basis vector normalized to have coefficient 1 for Full time goals to assist interpretation. As this basis vector corresponds to joint structure across both winning and losing teams, we conclude that in English Premier League there is on average 1 goal for every 8.08 shots in the game. Also, on average, there are more goals in the second half game compared to the first half game. 

\begin{table}[!t]
\caption{Joint row basis for winning and losing teams in English Premier League when $r_1 = r_2 = r_r = 1$.}
\begin{center}
\footnotesize
\begin{tabular}{p{1.0cm}p{1.0cm}p{0.9cm}p{0.9cm}p{1.1cm}p{1.1cm}p{1.2cm}p{1.2cm}p{1.2cm}p{1cm}p{1cm}}
Signal & Full Time Goals & Half Time Goals & Shots & Shots on Target & Hit Woodwork & Corners & Fouls Commited & Offsides & Yellow Cards & Red Cards\\\hline
Joint & 1.00 & 0.45 & 8.08 & 3.90 & 0.23 & 4.03 & 9.80 & 2.50 & 1.10 & 0.07 \\
\end{tabular}
\label{t:Soccer11}
\end{center}
\end{table}

\begin{table}[!t]
\caption{Joint row basis and individual row basis for winning teams in English Premier League when $r_1 = 2, r_2 = r_r = 1$.}
\begin{center}
\footnotesize
\begin{tabular}{p{1.0cm}p{1.0cm}p{0.9cm}p{0.9cm}p{1.1cm}p{1.1cm}p{1.2cm}p{1.2cm}p{1.2cm}p{1cm}p{1cm}}
Signal & Full Time Goals & Half Time Goals & Shots & Shots on Target & Hit Woodwork & Corners & Fouls Commited & Offsides & Yellow Cards & Red Cards \\\hline
Joint & 1.00 & 0.47 & 7.85 & 3.78 & 0.21 & 4.01 & 11.35 & 2.76 & 1.30 & 0.08 \\
Win & 1.00 & 0.33 & 5.09 & 2.93 & 0.21 & 1.62 & -4.98 & -0.39 & -0.88 & -0.08 \\
\end{tabular}
\label{t:Soccer21}
\end{center}
\end{table}

The total ranks estimated by PL and ED are quite low due to low $p=10$. ED inherently restricts the signal rank to be at most $0.1p$, and thus can not possibly estimate larger ranks for these data. PL relies on clustering singular values in two groups, which we suspect is less reliable when the number of singular values is small. Thus here we consider an alternative approach to rank estimation, where for each team we pick the rank to explain 90\% of the variation in the respective dataset. This approach leads $\widehat r_1 = 2$, $\widehat r_2=1$, $\widehat r_r=1$. Thus there is a rank 1 individual structure in winning team not present in the losing team. Table~\ref{t:Soccer21} displays the coefficients for the joint row basis vector, and the individual row basis vector for winning team as estimated by DMMD. Both vectors are normalized to have coefficient 1 corresponding to full time goals to assist interpretation. As in the previous analyses, there are approximately 8 shots for every goal, and approximately 4 corners for every goal. However, the winning team tends to have a higher number of goals per shots while simultaneously having fewer fouls, offsides, yellow cards and red cards. The coefficient for offsides may appear counter-intuitive, however it can be interpreted as the attack of winning team being less interrupted. Somewhat surprisingly, the number of hitting woodwork does not seem to affect the winning or losing conditions. 

\section{Discussion}
\label{sec:discus}
We propose a new decomposition for multi-view data with matched rows and columns, which we call DMMD. The main novelty of our approach is in taking advantage of double-matching property via explicit column and row space constraints in signal estimation, and in deriving the corresponding optimization algorithm. The algorithm relies on estimated joint column and row spaces; the proposed estimation works well numerically but can be improved using iterative DMMD at the expense of a significantly higher computational cost. While our exposition has been limited to the case of two views, DMMD can be applied to more views as Lemma~\ref{l:model}, model~\eqref{eq:dmmd}, estimation of proxy signals in Section~\ref{s:pl} and the Algorithm~\ref{a:opt} still hold. Estimation of joint structures in Section~\ref{s:joint} requires modification for more than two views, which we outline in Supplement~S6.

The method requires estimation of ranks corresponding to different parts of the decomposition, and our empirical studies indicate that the chosen profile likelihood (PL) approach for rank estimation is competitive compared to alternative rank estimation methods. However, we also found that PL can occasionally severely overestimate the ranks, while other methods tend to underestimate the ranks. To help identify whether severe rank overestimation is an issue in practice, we recommend to simultaneously consider several rank estimation approaches (like we did in Section~\ref{sec:application}) to verify that the ranks estimated by PL are not considerably higher than the ranks estimated using other methods. 

Several extensions of DMMD are of interest. First, generalization to more than two views described in Supplement S6 allows to apply DMMD to longitudinal data, albeit the method will treat all the time points interchangeably. An alternative approach for longitudinal data is to treat the view at starting time as base view, and apply DMMD for each consecutive view paired with the base view. This would lead to a sequence of joint and individual structures ordered by time, which could provide additional insights on the time/treatment effects. It would be of interest to pursue such analyses in the future. Second, DMMD is designed for fully observed data, thus imputation is required if some elements of $\MBX_d$ are missing. As the computations rely only on the leading singular vectors and singular values, one possibility is to adapt soft-impute algorithm \citep{mazumder2010spectral} for DMMD. Third, DMMD decomposition is not sparse, however sparse extensions can be pursued by replacing standard SVD in Algorithm~\ref{a:opt} with its sparse analogs, e.g. sparse PCA method of \citet{Shen:2008fk} or penalized orthogonal iteration of \citet{Jung2019}. This, however, would increase computational time, and create additional challenge of choosing appropriate sparsity tuning parameters. Finally, while DMMD is based on the matrix models, an alternative approach is to view double-matched data as a three-way tensor and consider tensor decompositions \citep{zhouTensor}. It is unclear, however, how to extract individual information from the latter. Furthermore, we find that existing tensor decompositions may have difficulties in capturing joint row and column structures simultaneously (Supplement S7). It would be of interest to investigate extensions of tensor decompositions that will provide more flexibility as well as preserve the matrix interpretation of joint and individual structures.

\section*{Acknowledgements}

The authors thank the Editor, AE and two anonymous reviewers for the comments that helped improve this work. The authors thank Himanshu Kumar for help in processing soccer dataset in Section~\ref{s:soccer}. The results shown in Section~\ref{s:TCGA} are based upon data generated by the TCGA Research Network: \href{https://www.cancer.gov/tcga}{https://www.cancer.gov/tcga}. 

\bigskip
\begin{center}
{\large\bf SUPPLEMENTARY MATERIAL}
\end{center}
\begin{description}
\item[Supplementary:]
Proofs of all statements, and additional numerical results (.pdf file). The \textsf{R} code is available at \url{https://github.com/justicesuker/DMMD_Code}.

\end{description}

\bibliographystyle{chicago}
\bibliography{IrinaReferences}

\renewcommand{\thesection}{S\arabic{section}}
\renewcommand{\thetable}{S\arabic{table}}
\renewcommand{\thefigure}{S\arabic{figure}}
\def\theequation{S\arabic{equation}}
\renewcommand{\thelemma}{S.\arabic{lemma}}
\renewcommand{\thealgorithm}{S.\arabic{algorithm}}

\newpage
\begin{center}
\vskip 60pt
\LARGE \bf Supplement to ``Double-matched matrix decomposition for multi-view data" \par
\vskip 2em  
\end{center}

\begin{abstract}
In section~\ref{s:proofs}, we prove all the results stated in the main paper. In section~\ref{s:lemmas}, we prove supplementary lemmas. In section~\ref{s:add_simulation}, we provide additional simulation results. In section~\ref{s:a_TCGA}, we provide additional details on TCGA data application. In section~\ref{s:DMMDi}, we describe iterative DMMD. In section~\ref{s:largeK}, we describe generalization of DMMD to more than two views. In section~\ref{s:tensor}, we illustrate difficulties with interpretation of joint structures in Tucker decomposition.
\end{abstract}

\spacingset{1.5} 

\section{Technical proofs}
\label{s:proofs}

\begin{proof}[Proof of Lemma 1]
We will prove a more general version of the lemma for the case $K \geq 2$. We only prove the decomposition with respect to column spaces, as similar proof can be applied to row spaces by transposing the matrices. This proof follows the proof of Lemma 1 in \citet{AJIVE}, but fills in more details.\\
$\bf{Existence}:$\\
Let $\mathbf{D} = \cap_{j=1}^K{\mathcal{C}(\mathbf{A}_j)} $ and choose $\mathbf{b}_1,\cdots,\mathbf{b}_r \quad (r \leq p)$ to be a basis of $\mathbf{D}$. Construct $\mathbf{J} = \left[\mathbf{b}_1,\cdots,\mathbf{b}_r \right]$, and  its projection matrix $\mathbf{P}_\mathbf{J}$. For every $k\in \{1,2,\cdots,K\}$, let $\mathbf{J}_k = \mathbf{P}_\mathbf{J}\mathbf{A}_k$ and $\mathbf{I}_k = (\mathbf{Id} - \mathbf{P}_\mathbf{J})\mathbf{A}_k$.  We next show that $\{\mathbf{J}_1,\cdots,\mathbf{J}_K\}$ and $\{\mathbf{I}_1,\cdots,\mathbf{I}_K\}$ satisfy the conditions of the lemma. By construction, $\mathcal{C}(\mathbf{J}) \subset\mathcal{C}(\mathbf{A}_k)$ and  $\mathcal{C}(\mathbf{J}) \perp \mathcal{C}(\mathbf{I}_k)$ are satisfied. Next we prove $\mathcal{C}(\mathbf{J}) \subset \mathcal{C}(\mathbf{J}_k), \quad \forall{k} \in \{1,2,\cdots,K\}$. For $\forall{\mathbf{v}} \in \mathcal{C}(\mathbf{J})$, $\exists{\mathbf{u}_1}$, such that $\mathbf{v} = \mathbf{J}\mathbf{u}_1$. Since $\mathcal{C}(\mathbf{J}) = \cap_{j=1}^K{\mathcal{C}(\mathbf{A}_j)}$, $\exists{\mathbf{u}_2}$ so that $\mathbf{v} = \mathbf{A}_k\mathbf{u}_2$. Now we have $\mathbf{v} = \mathbf{A}_k\mathbf{u}_2 = \mathbf{J}\mathbf{u}_1 \Rightarrow \mathbf{J}\mathbf{u}_1 = \mathbf{J}_k\mathbf{u}_2 + \mathbf{I}_k\mathbf{u}_2$. Because $\mathcal{C}(\mathbf{J}_k) \subset \mathcal{C}(\mathbf{J})$, we have $\mathbf{I}_k\mathbf{u}_2 \in \mathcal{C}(\mathbf{J})$, but $\mathcal{C}(\mathbf{J}) \perp \mathcal{C}(\mathbf{I}_k)$, then $\mathbf{I}_k\mathbf{u}_2 = \mathbf{0}$. As a result, $\mathbf{v} = \mathbf{A}_k\mathbf{u}_2 = \mathbf{J}_k\mathbf{u}_2 \in \mathcal{C}(\mathbf{J}_k)$, and thus $\mathcal{C}(\mathbf{J})= \mathcal{C}(\mathbf{J}_k)$. 
Finally, we prove  $\cap_{j=1}^K{\mathcal{C}(\mathbf{I}_j)} = \{\mathbf{0}\}$. Let $\mathbf{b} \in \cap_{j=1}^K{\mathcal{C}(\mathbf{I}_j)}$, then $\mathbf{b} \perp \mathcal{C}(\mathbf{J})$. At the same time, $\forall{k} \in \{1,2,\cdots,K\},\exists{\mathbf{x}_k} \in \mathcal{C}(\mathbf{J})$, ${\mathbf{y}_k} \in \mathcal{C}(\mathbf{A}_k)$ such that $\mathbf{b} = \mathbf{y}_k - \mathbf{x}_k$, which means $\mathbf{b} \in \mathcal{C}(\mathbf{A}_k)$. Then we have $\mathbf{b} \in \cap_{k=1}^K{\mathcal{C}(\mathbf{A}_k)} = \mathcal{C}(\mathbf{J})$. This means that $\mathbf{b} = \mathbf{0}$.\\
$\bf{Uniqueness}:$\\
First we prove that under the conditions of the lemma, for any $k \in \{1,2,\cdots,K\}$, $\mathcal{C}(\mathbf{J}_k) = \mathcal{C}(\mathbf{J}) = \cap_{j=1}^K{\mathcal{C}(\mathbf{A}_j)}$. Since $\mathcal{C}(\mathbf{A}_k) = \mathcal{C}(\mathbf{J}) +  \mathcal{C}(\mathbf{I}_k) $ and $\mathcal{C}(\mathbf{J}) \cap \mathcal{C}(\mathbf{I}_k) = \{\mathbf{0}\}$, we have $\mathcal{C}(\mathbf{A}_k) = \mathcal{C}(\mathbf{J}) \oplus  \mathcal{C}(\mathbf{I}_k) $. Then for any $ \mathbf{v} \in \cap_{j=1}^K{\mathcal{C}(\mathbf{A}_j)}$, there exists unique $\mathbf{u}_k \in \mathcal{C}(\mathbf{J})$ and $\mathbf{w}_k \in \mathcal{C}(\mathbf{I}_k)$ such that $\mathbf{v} = \mathbf{u}_k + \mathbf{w}_k$ for $k \in \{1,2,\cdots,K\}$. Take $k \leq K-1$, then we have $\mathbf{v} = \mathbf{u}_k + \mathbf{w}_k$ and $\mathbf{v} = \mathbf{u}_{k+1} + \mathbf{w}_{k+1}$, which means $\mathbf{u}_k - \mathbf{u}_{k+1} = \mathbf{w}_{k+1} - \mathbf{w}_k \in \mathcal{C}(\mathbf{J})$. From the perpendicularity condition, we have $\mathbf{w}_{k} = \mathbf{w}_{k+1}$. Then $\mathbf{w}_1 = \mathbf{w}_2 = \cdots = \mathbf{w}_K = \mathbf{w}$. Since $\cap_{j=1}^K{\mathcal{C}(\mathbf{I}_j)} = \{\mathbf{0}\}$, we then have $\mathbf{w} = \mathbf{0}$. Thus we conclude that $\mathbf{v} = \mathbf{u}_k \in \mathcal{C}(\mathbf{J})$, which means that $\cap_{j=1}^K{\mathcal{C}(\mathbf{A}_j)} \subset \mathcal{C}(\mathbf{J})$. Because $\mathcal{C}(\mathbf{J}) \subset\mathcal{C}(\mathbf{A}_k)$, we have $ \mathcal{C}(\mathbf{J})\subset \cap_{j=1}^K{\mathcal{C}(\mathbf{A}_j)}$. So $ \mathcal{C}(\mathbf{J}) =  \cap_{j=1}^K{\mathcal{C}(\mathbf{A}_j)}$.\\
Now suppose for any $k \in \{1,2,\cdots,K\}$ we have $\mathbf{A}_k = \mathbf{J}_k + \mathbf{I}_k$ and $\mathbf{A}_k = \mathbf{\Tilde{J}}_k + \mathbf{\Tilde{I}}_k$. For each column of matrix $\mathbf{A}_k$, say $\mathbf{a}$, we have $\mathbf{a} = \mathbf{b} + \mathbf{c} = \mathbf{b}' + \mathbf{c}'$, where $\mathbf{b},\mathbf{b}',\mathbf{c}$ and $\mathbf{c}'$ are the corresponding columns of $\mathbf{J}_k,\mathbf{\tilde{J}}_k,\mathbf{I}_k$ and $\mathbf{\Tilde{I}}_k$. Then $\mathbf{b} - \mathbf{b}' = \mathbf{c}' - \mathbf{c}$, and $(\mathbf{b} - \mathbf{b}')^T(\mathbf{b} - \mathbf{b}') = (\mathbf{b} - \mathbf{b}')^T(\mathbf{c}' - \mathbf{c}) = \mathbf{0}$, since both $\mathbf{b}$ and $\mathbf{b}'$ belong to  $\cap_{j=1}^K{\mathcal{C}(\mathbf{A}_j)} = \mathcal{C}(\mathbf{J})$, $\mathcal{C}(\mathbf{J}) \perp \mathcal{C}(\mathbf{I}_k)$ and $\mathcal{C}(\mathbf{J}) \perp \mathcal{C}(\mathbf{\Tilde{I}}_k)$. Thus we have $\mathbf{b}' = \mathbf{b}$ and $\mathbf{c}' = \mathbf{c}$, and this is true for all the columns and all $k$. We conclude that $\mathbf{J}_k = \mathbf{\Tilde{J}}_k, \quad \mathbf{I}_k = \mathbf{\Tilde{I}}_k$.
\end{proof}

\begin{proof}[Proof of Lemma 2]
First we show that $\mathbf{A}^* = \mathbf{M}\mathbf{M}^T\mathbf{X} + \mathbf{R}\mathbf{R}^T\mathbf{X}$ is feasible. Since the columns of $\mathbf{R}$ are the first $r - r_c$ left singular vectors of $(\mathbf{Id} - \mathbf{M}\mathbf{M}^T)\mathbf{X}$, $\mathbf{M}$ is orthogonal to $\mathbf{R}$. Therefore $\mathcal{C}(\mathbf{M}) \subset \mathcal{C}(\mathbf{A}^*)$ because $\mathbf{M}\mathbf{M}^T\mathbf{X}$ is of rank $r_c$. We then need to show that $\rank(\mathbf{A}^*) = r$, for which it suffices to show that $\rank(\mathbf{R}\mathbf{R}^T\mathbf{X})=r-r_c$. Consider the full SVD of $(\mathbf{Id} - \mathbf{M}\mathbf{M}^T)\mathbf{X} = \mathbf{R}\mathbf{D}_1\mathbf{V}_1^T + \mathbf{U}_2\mathbf{D}_2\mathbf{V}_2^T$, where $\mathbf{D_1}$ has the largest $r-r_c$ singular values. Then we have $\mathbf{R}\mathbf{R}^T\mathbf{X} = \mathbf{R}\mathbf{R}^T\{\mathbf{M}\mathbf{M}^T\mathbf{X} + (\mathbf{Id} - \mathbf{M}\mathbf{M}^T)\mathbf{X}\} = \mathbf{R}\mathbf{R}^T(\mathbf{Id} - \mathbf{M}\mathbf{M}^T)\mathbf{X}= \mathbf{R}\mathbf{R}^T(\mathbf{R}\mathbf{D}_1\mathbf{V}_1^T + \mathbf{U}_2\mathbf{D}_2\mathbf{V}_2^T) =\mathbf{R}\mathbf{D}_1\mathbf{V}_1^T$. Since $\rank(\mathbf{M}\mathbf{M}^T\mathbf{X}) + \rank\{(\mathbf{Id} - \mathbf{M}\mathbf{M}^T)\mathbf{X}\} \geq \rank\{\mathbf{M}\mathbf{M}^T\mathbf{X} + (\mathbf{Id} - \mathbf{M}\mathbf{M}^T)\mathbf{X}\} = \rank(\mathbf{X}) = r$, we know that $ \rank\{(\mathbf{Id} - \mathbf{M}\mathbf{M}^T)\mathbf{X}\} \geq r-r_c$. So $\mathbf{R}\mathbf{D}_1\mathbf{V}_1^T$ is of rank $r-r_c$, which means that $\rank(\mathbf{R}\mathbf{R}^T\mathbf{X})$ is indeed $r-r_c$. 

Next we show that for any feasible $\mathbf{A}_0$, $\|\mathbf{X} - \mathbf{A}^*\|_{F}^2 \leq \|\mathbf{X} - \mathbf{A}_0\|_{F}^2$. Since $\mathbf{A}_0$ is feasible, we can find $\mathbf{R}_0$ such that the column space of $\mathbf{A}_0$ is the same as the column space of an orthonormal matrix $[\mathbf{M},\mathbf{R}_0]$ with $\rank(\mathbf{R}_0) = r - r_c$. Consider the following optimization problem
\begin{align}
    &\min_{\mathbf{A}}{\|\mathbf{X} - \mathbf{A}\|_{F}^2} \label{eq:2} \\
    & \mbox{such that}\quad \mathbf{A} \in \{\mathbf{A} \in \mathbb{R}^{n \times p}|\mathcal{C}(\mathbf{A}) = \mathcal{C}([\mathbf{M},\mathbf{R}_0])\}.\nonumber
\end{align}

Notice that $\{\mathbf{A} \in \mathbb{R}^{n \times p}|\mathcal{C}(\mathbf{A}) = \mathcal{C}([\mathbf{M},\mathbf{R}_0])\}$ is a closed subspace. From classical projection theorem, \eqref{eq:2} has a unique solution $\mathbf{A}^*_0 = \mathbf{M}\mathbf{M}^T\mathbf{X} + \mathbf{R}_0\mathbf{R}_0^T\mathbf{X}$. Hence, we only need to show that $\|\mathbf{X} - \mathbf{A}^*\|_{F}^2 \leq \|\mathbf{X} - \mathbf{A}^*_0\|_{F}^2$ because $\|\mathbf{X} - \mathbf{A}_0^*\|_{F}^2 \leq \|\mathbf{X} - \mathbf{A}_0\|_{F}^2$ for any $\mathbf{A}_0$ with $\mathcal{C}(\mathbf{A}_0) = \mathcal{C}([\mathbf{M},\mathbf{R}_0])$. Furthermore, we have 
\begin{align}
    \|\mathbf{X} - \mathbf{A}^*_0\|_{F}^2 = \|(\mathbf{Id} - \mathbf{M}\mathbf{M}^T)\mathbf{X} - \mathbf{R}_0\mathbf{R}_0^T\mathbf{X}\|_{F}^2. \label{eq:3}
\end{align}

From Lemma~\ref{l:s2}, we know that \eqref{eq:3} is minimized when the columns of $\mathbf{R}_0$ are the first $r - r_c$ left singular vectors of $(\mathbf{Id} - \mathbf{M}\mathbf{M}^T)\mathbf{X}$. 
This means that $\|\mathbf{X} - \mathbf{A}^*\|_{F}^2 \leq \|\mathbf{X} - \mathbf{A}^*_0\|_{F}^2$. Thus $\mathbf{A}^* = \mathbf{M}\mathbf{M}^T\mathbf{X} + \mathbf{R}\mathbf{R}^T\mathbf{X}$ is the global solution to (3).

Furthermore, if $(r - r_c)$-th singular value does not equal to $(r - r_c + 1)$-th singular value of matrix $(\mathbf{Id} - \mathbf{M}\mathbf{M}^T)\mathbf{X}$, then $\mathbf{A}^*$ is unique global solution. Let $\mathbf{A}_1$ be another global solution. Then we can find $\mathbf{R}_1$ such that the column space of $\mathbf{A}_1$ is the same as the column space of an orthonormal matrix $[\mathbf{M},\mathbf{R}_1]$ with $\rank(\mathbf{R}_1) = r - r_c$. Consider the following optimization problem
\begin{align}
    &\min_{\mathbf{A}}{\|\mathbf{X} - \mathbf{A}\|_{F}^2} \label{eq:4} \\
    & \mbox{such that}\quad \mathbf{A} \in \{\mathbf{A} \in \mathbb{R}^{n \times p}|\mathcal{C}(\mathbf{A}) = \mathcal{C}([\mathbf{M},\mathbf{R}_1])\}.\nonumber
\end{align}

Because $\mathbf{A}_1$ minimizes (3), it also minimizes optimization problem \eqref{eq:4}. From classical projection theorem, $\mathbf{A}_1 = \mathbf{M}\mathbf{M}^T\mathbf{X} + \mathbf{R}_1\mathbf{R}_1^T\mathbf{X}$ is the unique solution to~\eqref{eq:4}. Now we have 
\begin{align}
    \|\mathbf{X} - \mathbf{A}_1\|_{F}^2 = \|(\mathbf{Id} - \mathbf{M}\mathbf{M}^T)\mathbf{X} - \mathbf{R}_1\mathbf{R}_1^T\mathbf{X}\|_{F}^2 \label{eq:5}
\end{align}

From Lemma~\ref{l:s2}, we know that $\mathbf{R}^*_1\mathbf{R}_1^{*T}\mathbf{X}$ is unique, where $\mathbf{R}^*_1$ is the solution to \eqref{eq:5}, and the objective value is minimized when $\mathbf{R}^*_1 = \mathbf{R}$, which means that $\mathbf{A}_1 = \mathbf{A}^*$.
\end{proof}

\begin{proof}[Proof of Proposition 1]
For simplicity, we ignore the subscript $k$ in the proof. From Lemma~\ref{l:s1}, the objective function at iteration $t$ of Algorithm 1 is $L^{(t)} = \|\MBX - \widetilde{\MBM}^{(t)}\widetilde{\MBM}^{(t)T}\MBX\widetilde{\MBN}^{(t)}\widetilde{\MBN}^{(t)T}\|^2_F$, and
\begin{align}
L^{(t)} &= L(\mathbf{R}^{(t)},\mathbf{S}^{(t)}) \nonumber \\
&= \|(\mathbf{M}\mathbf{M}^T + \mathbf{R}^{(t)}\mathbf{R}^{(t)T})\mathbf{X}(\mathbf{Id} - \mathbf{N}\mathbf{N}^T - \mathbf{S}^{(t)}\mathbf{S}^{(t)T})\|^2_F + \| (\mathbf{Id} - \mathbf{M}\mathbf{M}^T - \mathbf{R}^{(t)}\mathbf{R}^{(t)T})\mathbf{X} \|^2_F.
\end{align}
From Lemma~\ref{l:s3},  update $\mathbf{S}^{(t+1)}$ with fixed $\mathbf{R}^{(t)}$ corresponds to the solution of the following optimization problem:
\begin{align}
\mathbf{S}^{(t+1)} &= \argmin_{\mathbf{S}}{L(\mathbf{R^{(t)}},\mathbf{S})} \nonumber \\ 
&= \argmin_{\mathbf{S}}{\|(\mathbf{M}\mathbf{M}^T + \mathbf{R}^{(t)}\mathbf{R}^{(t)T})\mathbf{X}(\mathbf{Id} - \mathbf{N}\mathbf{N}^T) - (\mathbf{M}\mathbf{M}^T + \mathbf{R}^{(t)}\mathbf{R}^{(t)T})\mathbf{X}\mathbf{S}\mathbf{S}^T)\|^2_F}
\end{align}
On the other hand,
\begin{align}
L^{(t)} &= L(\mathbf{R}^{(t)},\mathbf{S}^{(t)}) \nonumber \\
&= \|(\mathbf{Id} - \mathbf{M}\mathbf{M}^T - \mathbf{R}^{(t)}\mathbf{R}^{(t)T})\mathbf{X}(\mathbf{N}\mathbf{N}^T + \mathbf{S}^{(t)}\mathbf{S}^{(t)T})\|^2_F + \| \mathbf{X}(\mathbf{Id} - \mathbf{N}\mathbf{N}^T - \mathbf{S}^{(t)}\mathbf{S}^{(t)T}) \|^2_F
\end{align}
From Lemma~\ref{l:s2}, the update $\mathbf{R}^{(t+1)}$ corresponds to the solution of the following optimization problem:
\begin{align}
\mathbf{R}^{(t+1)} &=  \argmin_{\mathbf{R}}{L(\mathbf{R},\mathbf{S^{(t+1)}})} \nonumber \\ 
&= \argmin_{\mathbf{R}}{\|(\mathbf{Id} - \mathbf{M}\mathbf{M}^T)\mathbf{X}(\MBN\MBN^T + \MBS^{(t+1)}\MBS^{(t+1)T}) - \mathbf{R}\mathbf{R}^T\mathbf{X}(\MBN\MBN^T + \MBS^{(t+1)}\MBS^{(t+1)T})\|^2_F}   
\end{align}
Therefore, we have
$$
L^{(t)} = L(\mathbf{R}^{(t)},\mathbf{S}^{(t)}) \geq L(\mathbf{R}^{(t)},\mathbf{S}^{(t+1)}) \geq L(\mathbf{R}^{(t+1)},\mathbf{S}^{(t+1)}) = L^{(t+1)}.
$$
Since the objective function value is awlays bounded by zero, this means that the sequence $L^{(t)}$ is guaranteed to converge.
\end{proof}

\section{Additional lemmas}
\label{s:lemmas}
\begin{lemma}\label{l:s1}
Given $\MBX\in \R^{n\times p}$, $\MBM\in \R^{n\times r}$ and $\MBN\in \R^{p\times r}$ with $r$ orthonormal columns and $r$ with $0 \leq r \leq \min(n, p$). Consider
\begin{equation}\label{eq:simpler-add}
    \min_{\MBA \in \R^{n\times p}}{\|\MBX - \MBA\|_{F}^2}
     \quad \mbox{such that}\quad \mathcal{C}(\MBA) = \mathcal{C}(\MBM), \quad
     \mathcal{R}(\MBA) = \mathcal{C}(\MBN), \quad \rank(\MBA) = r. 
\end{equation}
Let $\MBA^* = \mathbf{M}\mathbf{M}^T\mathbf{X}\mathbf{N}\mathbf{N}^T$. If $\rank(\mathbf{M}\mathbf{M}^T\mathbf{X}\mathbf{N}\mathbf{N}^T) = r$, then $\MBA^{*}$ is the unique global minimizer.
\end{lemma}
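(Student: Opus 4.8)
The plan is to parametrize the feasible set by a single $r\times r$ matrix and reduce to an unconstrained least-squares problem. First I would note that any feasible $\MBA$ can be written as $\MBA = \MBM\MBC\MBN^T$ for some $\MBC\in\R^{r\times r}$: the constraint $\mathcal{C}(\MBA)=\mathcal{C}(\MBM)$ forces $\MBA = \MBM\MBB$ with $\MBB = \MBM^T\MBA\in\R^{r\times p}$, and since $\MBM$ has full column rank $\mathcal{R}(\MBA)=\mathcal{R}(\MBB)$, so $\mathcal{R}(\MBB)=\mathcal{C}(\MBN)$ forces $\MBB = \MBC\MBN^T$. Conversely, any $\MBM\MBC\MBN^T$ with $\rank(\MBC)=r$ is feasible. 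Because $\MBM,\MBN$ have orthonormal columns, $\rank(\MBM\MBM^T\MBX\MBN\MBN^T)=\rank(\MBM^T\MBX\MBN)$, so the hypothesis guarantees $\rank(\MBM^T\MBX\MBN)=r$, whence $\MBA^* = \MBM(\MBM^T\MBX\MBN)\MBN^T$ is itself feasible (and the degenerate case $r=0$ is trivial, as then $\MB0=\MBA^*$ is the only feasible matrix).

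Next I would use a Pythagorean splitting. Writing $\MBP_M=\MBM\MBM^T$ and $\MBP_N=\MBN\MBN^T$, decompose $\MBX$ into the four pieces $\MBP_M\MBX\MBP_N$, $\MBP_M\MBX(\MBId-\MBP_N)$, $(\MBId-\MBP_M)\MBX\MBP_N$, and $(\MBId-\MBP_M)\MBX(\MBId-\MBP_N)$, which are pairwise orthogonal under the Frobenius inner product $\langle\MBU,\MBV\rangle=\operatorname{tr}(\MBU^T\MBV)$ (each cross term vanishes because $(\MBId-\MBP_M)\MBP_M=\MB0$ or $\MBP_N(\MBId-\MBP_N)=\MB0$). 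Since $\MBM\MBC\MBN^T=\MBP_M\MBM\MBC\MBN^T\MBP_N$ lives in the same block as $\MBP_M\MBX\MBP_N$, subtracting it only perturbs that block, so $\|\MBX-\MBM\MBC\MBN^T\|_F^2 = \|\MBP_M\MBX\MBP_N-\MBM\MBC\MBN^T\|_F^2 + c$, where $c = \|\MBP_M\MBX(\MBId-\MBP_N)\|_F^2 + \|(\MBId-\MBP_M)\MBX\MBP_N\|_F^2 + \|(\MBId-\MBP_M)\MBX(\MBId-\MBP_N)\|_F^2$ does not depend on $\MBC$. Using orthonormality of the columns of $\MBM$ and $\MBN$, $\|\MBP_M\MBX\MBP_N-\MBM\MBC\MBN^T\|_F^2 = \|\MBM(\MBM^T\MBX\MBN-\MBC)\MBN^T\|_F^2 = \|\MBM^T\MBX\MBN-\MBC\|_F^2$.

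Finally, this is a strictly convex quadratic in $\MBC$ with unique minimizer $\MBC=\MBM^T\MBX\MBN$, corresponding to $\MBA=\MBM\MBM^T\MBX\MBN\MBN^T=\MBA^*$, which by the first step is feasible under the rank hypothesis. Hence $\MBA^*$ is the unique global minimizer over all feasible $\MBA$. I do not expect a genuine obstacle here; the only point needing a line of care is that the feasible set (with \emph{exact} rank $r$ and \emph{exact} subspace equalities) is not closed, but this is harmless because the minimizer of the relaxed problem over all $\MBC\in\R^{r\times r}$ already has $\rank(\MBM^T\MBX\MBN)=r$ and therefore already lies in the feasible set.
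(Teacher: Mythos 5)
Your proposal is correct and follows essentially the same route as the paper's proof: both rest on the observation that any feasible $\MBA$ satisfies $\MBM\MBM^T\MBA\MBN\MBN^T=\MBA$, the orthogonal (Pythagorean) splitting of $\MBX$ into the four projection blocks, and the conclusion that the objective reduces to $\|\MBM\MBM^T\MBX\MBN\MBN^T-\MBA\|_F^2$ plus a constant, with the rank hypothesis ensuring the minimizer is feasible and unique. Your explicit parametrization by $\MBC=\MBM^T\MBX\MBN$ and the remarks on the $r=0$ case and non-closedness of the feasible set are harmless refinements of the same argument.
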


\begin{proof}[Proof of Lemma~\ref{l:s1}]
For every feasible $\mathbf{A}$, $\mathbf{M}\mathbf{M}^T\mathbf{A}\mathbf{N}\mathbf{N}^T = \mathbf{A}$ holds. Thus the objective can be written as 
\begin{align}
\|\mathbf{X} - \mathbf{A}\|_{F}^2 &= \|\mathbf{X} - \mathbf{M}\mathbf{M}^T\mathbf{A}\mathbf{N}\mathbf{N}^T\|_{F}^2 \nonumber\\
&= \|\mathbf{M}\mathbf{M}^T(\mathbf{X} - \mathbf{A})\mathbf{N}\mathbf{N}^T + 
(\mathbf{Id} - \mathbf{M}\mathbf{M}^T)\mathbf{X}\mathbf{N}\mathbf{N}^T + \nonumber \\
& \quad \mathbf{M}\mathbf{M}^T\mathbf{X}(\mathbf{Id} - \mathbf{N}\mathbf{N}^T) + 
(\mathbf{Id} - \mathbf{M}\mathbf{M}^T)\mathbf{X}(\mathbf{Id} - \mathbf{N}\mathbf{N}^T)\|_{F}^2. \nonumber\\
&= \|\mathbf{M}\mathbf{M}^T(\mathbf{X} - \mathbf{A})\mathbf{N}\mathbf{N}^T\|^2_F + \text{constant} \nonumber \\
&= \|\mathbf{M}\mathbf{M}^T\mathbf{X}\mathbf{N}\mathbf{N}^T - \mathbf{A}\|^2_F + \text{constant}. \nonumber
\end{align}
The above Frobenius norm is minimized when $\mathbf{A} = \mathbf{M}\mathbf{M}^T\mathbf{X}\mathbf{N}\mathbf{N}^T$. Because we assume that $\rank(\mathbf{M}\mathbf{M}^T\mathbf{X}\mathbf{N}\mathbf{N}^T) = r$, the solution is feasible and unique.
\end{proof}

\begin{lemma}\label{l:s2}
Given $\MBW \in \R^{n\times p}$ and $\MBM\in \R^{n\times r_c}$ with $r_c$ orthonormal columns with $0 \leq r_c \leq r \leq \min(n, p)$,  consider
\begin{equation}\label{eq:simpler2-add}
\min_{\MBR \in \R^{n\times (r-r_c)}}{\|(\mathbf{Id} - \mathbf{M}\mathbf{M}^T)\mathbf{W} - \mathbf{R}\mathbf{R}^T\mathbf{W}\|^2_F}
     \quad \mbox{such that}\quad \MBR^T\MBR = \MBId.
\end{equation}
If $(\mathbf{Id} - \mathbf{M}\mathbf{M}^T)\mathbf{W}$ is of rank at least $r-r_c$, then one optimal $\MBR^*$ is the first $r - r_c$ columns of left singular vectors of $(\mathbf{Id} - \mathbf{M}\mathbf{M}^T)\mathbf{W}$. Furthermore, if $(r - r_c)$-th singular value does not equal to $(r - r_c + 1)$-th singular value of matrix $(\mathbf{Id} - \mathbf{M}\mathbf{M}^T)\mathbf{W}$, then $\mathbf{R}^*\mathbf{R}^{*T}\mathbf{W}$ is unique.
\end{lemma}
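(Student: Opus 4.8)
The plan is to reduce the problem~\eqref{eq:simpler2-add} to a standard best rank-$(r-r_c)$ projection problem. Write $P = \MBM\MBM^T$ and $\MBW_\perp = (\MBId - \MBM\MBM^T)\MBW$, and for a feasible $\MBR$ let $Q = \MBR\MBR^T$ be the orthogonal projector onto $\mathcal{C}(\MBR)$. Decomposing $\MBW = P\MBW + \MBW_\perp$ gives
\[
\MBW_\perp - Q\MBW = (\MBId - Q)\MBW_\perp - QP\MBW .
\]
Since $(\MBId - Q)Q = \MB0$, the column space of $(\MBId - Q)\MBW_\perp$ lies in $\mathcal{C}(\MBR)^\perp$ while that of $QP\MBW$ lies in $\mathcal{C}(\MBR)$, so the two terms are Frobenius-orthogonal and
\[
\|\MBW_\perp - Q\MBW\|_F^2 \;=\; \|(\MBId - Q)\MBW_\perp\|_F^2 + \|QP\MBW\|_F^2 \;\ge\; \|(\MBId - Q)\MBW_\perp\|_F^2 .
\]
For \emph{every} feasible $\MBR$ this lower-bounds the objective by the value one would obtain after discarding the $\mathcal{C}(\MBM)$-component of $\MBR$.

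Next I would apply the classical Ky Fan / Eckart--Young result: over rank-$(r-r_c)$ orthogonal projectors $Q$,
\[
\|(\MBId - Q)\MBW_\perp\|_F^2 = \|\MBW_\perp\|_F^2 - \operatorname{tr}\!\big(Q\MBW_\perp\MBW_\perp^T\big)
\]
is minimized when $Q$ projects onto the span of the top $r-r_c$ left singular vectors of $\MBW_\perp$, with minimum $\sum_{i > r - r_c}\sigma_i(\MBW_\perp)^2$. Because $\rank(\MBW_\perp)\ge r-r_c$, these singular values are positive and the associated left singular vectors lie in $\mathcal{C}(\MBW_\perp)\subseteq \mathcal{C}(\MBM)^\perp$; denoting by $\MBR^*$ the matrix of these vectors and $Q^* = \MBR^*\MBR^{*T}$, we get $\MBR^{*T}\MBM = \MB0$, hence $Q^*P\MBW = \MBR^*\MBR^{*T}\MBM\MBM^T\MBW = \MB0$. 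Thus the extra term vanishes at $\MBR^*$, so $\|\MBW_\perp - \MBR^*\MBR^{*T}\MBW\|_F^2 = \sum_{i>r-r_c}\sigma_i(\MBW_\perp)^2$, matching the lower bound and proving optimality.

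For uniqueness of $\MBR^*\MBR^{*T}\MBW$ I would trace the equality conditions. If $\MBR_1$ is any minimizer with projector $Q_1$, both inequalities above must be tight, so $\|Q_1 P\MBW\|_F = 0$ and $Q_1$ maximizes $\operatorname{tr}(Q\MBW_\perp\MBW_\perp^T)$ over rank-$(r-r_c)$ projectors. When $\sigma_{r-r_c}(\MBW_\perp) > \sigma_{r-r_c+1}(\MBW_\perp)$, the maximizing $(r-r_c)$-dimensional dominant eigenspace of $\MBW_\perp\MBW_\perp^T$ is unique, which forces $Q_1 = \MBR^*\MBR^{*T}$. Hence $\MBR_1\MBR_1^T\MBW = Q_1\MBW = \MBR^*\MBR^{*T}\MBW$, even though $\MBR^*$ itself is only determined up to an orthogonal rotation of its columns.

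The delicate points are: (i) the cross-term in the first step vanishes purely because $(\MBId - Q)Q = \MB0$, with \emph{no} need to assume $\MBR$ is orthogonal to $\MBM$; and (ii) the candidate $\MBR^*$ actually attains the lower bound, which hinges on $\MBR^{*T}\MBM = \MB0$ — this is precisely where the hypothesis $\rank(\MBW_\perp)\ge r-r_c$ enters, since it guarantees the relevant left singular vectors correspond to nonzero singular values and therefore sit in $\mathcal{C}(\MBM)^\perp$. The remaining ingredients, Ky Fan's principle for the projector subproblem and its strict-gap uniqueness, are standard.
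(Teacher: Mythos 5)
Your proof is correct. It reaches the same conclusion by a slightly different route than the paper: the paper relaxes the feasible set $\{\MBR\MBR^T\MBW : \MBR^T\MBR=\MBId\}$ to all matrices of rank at most $r-r_c$, applies Eckart--Young--Mirsky to $(\MBId-\MBM\MBM^T)\MBW$, and then verifies that the truncated-SVD optimum is attained by the feasible choice $\MBR^*=\MBU_1$. You instead keep the projector parametrization $Q=\MBR\MBR^T$ throughout, split the residual as $(\MBId-Q)\MBW_\perp - QP\MBW$ into two Frobenius-orthogonal pieces, and invoke Ky Fan's trace-maximization principle for the projector subproblem. Both arguments pivot on the same essential fact --- the leading left singular vectors of $(\MBId-\MBM\MBM^T)\MBW$ belong to $\mathcal{C}(\MBM)^{\perp}$ precisely because their singular values are positive (which is where $\rank(\MBW_\perp)\ge r-r_c$ is used), so the candidate is feasible and attains the lower bound --- and both derive uniqueness of $\MBR^*\MBR^{*T}\MBW$ from the gap $\sigma_{r-r_c}>\sigma_{r-r_c+1}$. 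What your version buys: the explicit term $\|QP\MBW\|_F^2$ quantifies the penalty for any $\MBR$ whose column space leans into $\mathcal{C}(\MBM)$, a point the paper's relaxation argument leaves implicit, and your equality-case analysis actually pins down the optimal projector $Q_1$ itself rather than only the product $Q_1\MBW$, which is marginally stronger than what the lemma asserts.
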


\begin{proof}[Proof of Lemma~\ref{l:s2}]
Consider change of variables: $\MBA = \mathbf{R}\mathbf{R}^T\mathbf{W} \in \mathbb{R}^{n \times p}$ with $\rank(\MBA) \leq \rank(\MBR) = r-r_c$. By Eckart-Young-Mirsky theorem, the Frobenius norm in~\eqref{eq:simpler2-add} is minimized when $\MBA$ is the rank-$(r-r_c)$ SVD approximation of $(\mathbf{Id} - \mathbf{M}\mathbf{M}^T)\mathbf{W}$, that is $\mathbf{A}^* = \mathbf{U}_1\mathbf{D}_1\mathbf{V}_1^T$, where $\MBU_1 \in \mathbb{R}^{n \times (r-r_c)}$ are left singular vectors corresponding to $r-r_c$ largest singular values so that the full SVD is $(\mathbf{Id} - \mathbf{M}\mathbf{M}^T)\mathbf{W} = \mathbf{U}_1\mathbf{D}_1\mathbf{V}_1^T + \mathbf{U}_2\mathbf{D}_2\mathbf{V}_2^T$. 

Next we show that choosing feasible $\MBR^* = \MBU_1$ leads to $\mathbf{R}^*\mathbf{R}^{*T}\mathbf{W} = \mathbf{U}_1\mathbf{D}_1\mathbf{V}_1^T$. Since $\mathbf{R}^{*T}\mathbf{M} = \mathbf{0}$ as $\MBU_1$ are singular vectors of $(\mathbf{Id} - \mathbf{M}\mathbf{M}^T)\mathbf{W}$, it follows that $\mathbf{R}^*\mathbf{R}^{*T}\mathbf{W} = \mathbf{R}^*\mathbf{R}^{*T}\{\mathbf{M}\mathbf{M}^T\mathbf{W} + (\mathbf{Id} - \mathbf{M}\mathbf{M}^T)\mathbf{W}\} = \mathbf{R}^*\mathbf{R}^{*T}(\mathbf{Id} - \mathbf{M}\mathbf{M}^T)\mathbf{W}= \mathbf{R}^*\mathbf{R}^{*T}(\mathbf{U}\mathbf{D}_1\mathbf{V}_1^T + \mathbf{U}_2\mathbf{D}_2\mathbf{V}_2^T) =\mathbf{U}_1\mathbf{D}_1\mathbf{V}_1^T$. Thus the Frobenius norm objective value reaches its minimum when $\MBR^* = \MBU_1$. Furthermore, if $(r - r_c)$-th singular value does not equal to $(r - r_c + 1)$-th singular value of matrix $(\mathbf{Id} - \mathbf{M}\mathbf{M}^T)\mathbf{W}$, then by Eckart-Young-Mirsky theorem, the solution $\mathbf{A}^* = \mathbf{U}_1\mathbf{D}_1\mathbf{V}_1^T$ is unique. So $\mathbf{A}^* = \mathbf{R}^*\mathbf{R}^{*T}\mathbf{W}$ is unique.
\end{proof}


\begin{lemma}\label{l:s3}
Given $\MBW \in \R^{n\times p}$ and $\MBN\in \R^{p\times r_r}$ with $r_r$ orthonormal columns with $0 \leq r_r \leq r \leq \min(n, p)$,  consider
\begin{equation}\label{eq:simpler3-add}
\min_{\MBS \in \R^{p\times (r-r_r)}}{\|\mathbf{W}(\mathbf{Id} - \mathbf{N}\mathbf{N}^T) - \mathbf{W}\mathbf{S}\mathbf{S}^T\|^2_F}
     \quad \mbox{such that}\quad \MBS^T\MBS = \MBId.
\end{equation}
If $\mathbf{W}(\mathbf{Id} - \mathbf{N}\mathbf{N}^T)$ is of rank at least $r-r_r$, then one optimal $\MBS^*$ is the first $r - r_r$ columns of right singular vectors of $\mathbf{W}(\mathbf{Id} - \mathbf{N}\mathbf{N}^T)$. Furthermore, if $(r - r_r)$-th singular value does not equal to $(r - r_r + 1)$-th singular value of matrix $\mathbf{W}(\mathbf{Id} - \mathbf{N}\mathbf{N}^T)$, then $\mathbf{W}\mathbf{S}^*\mathbf{S}^{*T}$ is unique.
\end{lemma}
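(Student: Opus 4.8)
The plan is to reduce Lemma~\ref{l:s3} to the already-established Lemma~\ref{l:s2} by transposing the matrices. First I would use that $\MBId - \MBN\MBN^T$ is symmetric and that $\|\cdot\|_F$ is transposition-invariant to rewrite the objective of~\eqref{eq:simpler3-add} as
\[
\|\MBW(\MBId - \MBN\MBN^T) - \MBW\MBS\MBS^T\|_F^2 = \|(\MBId - \MBN\MBN^T)\MBW^T - \MBS\MBS^T\MBW^T\|_F^2,
\]
while the constraint $\MBS^T\MBS = \MBId$ is untouched. Thus~\eqref{eq:simpler3-add} is precisely~\eqref{eq:simpler2-add} after the identifications $\MBW \mapsto \MBW^T$, $\MBM \mapsto \MBN$, $\MBR \mapsto \MBS$ and $r_c \mapsto r_r$ (with the roles of $n$ and $p$ swapped).

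Next I would verify that the hypotheses of Lemma~\ref{l:s2} hold for the transposed problem: $\rank\big((\MBId - \MBN\MBN^T)\MBW^T\big) = \rank\big(\MBW(\MBId - \MBN\MBN^T)\big) \ge r - r_r$, and $0 \le r_r \le r \le \min(n,p)$ is exactly the assumed range. Lemma~\ref{l:s2} then yields a minimizer $\MBS^*$ given by the first $r - r_r$ left singular vectors of $(\MBId - \MBN\MBN^T)\MBW^T = \big(\MBW(\MBId - \MBN\MBN^T)\big)^T$. Since the left singular vectors of a matrix are the right singular vectors of its transpose, $\MBS^*$ consists of the first $r - r_r$ right singular vectors of $\MBW(\MBId - \MBN\MBN^T)$, which is the claim. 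For the uniqueness part, a matrix and its transpose have identical singular values, so the gap between the $(r-r_r)$-th and $(r-r_r+1)$-th singular values of $(\MBId - \MBN\MBN^T)\MBW^T$ equals that of $\MBW(\MBId - \MBN\MBN^T)$; under this gap condition Lemma~\ref{l:s2} gives that $\MBS^*\MBS^{*T}\MBW^T$ is unique, and transposing shows $\MBW\MBS^*\MBS^{*T} = \big(\MBS^*\MBS^{*T}\MBW^T\big)^T$ is unique as well.

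Alternatively, I could give a direct proof paralleling that of Lemma~\ref{l:s2}: substitute $\MBA = \MBW\MBS\MBS^T$ (so $\rank(\MBA) \le r - r_r$), apply the Eckart-Young-Mirsky theorem to see that the optimal $\MBA$ is the rank-$(r - r_r)$ truncated SVD of $\MBW(\MBId - \MBN\MBN^T)$, and then check that taking $\MBS^*$ to be the leading right singular vectors is feasible and attains this value, using $\mathcal{R}\big(\MBW(\MBId - \MBN\MBN^T)\big) \perp \mathcal{C}(\MBN)$ so that $\MBW\MBS^*\MBS^{*T} = \MBW(\MBId-\MBN\MBN^T)\MBS^*\MBS^{*T}$ collapses to the truncated SVD. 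Either way, the statement carries no real analytic content; the only thing to be careful about is the bookkeeping of the transpose — that it interchanges left and right singular vectors and leaves the rank and singular-value-gap assumptions intact. I expect that bookkeeping to be the only (minor) obstacle.
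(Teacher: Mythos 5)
Your proposal is correct and matches the paper, whose entire proof of this lemma is the single sentence that it is ``analogous to the proof of Lemma~\ref{l:s2}''; your transposition reduction is simply the rigorous form of that analogy, and your sketched direct argument is the analogy carried out explicitly. The bookkeeping you flag (transpose swaps left and right singular vectors, preserves ranks and singular values, and the symmetry of $\MBN\MBN^T$ and $\MBS\MBS^T$ keeps the objective and constraint in the right form) is all that is needed, and you have it right.
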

\begin{proof}[Proof of Lemma~\ref{l:s3}]
The proof is analogous to the proof of Lemma~\ref{l:s2}.
\end{proof}

\begin{lemma}\label{l:s4}
Given $\MBX_k \in \R^{n\times p}$, $k=1, 2$, and $\MBR_k \in \R^{n\times (r_k - r_c)}$, $\widetilde \MBN_k = [\MBN, \MBS_k] \in \R^{p \times r_k}$ with orthonormal columns, consider
\begin{align}
    \minimize_{\MBA_k \in \R^{n\times p}}&{\sum_{k=1}^2\|\MBX_k - \MBA_k\|_{F}^2} \label{eq:opt_joint}\\
    \mbox{such that}&\quad \mathcal{R}(\MBA_k) = \mathcal{C}(\widetilde \MBN_k),\nonumber\\
    &\quad \mathcal{C}(\MBA_k) = \mathcal{C}(\widetilde \MBM_k), \quad \widetilde \MBM_k = [\MBM, \MBR_k], \quad \widetilde{\MBM}^\top_k\widetilde{\MBM}_k = \MBId,  \quad k = 1,2.\nonumber
\end{align}
Let
$
\MBY = [(\mathbf{Id} - \MBR_1\MBR_1^\top)\MBX_1\widetilde{\MBN}_1\widetilde{\MBN}^\top_1,(\mathbf{Id} - \MBR_2\MBR_2^\top)\MBX_2\widetilde{\MBN}_2\widetilde{\MBN}^\top_2],
$
and let $\MBM^*$ be the matrix containing first $r_c$ left singular vectors of $\MBY$. Let $\widetilde{\MBM}^*_k = [\MBM^*,\MBR_k]$. Then if $\MBA^*_k = \widetilde{\MBM}^*_k\widetilde{\MBM}^{*\top}_k\MBX_k\widetilde{\MBN}_k\widetilde{\MBN}^\top_k$ has rank $r_k$, it is the global minimizer of \eqref{eq:opt_joint}.
\end{lemma}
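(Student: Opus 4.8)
The plan is to minimize out $\mathbf{A}_1,\mathbf{A}_2$ first, reducing \eqref{eq:opt_joint} to a single eigenvalue problem for the shared factor $\mathbf{M}$, and then to invoke the Ky Fan maximum principle. First I would fix any admissible $\mathbf{M}$, i.e.\ $\mathbf{M}$ with $r_c$ orthonormal columns and $\mathbf{M}^\top\mathbf{R}_k=\mathbf{0}$, so that $\widetilde{\mathbf{M}}_k=[\mathbf{M},\mathbf{R}_k]$ has orthonormal columns. For each $k$ the feasible set for $\mathbf{A}_k$ is exactly $\{\mathbf{A}_k:\mathcal{C}(\mathbf{A}_k)=\mathcal{C}(\widetilde{\mathbf{M}}_k),\ \mathcal{R}(\mathbf{A}_k)=\mathcal{C}(\widetilde{\mathbf{N}}_k),\ \rank(\mathbf{A}_k)=r_k\}$, so Lemma~\ref{l:s1} (and its proof) applies with its ``$\mathbf{M}$'' being $\widetilde{\mathbf{M}}_k$, its ``$\mathbf{N}$'' being $\widetilde{\mathbf{N}}_k$, and $r=r_k$: the infimum of $\|\mathbf{X}_k-\mathbf{A}_k\|_F^2$ over this set equals $\|\mathbf{X}_k-\widetilde{\mathbf{M}}_k\widetilde{\mathbf{M}}_k^\top\mathbf{X}_k\widetilde{\mathbf{N}}_k\widetilde{\mathbf{N}}_k^\top\|_F^2$ and is attained precisely when $\widetilde{\mathbf{M}}_k\widetilde{\mathbf{M}}_k^\top\mathbf{X}_k\widetilde{\mathbf{N}}_k\widetilde{\mathbf{N}}_k^\top$ has rank $r_k$. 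Hence it suffices to show that $\mathbf{M}^*$ minimizes $g(\mathbf{M}):=\sum_{k=1}^2\|\mathbf{X}_k-(\mathbf{M}\mathbf{M}^\top+\mathbf{R}_k\mathbf{R}_k^\top)\mathbf{X}_k\widetilde{\mathbf{N}}_k\widetilde{\mathbf{N}}_k^\top\|_F^2$ over admissible $\mathbf{M}$, using $\widetilde{\mathbf{M}}_k\widetilde{\mathbf{M}}_k^\top=\mathbf{M}\mathbf{M}^\top+\mathbf{R}_k\mathbf{R}_k^\top$.

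Next I would simplify $g$. Splitting $\mathbf{X}_k=\mathbf{X}_k\widetilde{\mathbf{N}}_k\widetilde{\mathbf{N}}_k^\top+\mathbf{X}_k(\mathbf{Id}-\widetilde{\mathbf{N}}_k\widetilde{\mathbf{N}}_k^\top)$ yields a Pythagorean decomposition in Frobenius norm (the second summand has rows orthogonal to $\mathcal{C}(\widetilde{\mathbf{N}}_k)$ and does not depend on $\mathbf{M}$), and since $\mathbf{M}^\top\mathbf{R}_k=\mathbf{0}$ we have $\mathbf{Id}-\mathbf{M}\mathbf{M}^\top-\mathbf{R}_k\mathbf{R}_k^\top=(\mathbf{Id}-\mathbf{M}\mathbf{M}^\top)(\mathbf{Id}-\mathbf{R}_k\mathbf{R}_k^\top)$, so
\begin{equation*}
g(\mathbf{M}) = c + \sum_{k=1}^2\|(\mathbf{Id}-\mathbf{M}\mathbf{M}^\top)(\mathbf{Id}-\mathbf{R}_k\mathbf{R}_k^\top)\mathbf{X}_k\widetilde{\mathbf{N}}_k\widetilde{\mathbf{N}}_k^\top\|_F^2 = c + \|(\mathbf{Id}-\mathbf{M}\mathbf{M}^\top)\mathbf{Y}\|_F^2,
\end{equation*}
where $c$ is independent of $\mathbf{M}$ and the last step just collects the two blocks defining $\mathbf{Y}$ (the squared Frobenius norm of a block row is the sum of the squared norms of the blocks). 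Since $\mathbf{M}^\top\mathbf{M}=\mathbf{Id}$, $\|(\mathbf{Id}-\mathbf{M}\mathbf{M}^\top)\mathbf{Y}\|_F^2=\|\mathbf{Y}\|_F^2-\operatorname{tr}(\mathbf{M}^\top\mathbf{Y}\mathbf{Y}^\top\mathbf{M})$, so minimizing $g$ is equivalent to maximizing $\operatorname{tr}(\mathbf{M}^\top\mathbf{Y}\mathbf{Y}^\top\mathbf{M})$ over $n\times r_c$ matrices with orthonormal columns.

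Then I would apply the Ky Fan / Rayleigh--Ritz maximum principle: $\operatorname{tr}(\mathbf{M}^\top\mathbf{Y}\mathbf{Y}^\top\mathbf{M})$ is maximized by taking the columns of $\mathbf{M}$ to be $r_c$ leading eigenvectors of $\mathbf{Y}\mathbf{Y}^\top$, equivalently the $r_c$ leading left singular vectors of $\mathbf{Y}$, which is $\mathbf{M}^*$. Combining with the first paragraph: for every feasible $(\mathbf{M},\mathbf{A}_1,\mathbf{A}_2)$ the objective of \eqref{eq:opt_joint} is at least $c+\|(\mathbf{Id}-\mathbf{M}^*\mathbf{M}^{*\top})\mathbf{Y}\|_F^2$, and this lower bound is attained by $(\mathbf{M}^*,\mathbf{A}_1^*,\mathbf{A}_2^*)$ with $\mathbf{A}_k^*=\widetilde{\mathbf{M}}_k^*\widetilde{\mathbf{M}}_k^{*\top}\mathbf{X}_k\widetilde{\mathbf{N}}_k\widetilde{\mathbf{N}}_k^\top$ as soon as each $\mathbf{A}_k^*$ has rank $r_k$, which is the hypothesis. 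Hence $(\mathbf{A}_1^*,\mathbf{A}_2^*)$ is a global minimizer.

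The delicate step, and the one I expect to be the main obstacle, is admissibility of $\mathbf{M}^*$: the entire reduction needs $[\mathbf{M},\mathbf{R}_k]$ to have orthonormal columns for \emph{both} $k$, i.e.\ $\mathbf{M}\perp\mathcal{C}([\mathbf{R}_1,\mathbf{R}_2])$, so strictly the maximization of $\operatorname{tr}(\mathbf{M}^\top\mathbf{Y}\mathbf{Y}^\top\mathbf{M})$ should be over orthonormal $\mathbf{M}$ whose columns lie in $\mathcal{C}([\mathbf{R}_1,\mathbf{R}_2])^\perp$. One must therefore check that the unconstrained maximizer (the leading left singular vectors of $\mathbf{Y}$) already lies in that subspace, which uses the concrete way $\mathbf{R}_k$ and $\widetilde{\mathbf{N}}_k$ are produced inside the DMMD-i iteration; if this should fail, the fix is to replace $\mathbf{Y}$ by $(\mathbf{Id}-\mathbf{P}_{[\mathbf{R}_1,\mathbf{R}_2]})\mathbf{Y}$ throughout and rerun the same argument on the orthogonal complement. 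A minor but necessary point, inherited from Lemma~\ref{l:s1}, is that the per-$\mathbf{M}$ lower bound is attained only when the projected matrix has full rank $r_k$; this is exactly why the rank hypothesis on $\mathbf{A}_k^*$ cannot be dropped.
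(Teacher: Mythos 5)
Your argument is essentially the paper's own proof: both reduce \eqref{eq:opt_joint} to an optimization over $\MBM$ by plugging in the per-$\MBM$ optimal $\MBA_k=\widetilde{\MBM}_k\widetilde{\MBM}_k^\top\MBX_k\widetilde{\MBN}_k\widetilde{\MBN}_k^\top$ from Lemma~\ref{l:s1}, use $\MBM^\top\MBR_k=\MB0$ to rewrite the objective as $\|(\MBId-\MBM\MBM^\top)\MBY\|_F^2$ plus a constant, and identify the optimizer with the leading left singular vectors of $\MBY$ (you via the Ky Fan trace maximum, the paper via Eckart--Young--Mirsky; these are the same fact here). The one substantive point is the admissibility issue you flag at the end: it is genuine, and the paper's proof does not address it either --- the final minimization there is taken over all orthonormal $\MBM$, silently dropping the constraint $\MBM\perp\mathcal{C}([\MBR_1,\MBR_2])$, and since each block of $\MBY$ is orthogonal only to its own $\MBR_k$, the leading left singular vectors of $\MBY$ need not be orthogonal to both $\MBR_1$ and $\MBR_2$; your proposed repair (projecting $\MBY$ onto $\mathcal{C}([\MBR_1,\MBR_2])^{\perp}$ before taking the SVD, or verifying admissibility from how $\MBR_k$, $\widetilde\MBN_k$ arise in the DMMD-i iteration) is exactly what is needed to make the statement airtight.
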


\begin{proof}[Proof of Lemma~\ref{l:s4}]
According to Lemma~\ref{l:s1}, for optimal $\MBM$, the solution is
$$
\MBA_k = \widetilde{\MBM}_k\widetilde{\MBM}^\top_k\MBX_k\widetilde{\MBN}_k\widetilde{\MBN}^\top_k.
$$
Thus the minimization in \eqref{eq:opt_joint} can be rewritten as
\begin{align}
\label{eq:opt_joint_sim}
    \minimize_{\MBM \in \R^{n\times r_c}}&\{\|\MBX_1 - \widetilde{\MBM}_1\widetilde{\MBM}^\top_1\MBX_1\widetilde{\MBN}_1\widetilde{\MBN}^\top_1\|_{F}^2 + \|\MBX_2 - \widetilde{\MBM}_2\widetilde{\MBM}^\top_2\MBX_2\widetilde{\MBN}_2\widetilde{\MBN}^\top_2\|_{F}^2\}\\
    \mbox{such that} &\quad \quad \widetilde \MBM_k = [\MBM, \MBR_k], \quad \widetilde{\MBM}^\top_k\widetilde{\MBM}_k = \MBId,  \quad k = 1,2.\nonumber
\end{align}
Observe that
\begin{align}
\|\MBX_1 - \widetilde{\MBM}_1\widetilde{\MBM}^\top_1\MBX_1\widetilde{\MBN}_1\widetilde{\MBN}^\top_1\|_{F}^2 
&= \|\MBX_1\widetilde{\MBN}_1\widetilde{\MBN}^\top_1 - \widetilde{\MBM}_1\widetilde{\MBM}^\top_1\MBX_1\widetilde{\MBN}_1\widetilde{\MBN}^\top_1\|_{F}^2 + \text{constant} \nonumber\\
&= \|(\mathbf{Id} - \widetilde{\MBM}_1\widetilde{\MBM}^\top_1)\MBX_1\widetilde{\MBN}_1\widetilde{\MBN}^\top_1\|_{F}^2 + \text{constant} \nonumber \\
&= \|(\mathbf{Id} - \MBM\MBM^\top - \MBR_1\MBR_1^\top)\MBX_1\widetilde{\MBN}_1\widetilde{\MBN}^\top_1\|_{F}^2 + \text{constant}. \nonumber 
\end{align}
This means we only need to minimize
$$
\|(\mathbf{Id} - \MBM\MBM^\top - \MBR_1\MBR_1^\top)\MBX_1\widetilde{\MBN}_1\widetilde{\MBN}^\top_1\|_{F}^2 + \|(\mathbf{Id} - \MBM\MBM^\top - \MBR_2\MBR_2^\top)\MBX_2\widetilde{\MBN}_2\widetilde{\MBN}^\top_2\|_{F}^2.
$$
Due to orthogonality of $\MBR_k$ and $\MBM$, this is equivalent to
$$
\sum_{k=1}^2\|(\mathbf{Id} - \MBR_k\MBR_k^\top)\MBX_k\widetilde{\MBN}_k\widetilde{\MBN}^\top_k - \MBM\MBM^\top(\mathbf{Id} - \MBR_k\MBR_k^\top)\MBX_k\widetilde{\MBN}_k\widetilde{\MBN}^\top_k\|_{F}^2.
$$
Let
$
\MBY = [(\mathbf{Id} - \MBR_1\MBR_1^\top)\MBX_1\widetilde{\MBN}_1\widetilde{\MBN}^\top_1,(\mathbf{Id} - \MBR_2\MBR_2^\top)\MBX_2\widetilde{\MBN}_2\widetilde{\MBN}^\top_2],
$
then the minimization can be equivalently written as
$$
\minimize_{\MBM \in \R^{n\times r_c}: \MBM^{\top}\MBM = \MBI}{\|\MBY - \MBM\MBM^\top\MBY\|_{F}^2}.
$$
By Eckart-Young-Mirsky theorem, the optimal $\MBM$ is the matrix of the first $r_c$ left singular vectors of $\MBY$.
\end{proof}

\section{Additional simulation results}
\label{s:add_simulation}

\subsection{Data generation details}
Given the sample size $n$, the number of features $p$, the total signal ranks $r_k \leq \min(n, p)$, $k=1,2$, the rank of joint column structure $r_c\leq \min(r_1, r_2)$ and the rank of joint row structure $r_r\leq \min(r_1, r_2)$, we generate the signal matrix $\MBA_k \in \mathbb{R}^{n \times p}$ according to  
$$
\MBA_k = (\MBF_k\MBQ_{1k})\MBD_k(\MBQ_{2k}\MBG_k)^T,\ k = 1,2;
$$
where
\begin{itemize}
    \item $\MBF_k \in \mathbb{R}^{n \times r_k}$ captures the column-space of $\MBA_k$ with columns being the standard bases in $\mathbb{R}^n$. We generate the bases of joint column space of $\MBA_1$ and $\MBA_2$ as the first $r_c$ columns of $\MBF_1$ and $\MBF_2$. These columns have $1$s in the positions sampled from $1,2,\cdots,\frac{n}{2}$ (without replacement). The bases of individual column space of $\MBA_1$ are the remaining $r_1 - r_c$ columns of $\MBF_1$ with positions of $1$s sampled from $\frac{n}{2}+1,\cdots,\frac{3n}{4}$. The individual column space of $\MBA_2$ is the span of the rest of the columns in $\MBF_2$ with positions of $1$s sampled from $\frac{3n}{4}+1,\cdots,n$. Thus, the joint column space is orthogonal to the individual space and the individual spaces have zero intersection. Figure~\ref{fig:datagen} shows an example of $\MBF_k$ with $n=8$, $r_1 = r_2 = 2$, $r_c = 1$. 
    \item $\MBG_k \in \mathbb{R}^{p \times r_k}$ captures the row space of $\MBA_k$ and is generated similarly to $\MBF_k$. Figure~\ref{fig:datagen} shows an example of $\MBG_k$ with $p=4$, $r_1 = r_2 = 2$, $r_r = 1$.
    \item $\MBQ_{1k} \in \mathbb{R}^{r_k \times r_k}$ is an orthogonal matrix. We first generate $\MBH_k\in \R^{r_k \times r_k}$ with independent entries from standard Gaussian distribution, and then set $\MBQ_{1k}$ = $\MBU_k$ from the SVD: $\MBH_k = \MBU_k\mathbf{\Sigma}_k\MBV_k^T$. $\MBQ_{2k}$ is generated similarly.
    \item $\MBD_k \in \R^{r_k \times r_k}$ is a diagonal matrix of singular values which are drawn independently from a uniform distribution on $[0.5, 1.5]$. To control the Frobenius norm of the signal matrix, we scale the singular values so that $\sum_{i=1}^{r_k}d_{kii}^2 = r_k$.
\end{itemize}

\begin{figure}[!t]
\centering
\[
\MBF_1 =
\setstretch{1.2}
 \bordermatrix{\text{} & 1 & 2\cr
    1&\tikzmark{left1} 0                  & \tikzmark{left2}0\cr
    2&                 0                  &                 0\cr
    3&                 1                  &                 0\cr
    4&                 0                  &                 0\cr
    5&                 0                  &                 0\cr
    6&                 0                  &                 1\cr
    7&                 0                  &                 0\cr
    8&                 0\tikzmark{right1} &                 0\tikzmark{right2}
}, 
\MBF_2 =
\setstretch{1.2}
 \bordermatrix{\text{} & 1 & 2\cr
    1&\tikzmark{left3} 0                  & \tikzmark{left4} 0\cr
    2&                 0                  &                  0\cr
    3&                 1                  &                  0\cr
    4&                 0                  &                  0\cr
    5&                 0                  &                  0\cr
    6&                 0                  &                  0\cr
    7&                 0                  &                  1\cr
    8&                 0 \tikzmark{right3}&                  0\tikzmark{right4}
},
\MBG_1 =
\setstretch{1.2}
 \bordermatrix{\text{} & 1 & 2\cr
    1&\tikzmark{left5} 1                   & \tikzmark{left6} 0\cr
    2&                 0                   &                  0\cr
    3&                 0                   &                  1\cr
    4&                 0 \tikzmark{right5} &                  0\tikzmark{right6}
}, 
\MBG_2 =
\setstretch{1.2}
 \bordermatrix{\text{} & 1 & 2\cr
    1&\tikzmark{left7} 1                   & \tikzmark{left8} 0\cr
    2&                 0                   &                  0\cr
    3&                 0                   &                  0\cr
    4&                 0 \tikzmark{right7} &                  1\tikzmark{right8}
}
\]

\DrawBox[thick, red, dotted ]{left1}{right1}{\textcolor{red}{\tiny$Joint$}}
\DrawBox[thick, blue]{left2}{right2}{\textcolor{blue}{\tiny$Ind1$}}
\DrawBox[thick, red, dotted ]{left3}{right3}{\textcolor{red}{\tiny$Joint$}}
\DrawBox[thick, green]{left4}{right4}{\textcolor{green}{\tiny$Ind2$}}

\DrawBox[thick, black, dotted ]{left5}{right5}{\textcolor{black}{\tiny$Joint$}}
\DrawBox[thick, orange]{left6}{right6}{\textcolor{orange}{\tiny$Ind1$}}
\DrawBox[thick, black, dotted ]{left7}{right7}{\textcolor{black}{\tiny$Joint$}}
\DrawBox[thick, purple]{left8}{right8}{\textcolor{purple}{\tiny$Ind2$}}
\caption{An example for $\MBF_1$, $\MBF_2$, $\MBG_1$, $\MBG_2$ when $n = 8, p = 4, r_1 = r_2 = 2, r_c = r_r = 1$}
\label{fig:datagen}
\end{figure}

\subsection{Rank estimation}
\label{s:Rankest-add}

\begin{figure}[!t]
\begin{subfigure}{0.5\textwidth}
\includegraphics[width=1\textwidth]{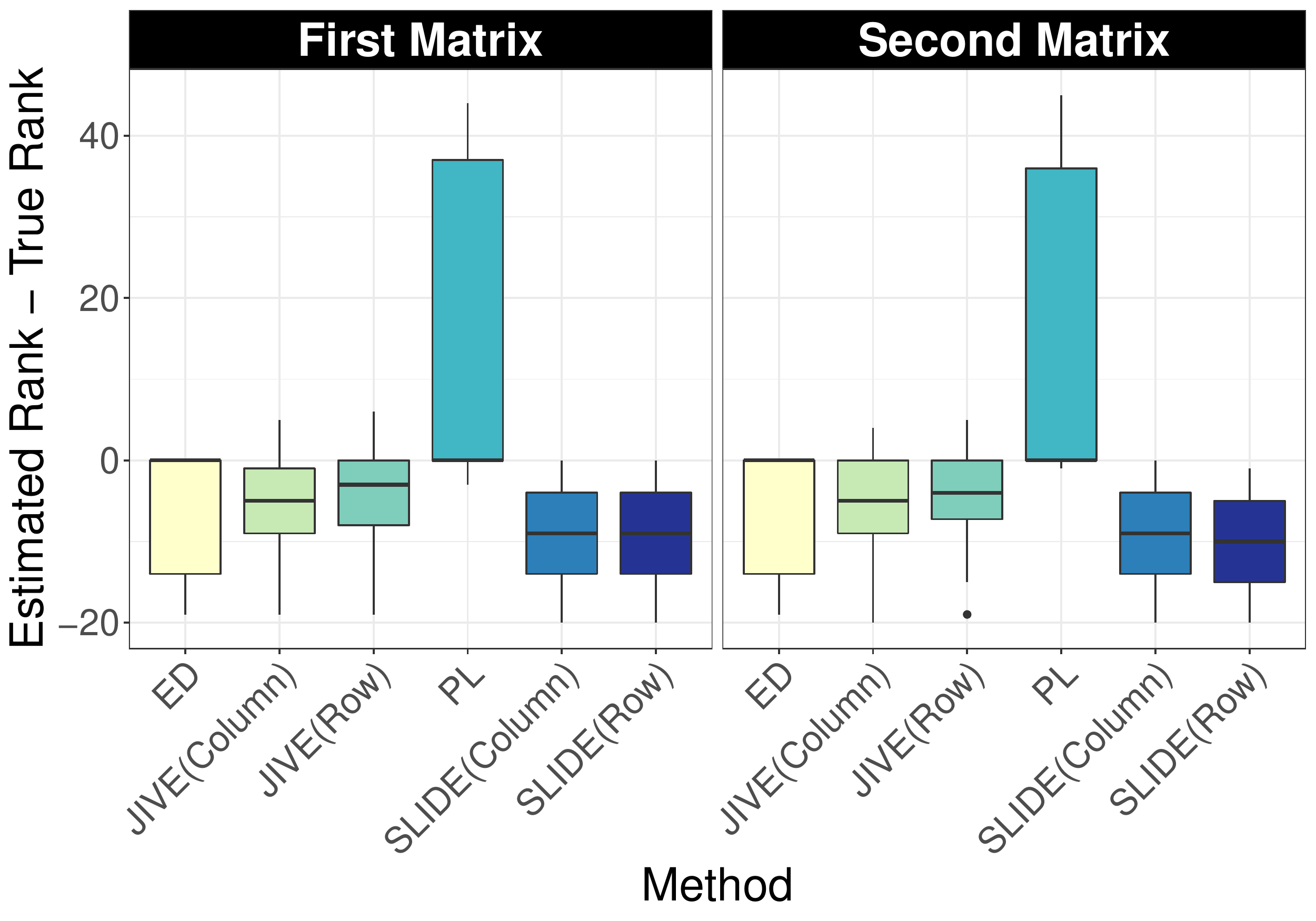}
\caption{Total Rank Estimation}
\label{fig:set2_total} 
\end{subfigure}%
\begin{subfigure}{0.5\textwidth}
\includegraphics[width=1\textwidth]{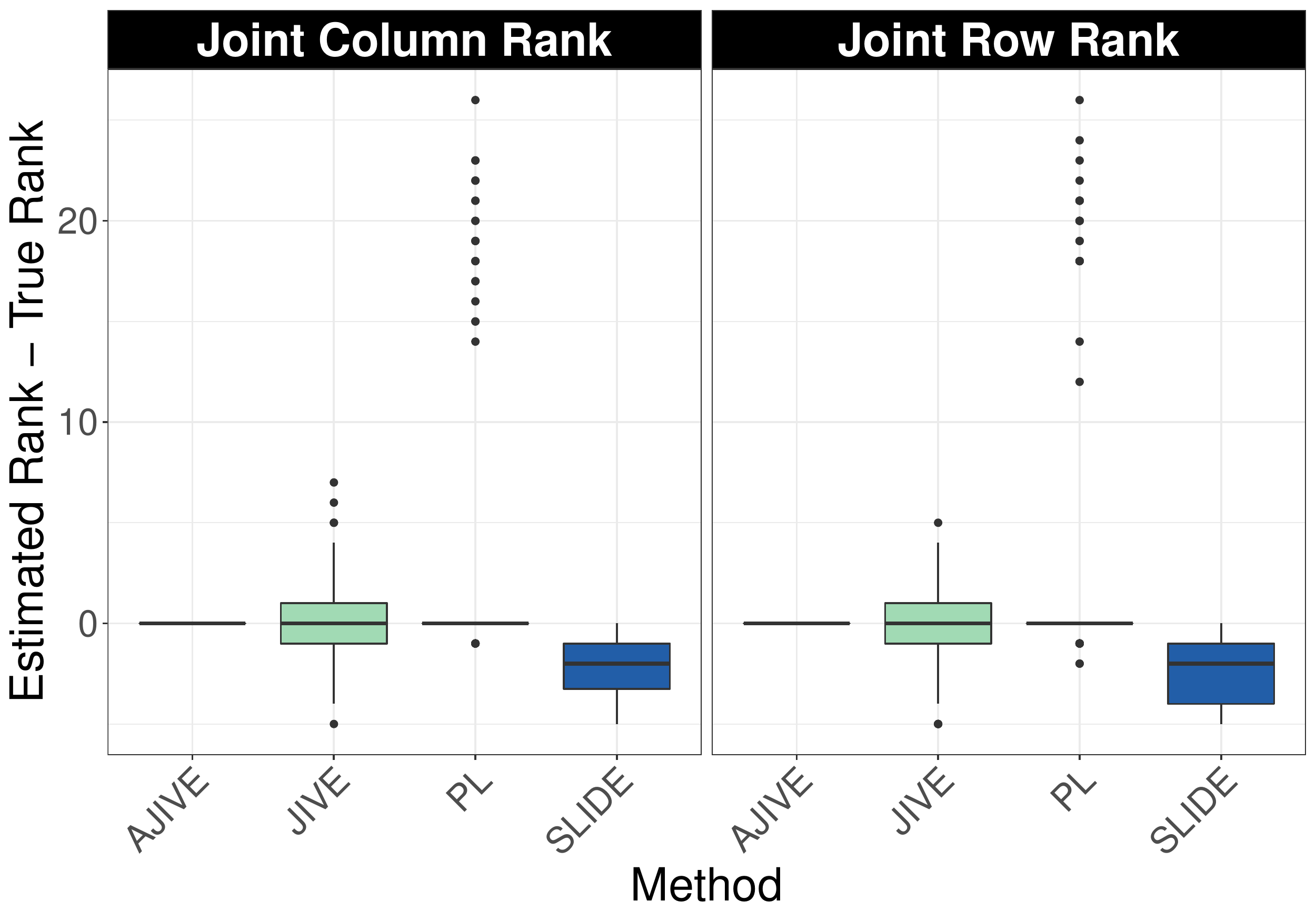}
\caption{Joint Rank Estimation}
\label{fig:set2_joint} 
\end{subfigure}
\caption{Comparison of Rank Estimation for Setting 2}
\end{figure}

Figure~\ref{fig:set2_total} shows the comparison of total rank estimation accuracy between the methods in Setting 2, the corresponding summary statistics are in Table~\ref{tab:total1_setting2}. Compared to Setting~1, the only difference is the lower SNR (changed from 1 to 0.5). JIVE and SLIDE tend to underestimate the ranks and also are not consistent (different ranks are estimated depending on whether the matching is done by rows or by columns). In most of the replications, PL and ED methods work better than JIVE and SLIDE, however they have higher variance. PL tends to overestimate the total rank, while ED tends to underestimate the total rank. The results of estimating joint ranks are shown in Figure~\ref{fig:set2_joint}, the corresponding summary statistics are in Table~\ref{tab:jointc_setting2}. The Wedin bound method used by AJIVE works perfectly, however it relies on the knowledge of true total ranks. PL method works better than JIVE and SLIDE in most settings, however occasionally it overestimates the joint rank. 

\begin{table}[!t]
\small
\caption{Total rank estimation errors, $\widehat r_k - r_k$, in Setting 2 across 140 replications.} \label{tab:total1_setting2}
\begin{center}
\begin{tabular}{l|lrrrrrr}
 & Metric & PL & ED & JIVE(row) & JIVE(col) & SLIDE(row) & SLIDE(col) \\\hline
1st matrix&Min & -3 & -19 & -19 & -19 & -20 & -20 \\
&1st quartile & 0 & -14 & -8 & -9 & -14 & -14 \\
&Median & 0 & 0 & -3 & -5 & -9 & -9\\
&Mean & 12.5 & -5.4 & -4.3 & -5.3 & -9.3 & -9.2\\
&3rd quartile & 37 & 0 & 0 & -1 & -4 & -4\\
&Max & 44 & 0 & 6 & 5 & 0 & 0\\
\hline
2nd matrix &Min & -1 & -19 & -19 & -20 & -20 & -20\\
&1st quartile & 0 & -14 & -7.3 & -9 & -15 & -15\\
&Median & 0 & 0 & -4 & -5 & -10 & -10\\
&Mean & 13.7 & -5.3 & -4.3 & -5.4 & -10.2 & -10.2\\
&3rd quartile & 36 & 0 & 0 & 0 & -5 & -5\\
&Max & 45 & 0 & 5 & 4 & -1 & 0\\
\end{tabular}
\normalsize
\end{center}
\end{table}

\begin{table}[!t]
\small
\caption{Joint rank estimation errors in Setting 2 across 140 replications.\label{tab:jointc_setting2}}
\begin{center}
\begin{tabular}{l|rrrrr}
&Metric & PL & JIVE & AJIVE & SLIDE\\\hline
$\widehat r_c - r_c$& Min & -1 & -5 & 0 & -5 \\
&1st quartile & 0 & -1 & 0 & -3.3 \\
&Median & 0 & 0 & 0 & -2 \\
&Mean & 2.8 & 0.3 & 0 & -2.5 \\
&3rd quartile & 0 & 1 & 0 & -1 \\
&Max & 26 & 7 & 0 & 0 \\
\hline
$\widehat r_r - r_r$& Min & -2 & -5 & 0 & -5 \\
&1st quartile & 0 & -1 & 0 & -4 \\
&Median & 0 & 0 & 0 & -2 \\
&Mean & 2.8 & -0.2 & 0 & -2.5 \\
&3rd quartile & 0 & 1 & 0 & -1 \\
&Max & 26 & 5 & 0 & 0 \\
\end{tabular}
\normalsize
\end{center}
\end{table}

\subsection{Signal identification}
\label{s:Signalid-add}
Figures~\ref{fig:set5_col} and~\ref{fig:set5_row} show relative errors of all methods in Setting 5 for estimated signals based on matched rows ($\MBJ_{ck},\MBI_{ck}$) and matched columns ($\MBJ_{rk},\MBI_{rk}$), respectively. The errors for total signal are the same for DMMD. In contrast, the errors for JIVE, AJIVE and SLIDE depend on matching (by rows or by columns) as it affects the estimated signal. For joint signals, DMMD and SLIDE perform similar, and are both more accurate than JIVE and AJIVE. DMMD has the smallest errors on full signals and individual signals in all scenarios, similar to the simulation results in Setting 1, confirming that taking into account double matching leads to more accurate signal estimation. 

\begin{figure}[!t]
\begin{subfigure}{0.5\textwidth}
\includegraphics[width=1\textwidth]{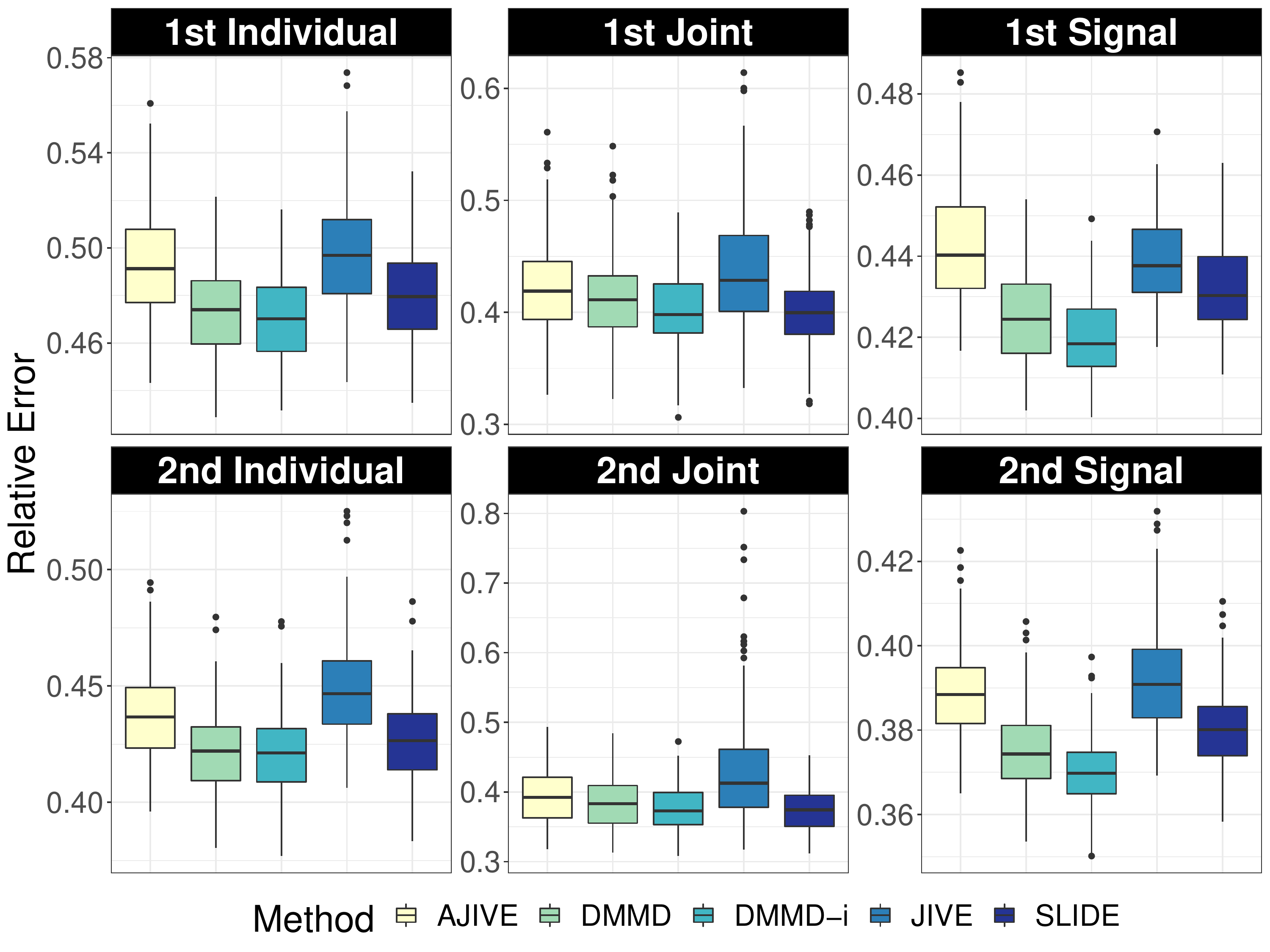}
\caption{Column space decomposition (matched rows)}
\label{fig:set5_col}
\end{subfigure}%
~
\begin{subfigure}{0.5\textwidth}
\includegraphics[width=1\textwidth]{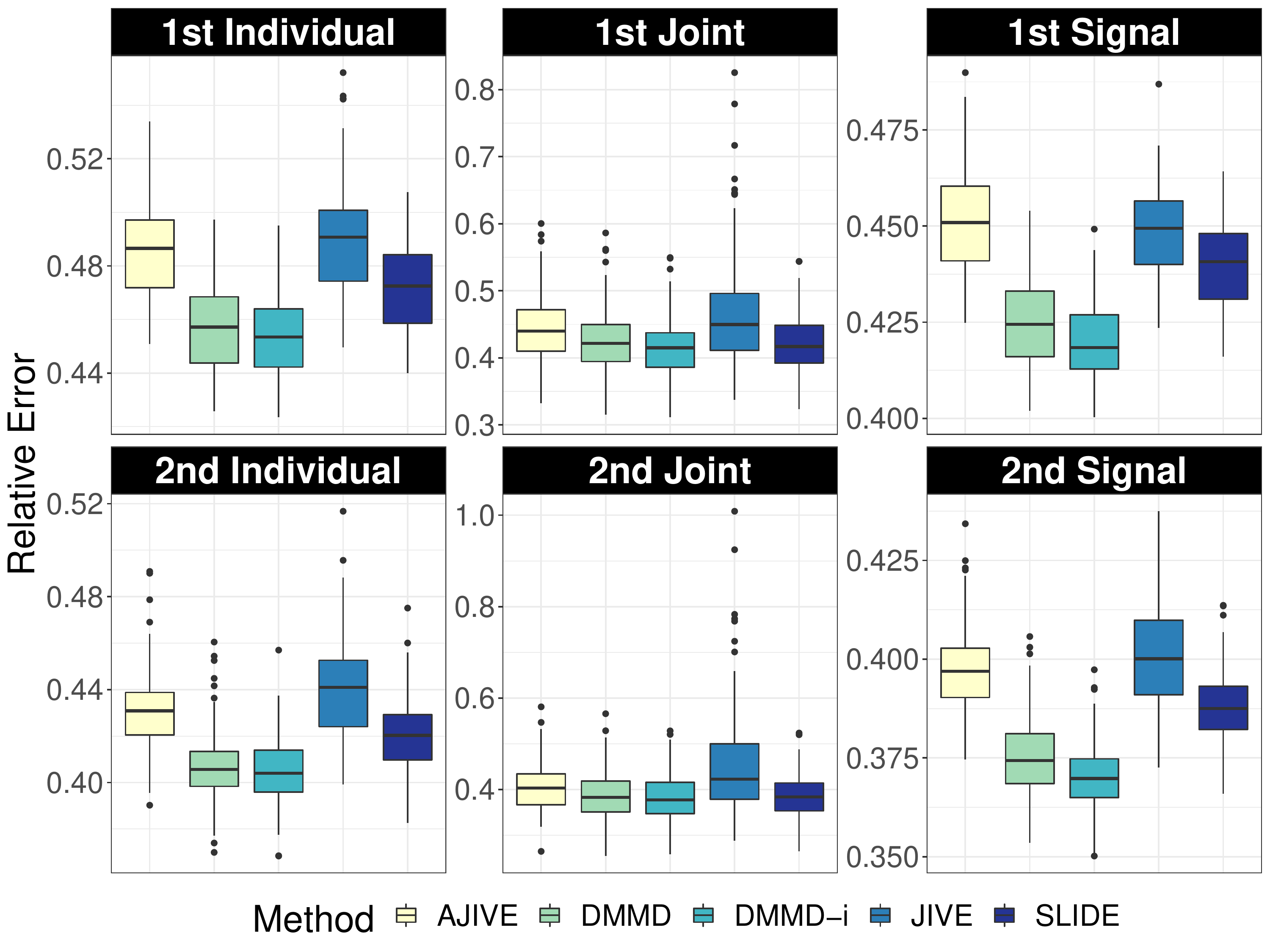}
\caption{Row space decomposition (matched columns)}
\label{fig:set5_row}
\end{subfigure}
\caption{Comparison of signal identification for Setting 5 over 140 replications, $n = 240,\ p = 200$, $r_1 = 20,\ r_2 = 18,\ r_c = 4,\ r_r = 3$, $\mbox{SNR} = 0.5$. }
\end{figure}

\subsection{High-dimensional simulation setting}
We create a high-dimensional simulation setting that mimics the TCGA data set in Section 4.1. We use $n = 88, p = 736, r_1 = 8, r_2 = 6, r_c = 0, r_r = 2, \text{SNR} = 1$ with 140 replications. We compare both performance in ranks estimation and in signal identification. 

The ranks estimation results are shown in Figures~\ref{fig:tcga_total}-\ref{fig:tcga_joint}. The results are similar to other settings. Edge distribution method works best on total rank estimation, however, it requires strong assumptions on the size of the underlying true ranks. These assumptions are satisfied in this simulation, but may not be satisfied in real data. Profile likelihood is second best, but sometimes slightly underestimating the total rank. JIVE works better in column direction rather than the row direction, and SLIDE has consistent bias due to underestimation. For joint rank estimation, AJIVE works the best, however it uses true total ranks, which are in generally uknown. The profile likelihood is slightly worse than AJIVE for joint row rank, but is better than all other methods.

The signal identification results are shown in Figures~\ref{fig:tcga_col}--\ref{fig:tcga_row}. All methods have similar performance on joint structure, with JIVE being worse on joint row structure compared to other methods. The proposed performs best on individual structure, and on total signal, with iterative DMMD being slightly better than DMMD. 

\begin{figure}[!t]
\centering
\begin{subfigure}{0.5\textwidth}
\includegraphics[width=1\linewidth]{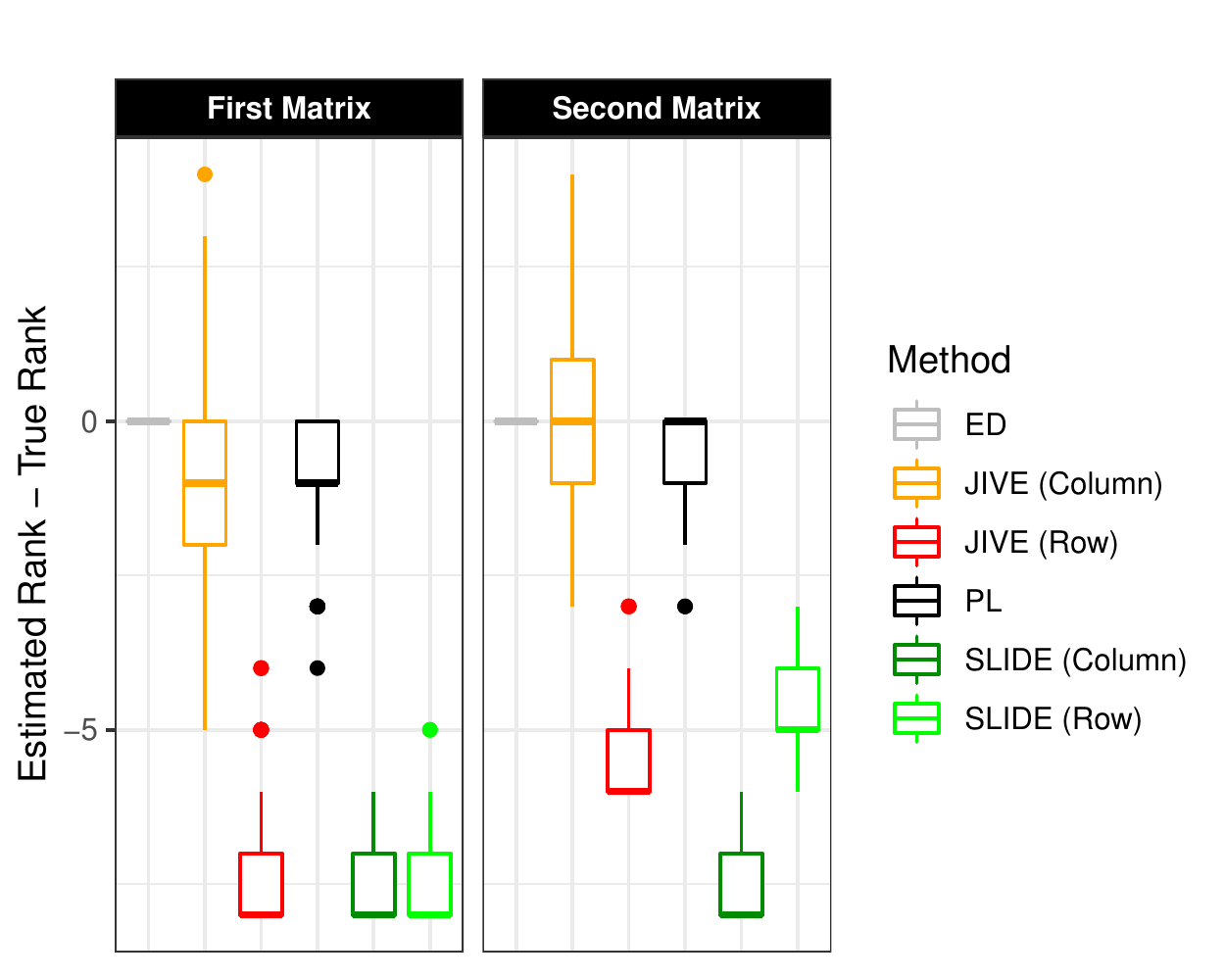}
\caption{Total rank estimation}
\label{fig:tcga_total}
\end{subfigure}%
\begin{subfigure}{0.5\textwidth}
\includegraphics[width=1\linewidth]{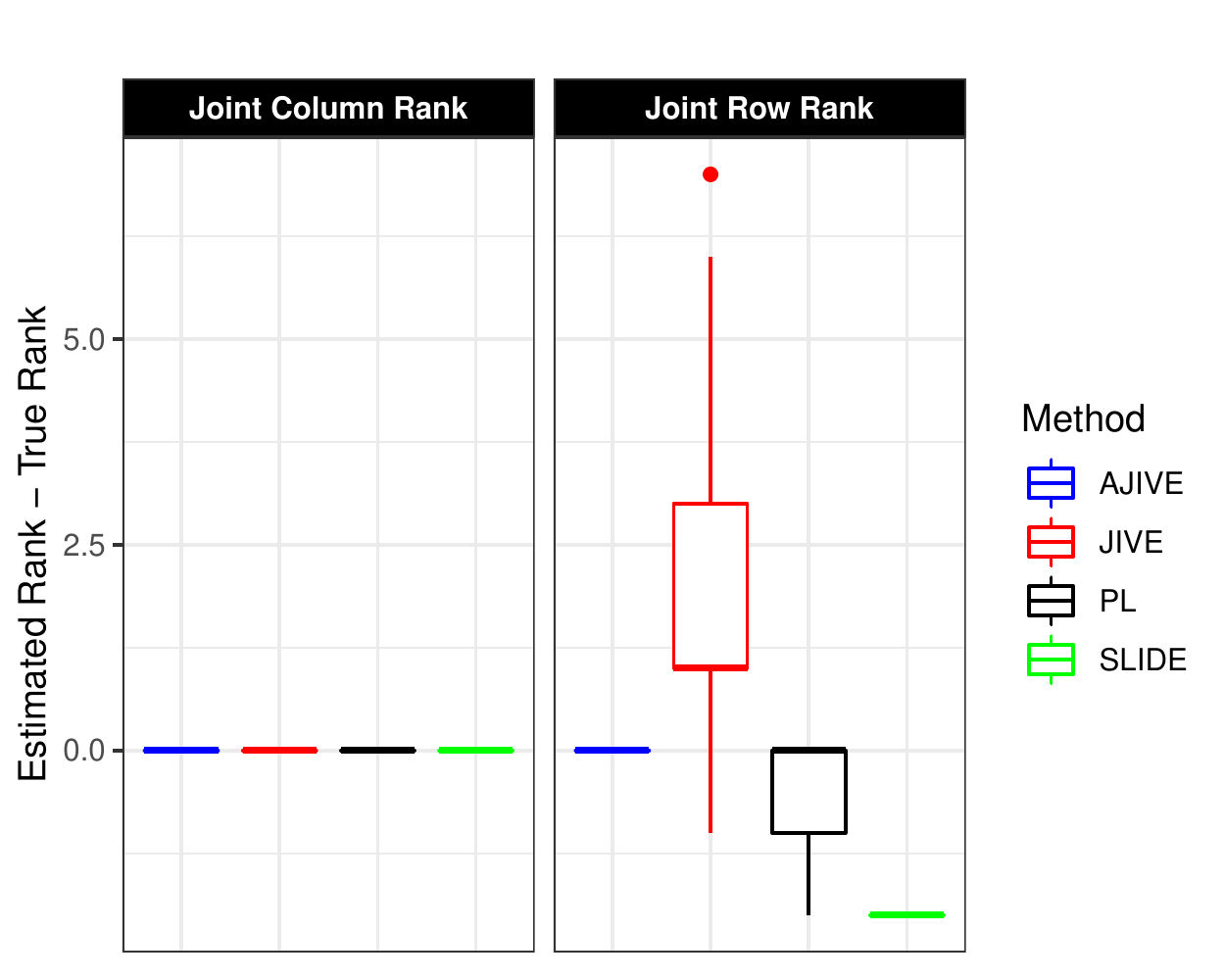}
\caption{Joint rank estimation}
\label{fig:tcga_joint}
\end{subfigure}
\caption{Comparison of rank estimation over 100 replications when $n = 88, p = 736, r_1 = 8, r_2 = 6, r_c = 0, r_r = 2$. JIVE (Column) or SLIDE (Column) estimates the total rank when columns are matched and vice versa.}
\end{figure}

\begin{figure}[!t]
\begin{subfigure}{0.5\textwidth}
\includegraphics[width=1\textwidth]{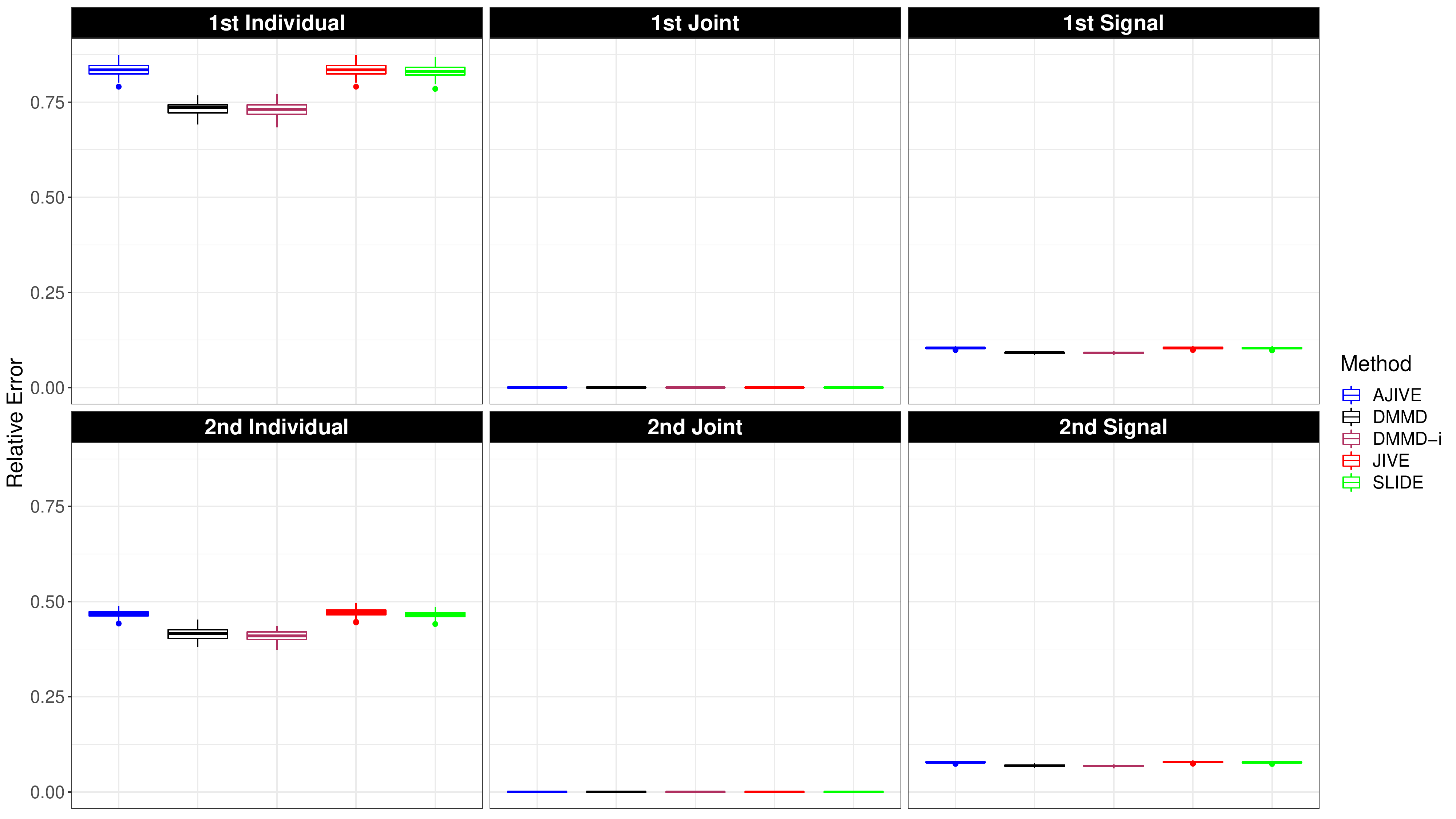}
\caption{Column space decomposition (matched rows)}
\label{fig:tcga_col}
\end{subfigure}%
\begin{subfigure}{0.5\textwidth}
\includegraphics[width=1\textwidth]{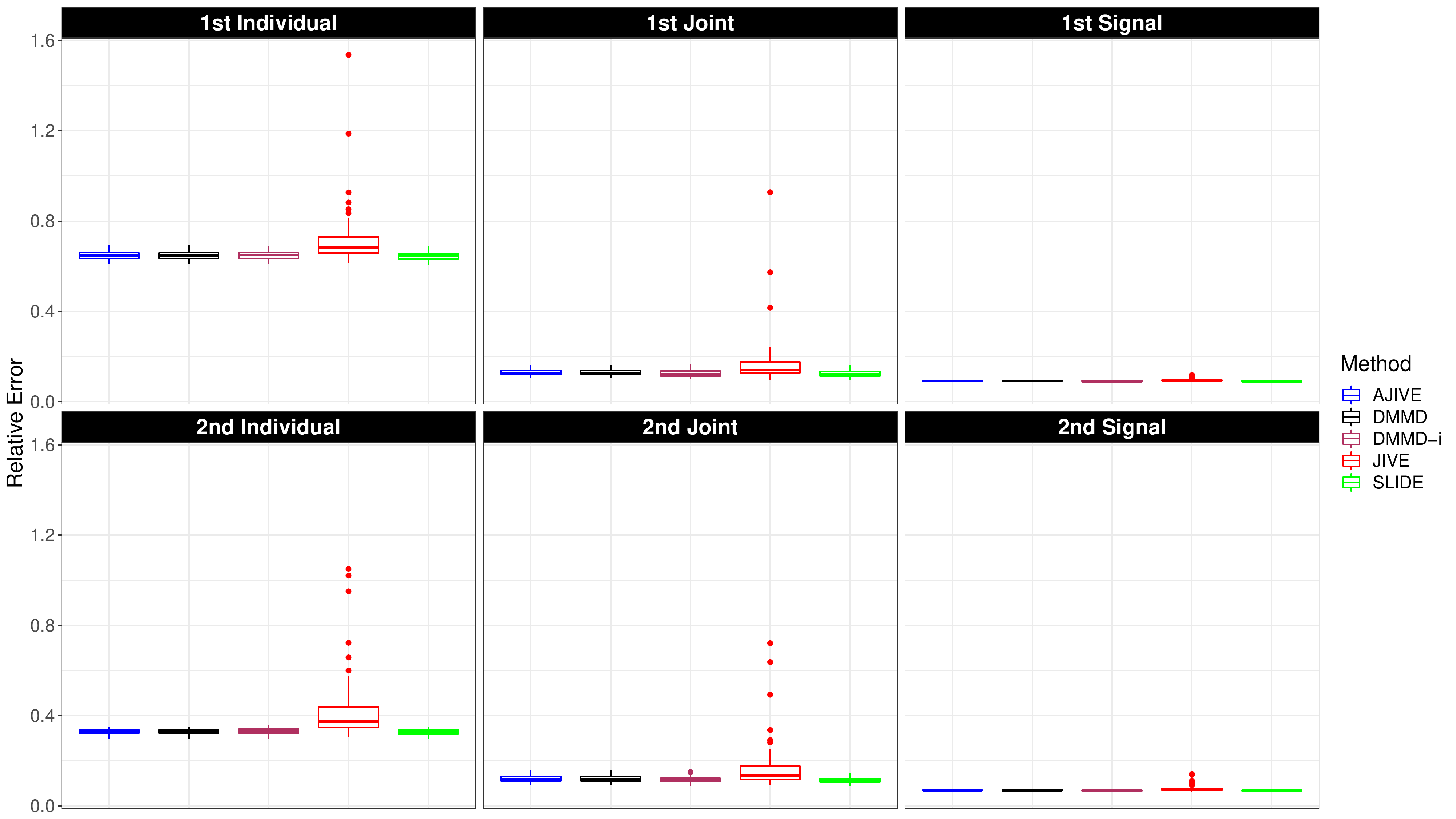}
\caption{Row space decomposition (matched columns)}
\label{fig:tcga_row}
\end{subfigure}
\caption{Comparison of signal identification over 100 replications when $n = 88, p = 736, r_1 = 8, r_2 = 6, r_c = 0, r_r = 2$.}
\end{figure}

\subsection{Signal estimation when ranks are misspecified}
The effect of rank misspecification is studied using simulation setting 4 with $n = 240$, $p = 200$, and true $r_1^* = 20$, $r_2^* = 18$, $r_c^* = 4$, $r_r^* = 3$. We consider three types of misspecifications:
\begin{enumerate}[label=(\alph*)]
    \item The input ranks are all less than the true rank with $r_1 = 19, r_2 = 17, r_c = 3, r_r = 2$. 
    \item The input ranks are all larger than the true rank with $r_1 = 21, r_2 = 19, r_c = 5, r_r = 4$.
    \item The total ranks are correct, but the joint ranks are misspecified to be smaller than the truth. The inputs are $r_1 = 20, r_2 = 18, r_c = 3, r_r = 2$. 
\end{enumerate}
Figures~\ref{fig:misRank_row}--\ref{fig:misRank_column} show relative errors on total signal for all methods under each misspecification scenario. DMMD is consistently better than other methods. Iterative DMMD is very similar to the original one, with slightly better performance in settings (a) and (c). In setting (b), DMMD-i is better than DMMD for the 2nd signal matrix, but is worse for the 1st signal matrix. We suspect that this inconsistency is due to setting (b) using larger ranks than the truth, thus all methods contain noise in the estimated signal.

\begin{figure}[!t]
\begin{subfigure}{0.5\textwidth}
\includegraphics[width=1\textwidth]{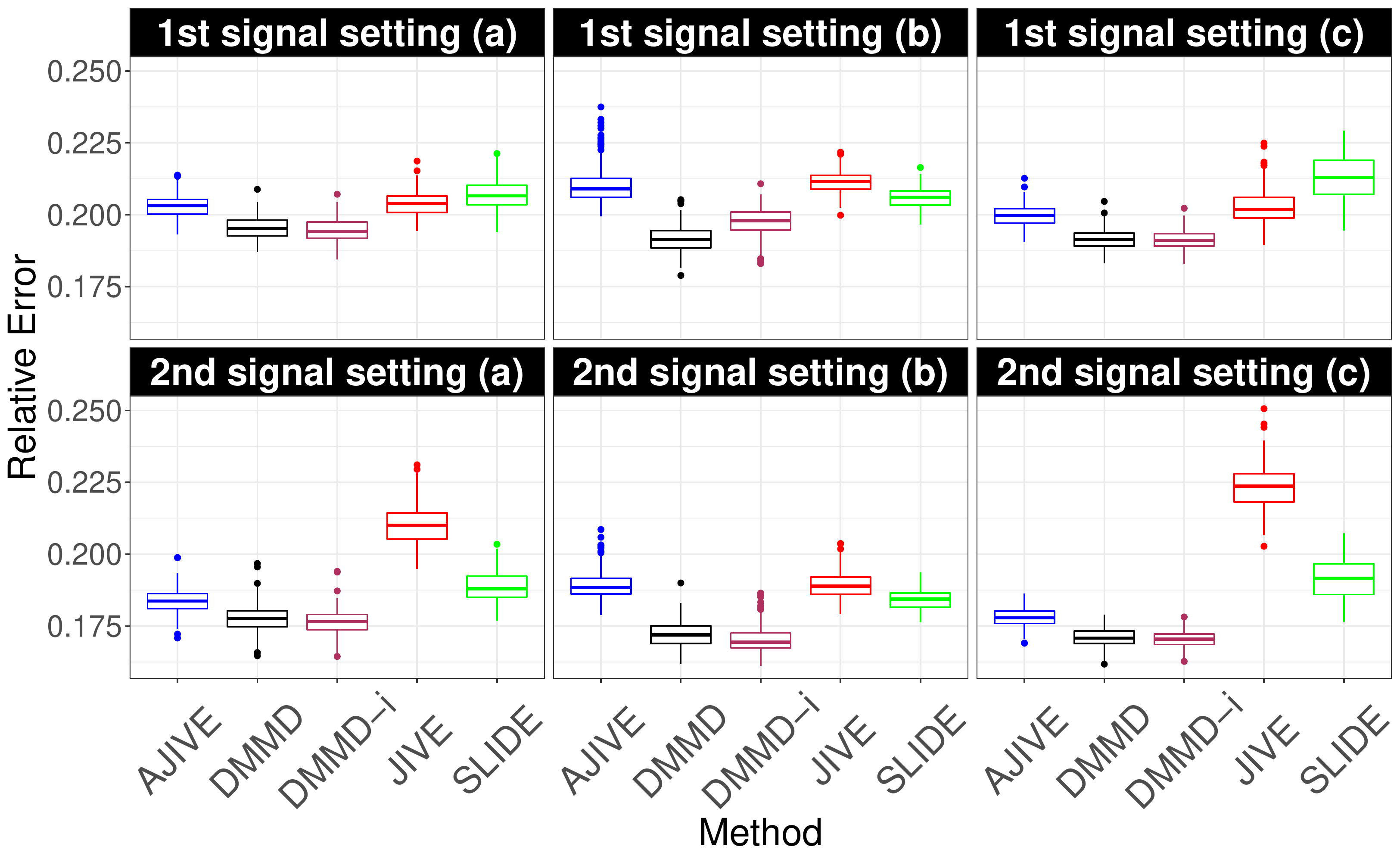}
\caption{Row space decomposition (matched columns)}
\label{fig:misRank_row}
\end{subfigure}%
\begin{subfigure}{0.5\textwidth}
\includegraphics[width=1\textwidth]{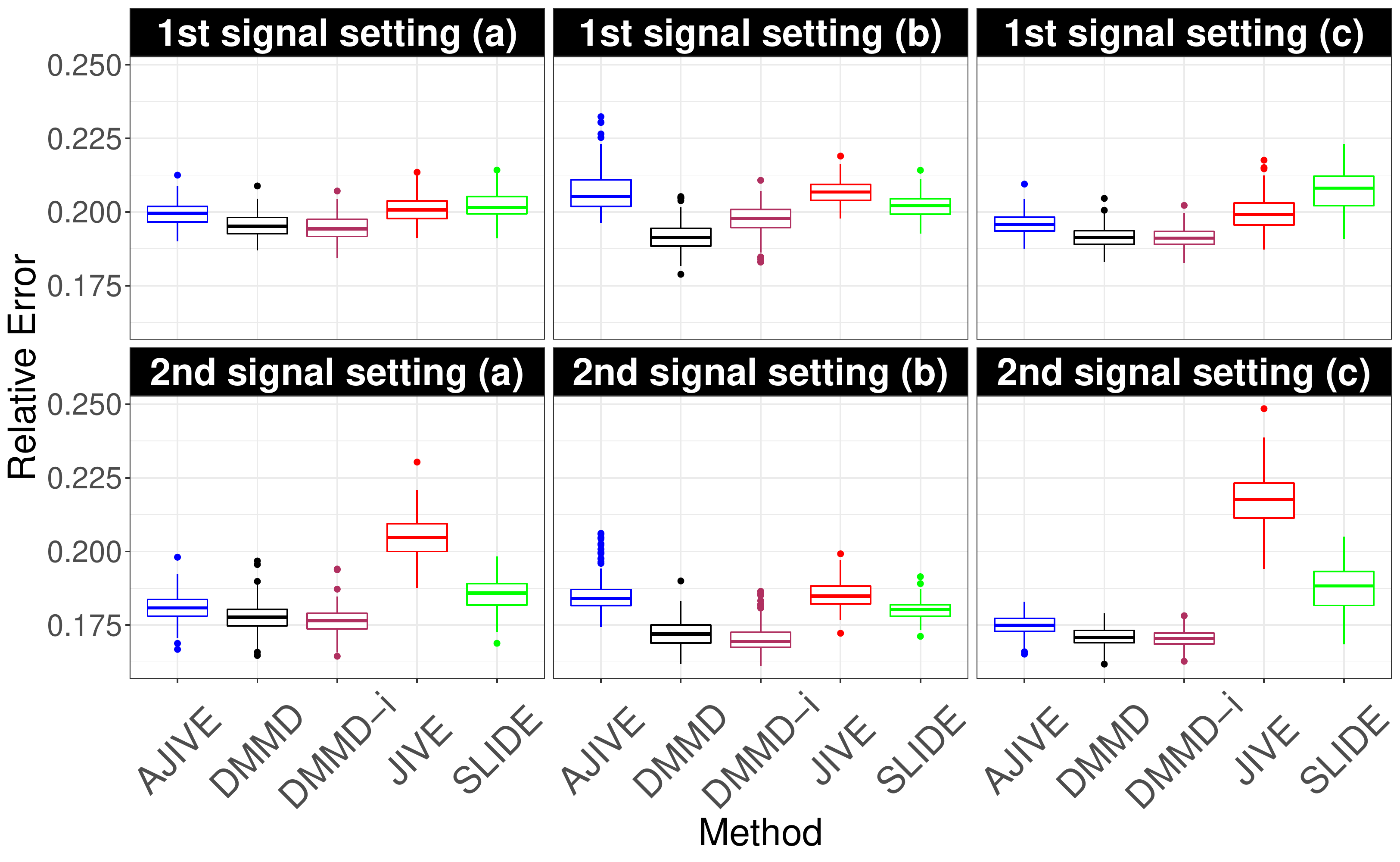}
\caption{Column space decomposition (matched rows)}
\label{fig:misRank_column}
\end{subfigure}
\caption{Comparison of methods on total signal estimation when ranks are misspecified.}
\end{figure}

As the ranks are misspecified, separate comparisons on only joint structures or only individual structures could be misleading. Thus, for all methods, we calculate the chordal distance between subspace associated with true joint structure, and estimated subspace of total signal (Table~\ref{tab:misRank}).

For settings (a) and (c), the supplied joint rank is smaller than the truth, and thus we expect that some of the joint structure would be mistakenly estimated as individual. SLIDE performs the worst in these settings, as it restricts individual signals to be orthogonal, and thus it has difficulties capturing the missing 4th joint basis. DMMD, DMMD-i and AJIVE have the best performance here, with DMMD-i being slightly better than DMMD. Principal angle analysis of the individual signals of DMMD reveals that the first angle is very small (albeit not zero), hence the missing 4th joint basis has been captured by individual structures that are close to each other (albeit not identical). We conclude that DMMD is robust to under-estimation of joint ranks, and in practice we recommend examining principal angles between estimated individual structures for diagnostics of such under-specification.
 
\begin{table}[!t]
\caption{Mean chordal distance between estimated total signal subspace with true joint subspace across 140 replications (standard error is given in brackets).}
\label{tab:misRank}
\begin{center}
\begin{tabular}{c|lrrrrr}
  \hline
 Section & Setting & DMMD & DMMD-i & SLIDE & JIVE & AJIVE \\ 
  \hline
1st column  & (a) & 0.34 (0.004) & 0.33 (0.004) & 0.42 (0.007) & 0.40 (0.003) & 0.33 (0.004) \\ 
subspace & (b) & 0.26 (0.001) & 0.26 (0.001) & 0.26 (0.001) & 0.35 (0.002) & 0.26 (0.001) \\ 
& (c) & 0.30 (0.002) & 0.29 (0.002) & 0.43 (0.005) & 0.36 (0.002) & 0.30 (0.002) \\ 
  \hline
  1st row & (a) & 0.32 (0.004) & 0.32 (0.004) & 0.46 (0.008) & 0.32 (0.004) & 0.33 (0.004) \\ 
 subspace  & (b) & 0.24 (0.001) & 0.24 (0.001) & 0.24 (0.001) & 0.24 (0.001) & 0.24 (0.001) \\ 
   & (c) & 0.28 (0.002) & 0.28 (0.002) & 0.47 (0.006) & 0.30 (0.004) & 0.28 (0.002) \\ 
  \hline
  2nd column & (a) & 0.34 (0.004) & 0.33 (0.004) & 0.45 (0.006) & 0.40 (0.003) & 0.34 (0.004) \\ 
 subspace  & (b) & 0.26 (0.001) & 0.26 (0.001) & 0.26 (0.001) & 0.33 (0.002) & 0.26 (0.001) \\ 
   & (c) & 0.29 (0.002) & 0.28 (0.001) & 0.42 (0.005) & 0.38 (0.003) & 0.29 (0.002) \\ 
  \hline
  2nd row  & (a) & 0.33 (0.005) & 0.32 (0.005) & 0.47 (0.009) & 0.60 (0.002) & 0.33 (0.005) \\ 
subspace   & (b) & 0.24 (0.001) & 0.24 (0.001) & 0.24 (0.001) & 0.26 (0.003) & 0.24 (0.001) \\ 
   & (c) & 0.27 (0.002) & 0.27 (0.002) & 0.44 (0.007) & 0.59 (0.002) & 0.27 (0.002) \\ 
   \hline
\end{tabular}
\end{center}
\end{table} 

In setting (b), all ranks are higher than the truth, and all methods have similar performance in capturing joint structure, except JIVE, which performs the worst. Taking the results on joint structures in conjunction with total signal estimation results in Figure~\ref{fig:misRank_column}, we conclude that when ranks are over-specified, DMMD fits to the noise less than other methods. We believe this robustness to over-specification is due to DMMD's explicit restriction on signals being matched in both row and column directions, which helps to prevent noise overfitting.

\subsection{Computational comparisons}
We compare computational times of all methods using $\mbox{SNR} = 1$ when (i) $n = 100, p = 80, r_1 = 10, r_2 = 8, r_c = 4, r_r = 3$;
   (ii) $n = 100, p = 800, r_1 = 25, r_2 = 20, r_c = 10, r_r = 5$.
We evaluate the total running time (rank estimation $+$ model fitting) as well as model fitting time only given the ranks. All comparisons are done on Intel(R) Core(TM) i5-7300U CPU @ 2.60GHz.
Table~\ref{t:RunningTime} reports running times in seconds. Since all methods except DMMD estimate column and row decompositions separately, their run time is the sum of the two. DMMD is significantly faster than the competitors in total run time, and is quite fast with given ranks. As expected, DMMD-i (Section~\ref{s:DMMDi}) is significantly more costly than DMMD, albeit still faster than SLIDE. 

\begin{table}[!t]
\caption{Comparison of running times (in seconds)}
\begin{center}
\begin{tabular}{ccccc}
Method & Total, p = 80 & Given ranks, p = 80 & Total, p = 800 & Given ranks, p = 800 \\ \hline
DMMD               & 0.8   & 0.4  & 24.1  & 11.8  \\ 
DMMD-i$^{*1}$      & 10.9   & 1.4  & 618.3 & 48.4 \\
JIVE (Row)         & 2.1   & 8.3  & 219.7 & 21.1 \\ 
JIVE (Col)         & 2.2 & 6.1  & 12.6  & 7.8 \\ 
SLIDE (Row)        & 36.2  & 0.0  & 1226.8 & 1.0  \\ 
SLIDE (Col)        & 63.7  & 0.0  & 317.3 & 1.1  \\ 
AJIVE$^{*2}$ (Row) & 6.3   & 0.0  & 66.6  & 0.4  \\ 
AJIVE$^{*2}$ (Col) & 6.5   & 0.0  & 42.5  & 0.3  \\ \hline
\end{tabular}
\end{center}
\footnotesize{$^{*1}$ DMMD-i is the iterative version of DMMD, which updates joint structures.\\
$^{*2}$ Total ranks are necessary inputs for AJIVE. Total running time of AJIVE is measured given total ranks.
}
\label{t:RunningTime}
\end{table}

\section{Additional details on TCGA data application}
Here we present alternative heatmaps of the joint row structure ($\widehat r_r = 2$) corresponding to matched miRNAs that are aligned vertically (Figures~\ref{fig:joint_miRNA_tumor_revised} and~\ref{fig:joint_miRNA_normal_revised}). There are three visually distinguishable clusters of miRNAs, and the clustering is preserved across both tissue types.

\label{s:a_TCGA}
\begin{figure}[!t]
\begin{center}
\begin{subfigure}{0.56\textwidth}
\includegraphics[width=1\textwidth]{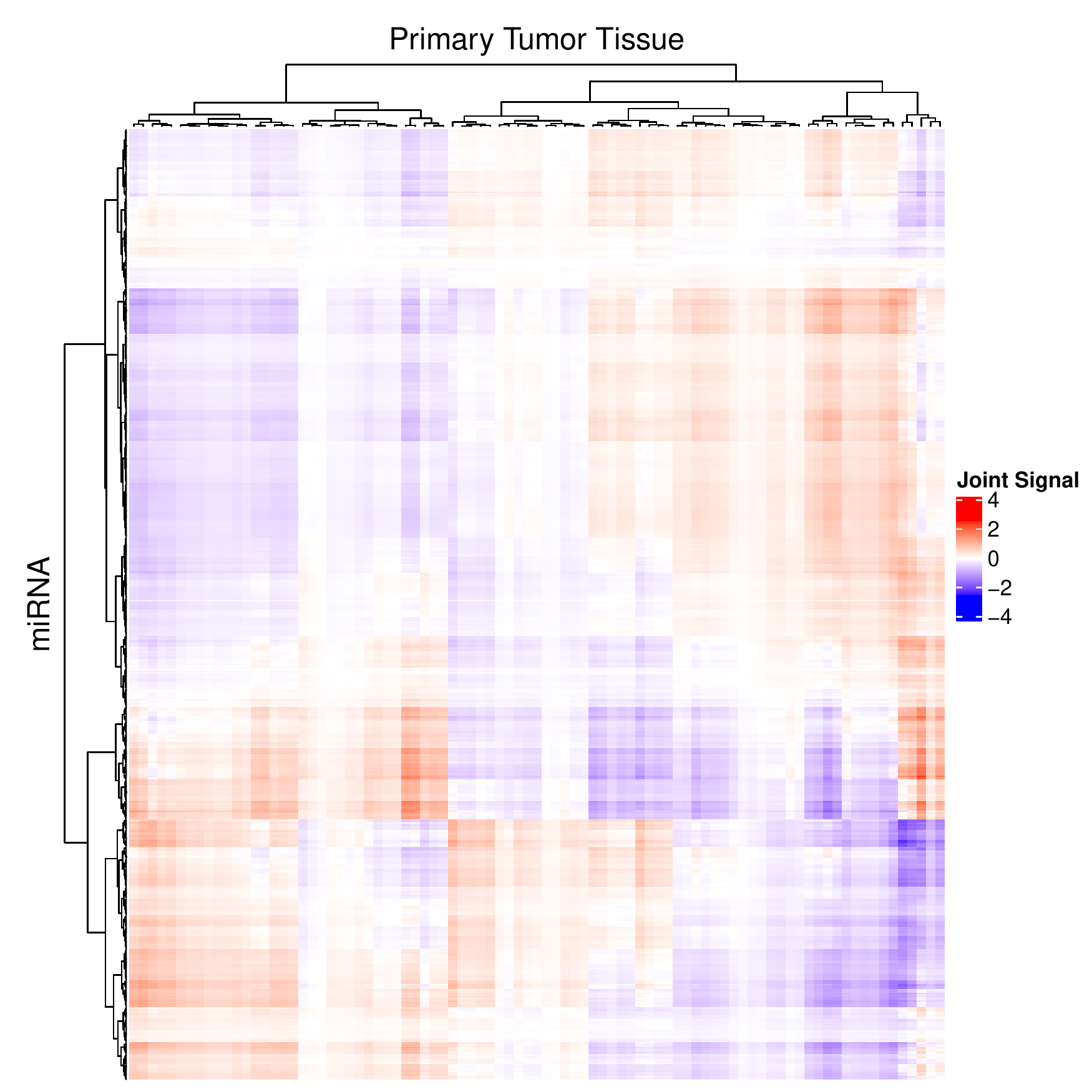}
\caption{}
\label{fig:joint_miRNA_tumor_revised}
\end{subfigure}

\begin{subfigure}{0.56\textwidth}
\includegraphics[width=1\textwidth]{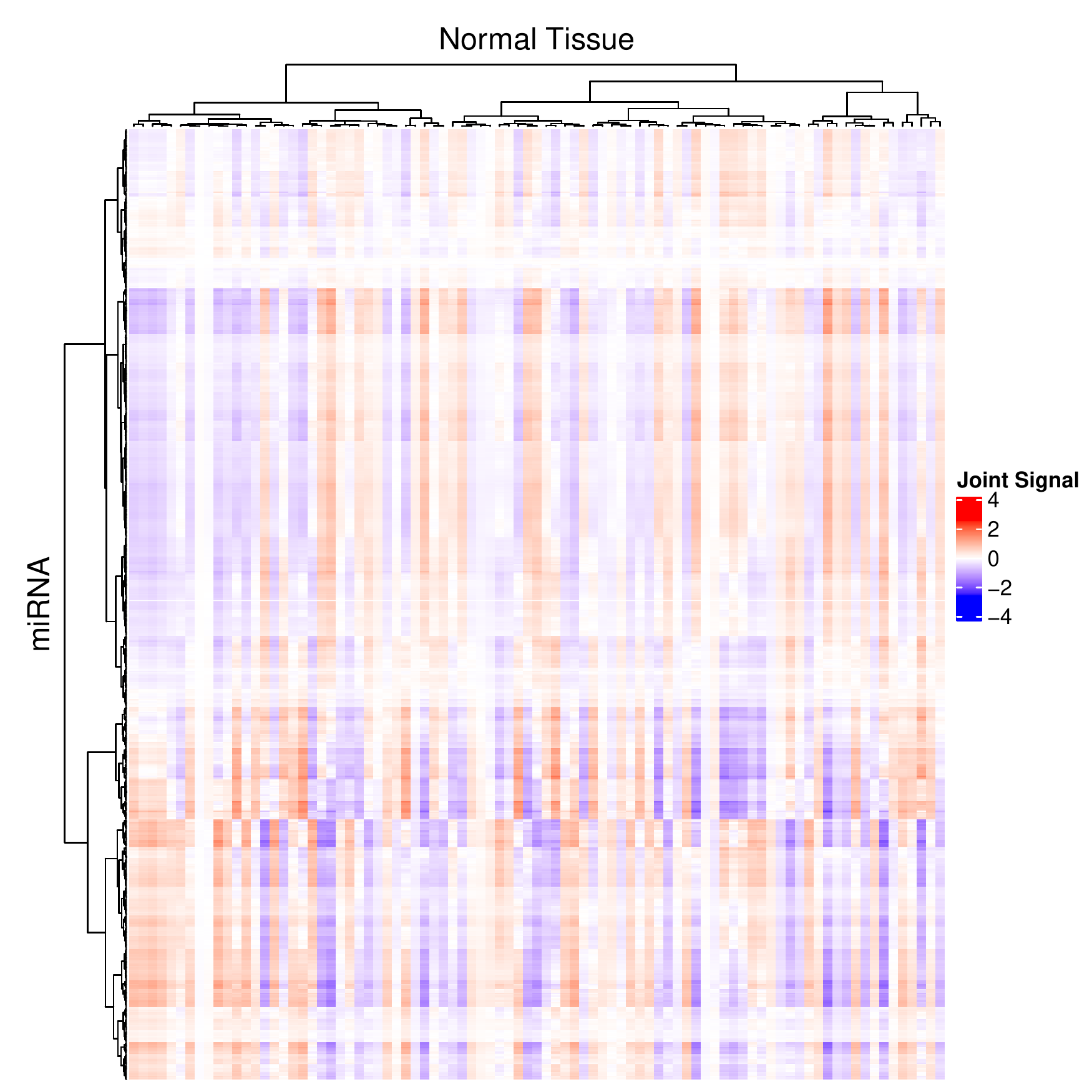}
\caption{}
\label{fig:joint_miRNA_normal_revised}
\end{subfigure}
\caption{Joint row (miRNA) structures extracted by DMMD for primary tumor and normal tissues from matched TCGA-BRCA miRNA data. The order of samples and miRNAs in both figures is the same, which is determined by the joint structure of primary tumor tissue.}
\end{center}
\end{figure}

\section{Iterative DMMD algorithm (DMMD-i)}\label{s:DMMDi}

Consider DMMD signal estimation optimization problem (2):
\begin{align}
    &\minimize_{\MBA_k\in \R^{n\times p}}{\|\MBX_k - \MBA_k\|^2_{F}} \label{eq:DMopt1}\\
    & \mbox{such that}\quad \mathcal{C}(\MBM) \subset \mathcal{C}(\MBA_k),\quad \mathcal{C}(\MBN) \subset \mathcal{R}(\MBA_k),\quad \text{rank}(\MBA_k) = r_k,\quad k = 1,2 \nonumber.
\end{align}
Problem (2) treats joint $\MBM$ and $\MBN$ as fixed. Here we consider an extension of DMMD where the optimization is performed over all parts of the signal, that is
\begin{align}
    &\minimize_{\MBA_k, \MBM, \MBN}{\sum_{k=1}^2\|\MBX_k - \MBA_k\|^2_{F}} \label{eq:DMMD-i}\\
    & \mbox{such that}\quad \mathcal{C}(\MBM) \subset \mathcal{C}(\MBA_k),\quad \mathcal{C}(\MBN) \subset \mathcal{R}(\MBA_k),\quad \text{rank}(\MBA_k) = r_k,\quad k = 1,2 \nonumber.
\end{align}

Following the notation of Algorithm~1, let $\widetilde \MBM_k = [\MBM, \MBR_k]\in \R^{n \times r_k}$ be the matrix of basis vectors for the full column space of $\MBA_k$, and $\widetilde \MBN_k = [\MBN, \MBS_k]\in \R^{p \times r_k}$ be the matrix of basis vectors for the full row space of $\MBA_k$. Then the above problem can be equivalently rewritten as
\begin{align}
    &\minimize_{\MBM, \MBN, \MBR_k, \MBS_k}{\sum_{k=1}^2\|\MBX_k - \widetilde \MBM_k\widetilde \MBM_k^{\top} \MBX_k^{\top}\widetilde \MBN_k\widetilde \MBN_k^{\top}\|^2_{F}} \label{eq:DMit}\\
    & \mbox{such that}\quad \widetilde \MBM_k = [\MBM, \MBR_k],\quad \widetilde \MBN_k = [\MBN, \MBS_k],\quad \widetilde \MBM_k^{\top}\widetilde \MBM_k= \widetilde \MBN_k^{\top}\widetilde \MBN_k = \MBI, \quad k = 1,2 \nonumber.
\end{align}
Given joint $\MBM$ and $\MBN$, the minimization with respect to individual $\MBR_k$ and $\MBS_k$ is performed using Algorithm~1 (DMMD). Iterative DMMD allows to update the initial $\MBM$ and $\MBN$ by further optimizing~\eqref{eq:DMit} with respect to joint $\MBM$ and $\MBN$ with given individual $\MBR_k$ and $\MBS_k$.

Given $\MBR_k$, $\MBN$ and $\MBS_k$, the optimal $\MBM$ is given by Lemma~\ref{l:s4}. Using $\MBX_k^{\top}$ instead of $\MBX_k$, the same Lemma~\ref{l:s4} gives optimal $\MBN$ given $\MBS_k$, $\MBM$ and $\MBR_k$. Combining updates of $\MBR_k$, $\MBS_k$ from Algorithm~1 with updates of $\MBM$, $\MBN$ according to Lemma~\ref{l:s4} gives rise to iterative DMMD (DMMD-i) Algorithm~\ref{a:DMMD-i}.

\begin{algorithm}[!t]
\caption{Iterative DMMD algorithm (DMMD-i)}
\label{a:DMMD-i}
\begin{algorithmic}[1]
\State Given: $\MBX_k \in \R^{n\times p}$, $r_k$, $k=1, 2$; $\MBM^{(0)} \in \R^{n\times r_c}$, $\MBN^{(0)} \in \R^{p\times r_r},t_{max},\epsilon > 0$
\For{$k = 1,2$}
\State SVD: $(\MBId - \MBM^{(0)}\MBM^{(0)T})\MBX_k = \MBU_k\MBD_k\MBV_k^T$
\State $\MBR_k^{(0)} \gets$ first $r_k - r_c$ columns of $\MBU_k$
\State SVD: $\MBX_k(\MBId - \MBN^{(0)}\MBN^{(0)T}) = \widehat{\MBU}_k\widehat{\MBD}_k\widehat{\MBV}_k^T$
\State $\MBS_k^{(0)} \gets$ first $r_k - r_r$ columns of $\widehat{\MBV}_k$
\State $\widetilde{\MBM}_k^{(0)} \gets [\MBM^{(0)},\MBR_k^{(0)}]$, $\widetilde{\MBN}_k^{(0)} \gets [\MBN^{(0)},\MBS_k^{(0)}]$
\EndFor
\State $t \gets 0$
\While{$t\neq t_{max}$ and $\max_k |L_k^{(t)} - L_k^{(t-1)}|>\epsilon$}
\State Update of $\MBM$:
\State $\quad$ $\MBY^{(t)} \gets [(\mathbf{Id} - \MBR^{(t)}_1\MBR_1^{(t)\top})\MBX_1\widetilde{\MBN}^{(t)}_1\widetilde{\MBN}^{(t)\top}_1,(\mathbf{Id}-\MBR^{(t)}_2\MBR_2^{(t)\top})\MBX_2\widetilde{\MBN}^{(t)}_2\widetilde{\MBN}^{(t)\top}_2]$
\State $\quad$ SVD: $\MBY^{(t)}=\MBU^{(t)}_k\MBD^{(t)}_k\MBV^{(t)T}_k$
\State $\quad$ $\MBM^{(t+1)} \gets$ first $r_c$ columns of $\MBU^{(t)}_k$, $\widetilde{\MBM}_k^{(t+1)} \gets [\MBM^{(t+1)},\MBR_k^{(t)}]$
\State Update of $\MBN$:
\State $\quad$ $\MBZ^{(t)} \gets [\widetilde{\MBM}^{(t+1)}_1\widetilde{\MBM}^{(t+1)\top}_1\MBX_1(\mathbf{Id} - \MBS^{(t)}_1\MBS_1^{(t)\top}),\widetilde{\MBM}^{(t+1)}_2\widetilde{\MBM}^{(t+1)\top}_2\MBX_2(\mathbf{Id}-\MBS^{(t)}_2\MBS_2^{(t)\top})]$
\State $\quad$ SVD: $\MBZ^{(t)}=\widehat{\MBU}^{(t)}_k\widehat{\MBD}^{(t)}_k\widehat{\MBV}^{(t)T}_k$
\State $\quad$ $\MBN^{(t+1)} \gets$ first $r_r$ columns of $\widehat{\MBU}^{(t)}_k$, $\widetilde{\MBN}_k^{(t+1)} \gets [\MBN^{(t+1)},\MBS_k^{(t)}]$

\State Update of $\MBR_k$ and $\MBS_k$:

\State $\quad$ Apply Algorithm 1 to get $\MBR_k^{(t)}$, $\MBS_k^{(t)}$, $k=1, 2$
\State $\quad$ $\widetilde{\MBM}_k^{(t+1)} \gets [\MBM^{(t+1)},\MBR_k^{(t+1}]$, $k=1, 2$
\State $\quad$ $\widetilde{\MBN}_k^{(t+1)} \gets [\MBN^{(t+1)},\MBS_k^{(t+1}]$, $k=1, 2$



\State $t \gets t + 1$
\State $L_k^{(t)} = \|\MBX_k - \widetilde{\MBM}_k^{(t)}\widetilde{\MBM}_k^{(t)T}\MBX_k\widetilde{\MBN}_k^{(t)}\widetilde{\MBN}_k^{(t)T}\|^2_F$, $k=1, 2$
\EndWhile
\State \Return {$\MBA_k^* = \widetilde{\MBM}_k^{(t)}\widetilde{\MBM}_k^{(t)T}\MBX_k\widetilde{\MBN}_k^{(t)}\widetilde{\MBN}_k^{(t)T}$, $k=1,2$}
\end{algorithmic}
\end{algorithm}

We compare the results of original DMMD with iterative DMMD-i on soccer dataset (Tables~\ref{t:Soccer11_Irina}--\ref{t:Soccer21_Irina}). The results are very similar without affecting the main conclusions.

\begin{table}[!t]
\caption{Comparison on joint row basis for winning and losing teams in English Premier League when $r_1 = r_2 = r_r = 1$.}
\begin{center}
\footnotesize
\begin{tabular}{p{1.3cm}p{1.0cm}p{0.9cm}p{0.9cm}p{1.1cm}p{1.1cm}p{1.2cm}p{1.2cm}p{1.2cm}p{1cm}p{1cm}}
Signal & Full Time Goals & Half Time Goals & Shots & Shots on Target & Hit Woodwork & Corners & Fouls Commited & Offsides & Yellow Cards & Red Cards\\\hline
DMMD joint & 1.00 & 0.45 & 8.08 & 3.90 & 0.23 & 4.03 & 9.80 & 2.50 & 1.10 & 0.07 \\
DMMD-i joint & 1.00 & 0.45 & 7.42 & 3.63 & 0.21 & 3.64 & 9.05 & 2.38 & 1.01 & 0.06 \\\hline
\end{tabular}
\label{t:Soccer11_Irina}
\end{center}
\end{table}

\begin{table}[!t]
\caption{Joint row basis and individual row basis for winning teams in English Premier League when $r_1 = 2, r_2 = r_r = 1$.}
\begin{center}
\footnotesize
\begin{tabular}{p{1.3cm}p{1.0cm}p{0.9cm}p{0.9cm}p{1.1cm}p{1.1cm}p{1.2cm}p{1.2cm}p{1.2cm}p{1cm}p{1cm}}
Signal & Full Time Goals & Half Time Goals & Shots & Shots on Target & Hit Woodwork & Corners & Fouls Commited & Offsides & Yellow Cards & Red Cards \\\hline
DMMD Joint & 1.00 & 0.47 & 7.85 & 3.78 & 0.21 & 4.01 & 11.35 & 2.76 & 1.30 & 0.08 \\
DMMD-i Joint & 1.00 & 0.47 & 8.12 & 3.85 & 0.22 & 4.16 & 11.95 & 2.95 & 1.40 & 0.09 \\
DMMD Win & 1.00 & 0.33 & 5.09 & 2.93 & 0.21 & 1.62 & -4.98 & -0.39 & -0.88 & -0.08 \\
DMMD-i Win & 1.00 & 0.34 & 4.79 & 2.83 & 0.20 & 1.52 & -4.59 & -0.42 & -0.86 & -0.09 \\ \hline
\end{tabular}
\label{t:Soccer21_Irina}
\end{center}
\end{table}

\section{Generalization of DMMD to more than two views}\label{s:largeK}

We consider $K$ double-matched data matrices $\MBX_k\in \mathbb{R}^{n \times p}$, $k=1, \dots, K$, with DMMD decomposition according to Lemma 1:
\begin{equation}\label{eq:dmmdK}
    \MBX_k = \underbrace{\MBJ_{ck} + \MBI_{ck}}_{\MBA_k}+ \MBE_k = \underbrace{\MBJ_{rk} + \MBI_{rk}}_{\MBA_k} + \MBE_k, \quad k=1,\dots, K.
\end{equation}
All DMMD estimation steps can be applied to the case $K>2$ with the exception of joint structure estimation (Step 2 described in Section 2.3.2). This step is specific to $K=2$ case as it determines joint ranks based on principal angles between two subspaces. Furthermore, the joint basis vectors in $\MBM$ and $\MBN$ are computed based on averaging corresponding principal vectors. Thus, both the joint rank determination, and the computation of $\MBM$, $\MBN$ require adjustment when $K>2$.

 In our numerical studies of rank estimation performance in Section 3.2, we found that an alternative joint rank estimation approach of AJIVE \citep{AJIVE} works quite well. The latter can be applied with any number of views $K$, however requires supplying the total ranks as the input. Since total ranks can be estimated using profile likelihood as in Section 2.3.1, we recommend to estimate the ranks by combining profile likelihood method with AJIVE joint rank estimation approach when applying DMMD in $K>2$ setting.
 
 Given the joint column rank $r_c$ and row rank $r_r$, we propose to construct $\MBM$ and $\MBN$ based on SUM-PCA \citep{CPCA}, that is low-rank SVD on views either concatenated column-wise (for $\MBM$) or row-wise (for $\MBN$). Using $\MBM$ as an example, this approach is equivalent to finding the solution to
 \begin{equation}
     \minimize_{\MBM}\Big\{\sum_{k=1}^K\|\MBX_k - \MBM\MBM^{\top}\MBX_k\|_F^2\Big\}\quad \mbox{s.t.}\quad \MBM^{\top}\MBM = \MBI_{r_c}.
 \end{equation}
 Since the objective function uses squared Frobenius loss, this minimization coincides with iterative DMMD algorithm update of $\MBM$ when the individual structures are initialized as zero. In case computational time is not a constraint, iterative DMMD can be directly used when $K>2$ to further modify these initial $\MBM$ and $\MBN$, which based on our simulations leads to slightly improved performance.



\section{Difficulties in capturing joint structure in Tucker decomposition}\label{s:tensor}
Let $\MBX_1, \MBX_2\in \R^{n \times p}$ be double-matched. We can view these data alternatively as a three-way tensor  $\bm{\mathscr{X}} \in \mathbb{R}^{n \times p \times 2}$ with the frontal slices:
$
\bm{\mathscr{X}}_{::1} = \MBX_1, \quad
\bm{\mathscr{X}}_{::2} = \MBX_2
$. We can consider the Tucker decomposition \citep{Kolda:2009dh} where a core tensor is multiplied by a matrix along each mode, that is
$$\bm{\mathscr{X}} \approx \bm{\mathscr{G}} \times_1 \mathbf{A} \times_2 \mathbf{B} \times_3 \mathbf{C},$$
where $\mathbf{A} \in \mathbb{R}^{n \times P}, \mathbf{B} \in \mathbb{R}^{p \times Q}, \mathbf{C} \in \mathbb{R}^{2 \times R}$ are the factor matrices and the tensor $\bm{\mathscr{G}} \in \mathbb{R}^{P \times Q \times R}$ is  the core tensor. To obtain the Tucker decomposition, we can consider the low-rank SVD for each mode-d matricization of tensor $\bm{\mathscr{X}}$, that is
$$\bm{\mathscr{X}}_{(1)} \approx \mathbf{U}_1\mathbf{\Sigma}_1\mathbf{V}^T_1 \in \R^{n \times 2p}\quad \mbox{with rank} \ P$$
$$\bm{\mathscr{X}}_{(2)} \approx \mathbf{U}_2\mathbf{\Sigma}_2\mathbf{V}^T_2\in \R^{p \times 2n}\quad \mbox{with rank} \ Q$$
$$\bm{\mathscr{X}}_{(3)} \approx \mathbf{U}_3\mathbf{\Sigma}_3\mathbf{V}^T_3\in \R^{2 \times pn} \quad \mbox{with rank} \ R$$
and calculate 
$$\bm{\mathscr{S}} = \bm{\mathscr{X}} \times_1 \mathbf{U}^T_1 \times_2 \mathbf{U}^T_2 \times_3 \mathbf{U}^T_3.$$
Then the Tucker decomposition becomes
$$\bm{\mathscr{X}} \approx \bm{\mathscr{S}} \times_1 \mathbf{U}_1 \times_2 \mathbf{U}_2 \times_3 \mathbf{U}_3.$$

From this decomposition, it is natural to encode joint column space information in $\mathbf{U}_1 \in \R^{n \times P}$ and the joint row space information in $\mathbf{U}_2\in \R^{p \times Q}$. However, we found that such encoding does not always lead to joint structure that aligns with the matrix case, as illustrated in the following toy example.

Consider a noiseless case with
\setlength{\arraycolsep}{2.5pt}
\medmuskip = 2mu 

\[
  \MBX_1 =
  \left[ {\begin{array}{ccc}
   0 & 0 & 0 \\
   0 & 0 & 1 \\
   1 & 1 & 0 \\
  \end{array} } \right] \quad
  \MBX_2 =
  \left[ {\begin{array}{ccc}
   0 & 1 & 0 \\
   1 & 0 & 0 \\
   0 & 0 & 0 \\
  \end{array} } \right]
\]
corresponding to $3 \times 3 \times 2$ tensor. By direct calculation, the joint column and row spaces between $\MBX_1$ and $\MBX_2$ are:
$$
\mathcal{C}(\MBX_1) \cap \mathcal{C}(\MBX_2) = Span\{(0,1,0)^\top\}, \quad 
\mathcal{R}(\MBX_1) \cap \mathcal{R}(\MBX_2) = Span\{(1,1,0)^\top\}.
$$

Thus, we use 
\[
  \MBU_1 =
  \left[ {\begin{array}{c}
   0 \\
   1 \\
   0 \\
  \end{array} } \right], \quad
  \MBU_2 =
  \left[ {\begin{array}{c}
   1 \\
   1 \\
   0 \\
  \end{array} } \right], \quad
  \MBU_3 =
  \left[ {\begin{array}{cc}
   1 & 0 \\
   0 & 1 \\
  \end{array} } \right]
\]
to capture the information of joint column and row spaces (note that we use the largest rank for $\MBU_3$ to have the most flexible model that is possible). Then the joint core tensor is obtained as
$$\bm{\mathscr{S}}_{\text{joint}} = \bm{\mathscr{X}} \times_1 \mathbf{U}^T_1 \times_2 \mathbf{U}^T_2 \times_3 \mathbf{U}^T_3$$
and 
$$\bm{\mathscr{X}}_{\text{joint}} = \bm{\mathscr{S}}_{\text{joint}} \times_1 \mathbf{U}_1 \times_2 \mathbf{U}_2 \times_3 \mathbf{U}_3.$$

The resulting tensor $\bm{\mathscr{X}}_{\text{joint}}$ of size $3 \times 3 \times 2$ has the following two frontal slices: 
\[
  \left[ {\begin{array}{ccc}
   0 & 0 & 0 \\
   0 & 0 & 0 \\
   0 & 0 & 0 \\
  \end{array} } \right]
, \quad
  \left[ {\begin{array}{ccc}
   0 & 0 & 0 \\
   1 & 1 & 0 \\
   0 & 0 & 0 \\
  \end{array} } \right].
\]
Surprisingly, the resulting slices in $\bm{\mathscr{X}}_{\text{joint}}$ no longer have any joint information. This example tells us that Tucker decomposition may have difficulties capturing joint row and column structures simultaneously. 

\end{document}